\numberwithin{equation}{section}
\theoremstyle{definition}
\newtheorem{definition}{Definition}[section]
\theoremstyle{plain}
\newtheorem{theorem}[definition]{Theorem}
\newtheorem{proposition}[definition]{Proposition}
\newtheorem{lemma}[definition]{Lemma}
\newtheorem{corollary}[definition]{Corollary}
\theoremstyle{remark}
\newtheorem{remark}[definition]{Remark}
\newtheorem{example}[definition]{Example}
\newcommand{\C}{\mathcal{C}}
\newcommand{\I}{\mathcal{I}}
\newcommand{\N}{\mathbb{N}}
\newcommand{\facet}[1]{\mathcal{F}({#1})}
\newcommand{\f}[2]{\mathrel{f^{#1}({#2})}}
\newcommand{\IIS}[1]{\mathcal{I}[\mathcal{IS}^{#1}]}
\newcommand{\ISA}[1]{\mathcal{I}[\mathcal{SA}_{#1}]}
\newcommand{\IGr}{\mathcal{I}[\mathcal{G}_{\leq r}]}
\newcommand{\rel}{\facet{M} \times \facet{M'}}
\newcommand{\At}{\mathsf{At}}
\newcommand{\ip}{\mathtt{input}}
\newcommand{\pre}{\mathtt{pre}}
\newcommand{\atom}{\textsf{(Atom)}}
\newcommand{\forth}{\textsf{(Forth)}}
\newcommand{\kforth}{\textsf{(K-Forth)}}
\newcommand{\dforth}{\textsf{(D-Forth)}}
\newcommand{\bforth}{\textsf{($\Box$-Forth)}}
\newcommand{\pfb}{\mathcal{L}_{\Box}^+}
\newcommand{\pf}{\mathcal{L}_K^+}
\newcommand{\pfd}{\mathcal{L}_D^+}
\renewcommand{\S}[1]{S_{#1}^{\Box}}
\newcommand{\SB}[1]{S_{#1}^{\Box}}
\newcommand{\SK}[1]{S_{#1}^{\mathrm{K}}}
\newcommand{\SD}[1]{S_{#1}^{\mathrm{D}}}
\newcommand{\PhiK}{\Phi^{\mathrm{K}}_M}
\newcommand{\PhiD}{\Phi^{\mathrm{D}}_M}
\newcommand{\PhiB}{\Phi^{\Box}_M}
\newcommand{\Views}{\mathsf{Views}}
\newcommand{\car}{carrier}
\newcommand{\graybullet}{\textcolor{lightgray}{\bullet}}
\newcommand{\Rwg}{R^{\circ, \textcolor{lightgray}{\bullet}}}
\title{
  Determining Existence of Logical Obstructions
  to the Distributed Task Solvability
  \thanks{This article is a revised version of aurthor’s Master’s thesis \cite{Hoshino:Msc22}.}
}
\author{
  Sou Hoshino
  \thanks{Dept.\ of Mathematics, Kyoto University, Japan;
    E-mail: \texttt{hoshino.sou.f76@kyoto-u.jp}}
}
\begin{document}

  \maketitle


\begin{abstract}
  To study the distributed task solvability,
  Goubault, Ledent, and Rajsbaum
  devised a model of dynamic epistemic logic
  that is equivalent to the topological model
  for distributed computing.
  In the logical model, the unsolvability of
  a particular distributed task
  can be proven by finding a formula,
  called logical obstruction.
  This logical method is very appealing
  because the concrete formulas that prevent to solve task
  would have implications of intuitive factors for the unsolvability.
  However, it has not been well studied when a logical obstruction exists
  and how to systematically construct a concrete logical obstruction formula,
  if any.
  In addition, it is proved that
  there are some tasks that are solvable
  but do not admit logical obstructions.

  In this paper, we propose a method to prove
  the non-existence of logical obstructions
  to the solvability of distributed tasks,
  based on the technique of simulation.
  Moreover,
  we give a method to determine whether a logical obstruction exists or not
  for a finite protocol and a finite task,
  and if it exists, construct a concrete obstruction.
  Using this method, we demonstrate
  that the language of the standard epistemic logic,
  without distributed knowledge,
  does not admit logical obstruction to
  the solvability of $k$-set agreement tasks.
  We also show that there is no logical obstruction
  for multi-round immediate snapshot even
  in the language of epistemic logic with distributed knowledge.
  In addition, for the \textsc{know-all} model,
  we provide a concrete obstruction formula that shows
  the unsolvability of the $k$-set agreement task.

\end{abstract}


\section{Introduction}

For decades, topological models have been used
to study the solvability of tasks in distributed computing~\cite{DBLP:books/mk/Herlihy2013}.
In topological models, the possible configurations
in distributed computation
are represented by a $\Pi$-colored chromatic simplicial complex,
where each vertex is colored by an element taken from $\Pi$,
the set of processes.
This allows the unsolvability of
the $k$-set agreement to be proven
by examining the connectivity of
the simplicial complex that models the protocol~\cite{DBLP:books/mk/Herlihy2013,DBLP:conf/podc/HerlihyR94}.

Recently, Goubault, Ledent, and Rajsbaum
proposed a model of dynamic epistemic logic (DEL), called simplicial model,
as a substitute for the conventional topological model~\cite{DBLP:journals/iandc/GoubaultLR21}.
In the logical model, the unsolvability of a task $T$ by a protocol $P$
is shown as follows:
1.\ Find a formula $\varphi$ that expresses
some property that does not hold
in $\mathcal{I}[P]$, a simplicial model for the protocol,
but holds in $\mathcal{I}[T]$, a simplicial model for the task.
2.\ Show that $\varphi$ is true in $\mathcal{I}[T]$
but is false in $\mathcal{I}[P]$.
3.\ Apply the knowledge gain theorem to
conclude that the task cannot be solved by the protocol.
Thus, to prove the unsolvability,
we only need to find an appropriate formula $\varphi$.
We call such a formula \emph{logical obstruction}.
This logical method is very attractive
because it provides a concrete formula that prevents to solve a task,
which would contribute to better understanding of unsolvability.
In addition, the above procedures 1-3 could be completed
without resorting to sophisticated topological tools,
such as homotopy group, Nerve lemma, and so on.
However, there is no concrete way to obtain logical obstructions,
and furthermore, there may be no logical obstruction.
Goubault et al.~\cite{DBLP:conf/tap/GoubaultLLR19} showed that
a certain unsolvable distributed task
has no logical obstruction,
applying the technique of bisimulation.

So far, concrete instances of logical obstruction
have been devised for a limited variations of distributed tasks.
Goubault et al.\ provided instances of logical obstructions
for simple tasks such as consensus (or 1-set agreement)
and approximate agreement tasks
but they left logical obstruction to
general $k$-set agreement task as open problem~\cite{DBLP:journals/iandc/GoubaultLR21}.
Nishida devised a logical obstruction to
general $k$-set agreement
by extending the logic with modality of distributed knowledge~\cite{Nishida:Msc20}.
Yagi and Nishimura~\cite{DBLP:journals/corr/abs-2011-13630}
applied Nishida's obstruction to show the unsolvability of
$k$-set agreement tasks in the superset closed adversary model.
However, their logical obstructions to the solvability of $k$-set agreement tasks
work for only single-round protocols
but not for multiple-round protocols.
This is in contrast to the topological method,
which can handle multi-round protocols~\cite{DBLP:conf/podc/HerlihyR10,DBLP:journals/jacm/HerlihyS99}.
The concrete instances mentioned above seem to indicate
that the epistemic logic is so weak that
the unsolvability is in general hard to
be expressed as an obstruction.
Indeed there exists an unsolvable task that has
no logical obstruction~\cite{DBLP:conf/tap/GoubaultLLR19}.

In this study, we are concerned with
the ability of epistemic logic in expressing the unsolvability.
Particularly, we study two languages of epistemic logic,
$\mathcal{L}_K$ and $\mathcal{L}_D$, and
how the additional expressibility of the latter
differentiates the ability.
The language $\mathcal{L}_K$ is the epistemic logic
with the standard epistemic modality $\mathrm{K}_a$,
a.k.a.\ knowledge operator;
The language $\mathcal{L}_D$ extends $\mathcal{L}_K$
with additional modality $\mathrm{D}_A$,
called distributed knowledge.
The logical obstructions given in~\cite{Nishida:Msc20,DBLP:journals/corr/abs-2011-13630}
suggest that the distributed knowledge provides
more logical obstructions to unsolvable tasks,
but it does not preclude that
$\mathcal{L}_K$ and $\mathcal{L}_D$ are equally able to define
logical obstructions for unsolvable tasks.

In order to answer this question and
find a systematic way for constructing logical obstructions,
we apply the technique of \emph{simulation}
over simplicial models to argue the existence of logical obstructions.
Simulation is a weaker notion of bisimulation between simplicial models.
As we will show in a proceeding section,
the knowledge gain theorem can be extended for simulations.
Moreover, simulation gives a method to determine
whether a logical obstruction exists or not
for a finite protocol and a finite task,
and if it exists, we provide a procedure for
constructing a logical obstruction.

The contributions of this paper are the followings.
\begin{itemize}
  \item Simulation between simplicial models is introduced
    as a means to show that
    the usual epistemic language $\mathcal{L}_K$ admits no logical obstruction.
  \item Simulation is extended to incorporate distributed knowledge
    and provides us a method for proving
    the non-existence of logical obstruction
    in the epistemic language $\mathcal{L}_D$
    with distributed knowledge operator.
  \item There exists a procedure that determines
    the existence of logical obstruction
    for a given pair of a finite protocol and a finite task.
    Moreover, in case such a logical obstruction exists,
    the procedure further constructs
    a concrete obstruction formula systematically.
\end{itemize}
We will show the following facts to demonstrate the merit of our approach.
\begin{itemize}
  \item The language $\mathcal{L}_K$, without distributed knowledge,
    admits no logical obstruction for $k$-set agreement tasks and
    immediate snapshot protocol, where $k \geq 2$.
    Therefore, $\mathcal{L}_D$ admits properly larger class of
    logical obstructions than $\mathcal{L}_K$.
  \item The language $\mathcal{L}_D$, with distributed knowledge,
    admits no logical obstruction for $2$-set agreement task and
    multi-round immediate snapshot protocol, where $|\Pi| = 3$.
    Therefore, although $\mathcal{L}_D$ admits more logical obstructions
    than $\mathcal{L}_K$,
    it still fails to establish the unsolvability,
    which can be topologically proven.
  \item We generate a concrete logical obstruction for
    $k$-set agreement tasks and a protocol in the \textsc{know-all} model~\cite{DBLP:journals/tcs/CastanedaFPRRT21}.
\end{itemize}

\begin{table}
  \caption{Existence of logical obstructions
    for varying tasks, protocols and languages of epistemic logic}
  \label{table_related_works_result}
  \centering
  \begin{tabular}{p{0.23\linewidth}p{0.23\linewidth}p{0.43\linewidth}}
    \hline
      Task & Protocol & Existence of logical obstruction \\
    \hline
    \hline
      Consensus & Immediate snapshot &
        exist in $\mathcal{L}_{CK}$~\cite{DBLP:journals/iandc/GoubaultLR21} \\[1ex]
      Approximate agreement & Immediate snapshot &
        exist in $\mathcal{L}_K$ unless solvable~\cite{DBLP:journals/iandc/GoubaultLR21} \\[1ex]
      $k$-set agreement where $2 \leq k < |\Pi|$ &
        Single-round immediate snapshot &
          exist in $\mathcal{L}_D$~\cite{Nishida:Msc20},
          but not exist in $\mathcal{L}_K$ [This paper] \\[1ex]
      & Multi-round immediate snapshot &
        not exist in $\mathcal{L}_K$ and
        not exist when $|\Pi| = 3$ in $\mathcal{L}_D$ [This paper] \\[1ex]
      & Instance of \textsc{know-all} model &
        exist in $\mathcal{L}_D$ unless solvable [This paper] \\[1ex]
      Equality negation with two agents and three input values &
      Layered message passing &
      not exist in $\mathcal{L}_{CK}$~\cite{DBLP:conf/tap/GoubaultLLR19} \\[1ex]
    \hline
      \multicolumn{3}{l}{$\mathcal{L}_K$:
          the language of epistemic logic
          with modality of knowledge but without distributed knowledge} \\
      \multicolumn{3}{l}{$\mathcal{L}_D$:
          the language of epistemic logic with distributed knowledge} \\
      \multicolumn{3}{l}{$\mathcal{L}_{CK}$:
          the language of epistemic logic
          with modality of knowledge and common knowledge} \\
  \end{tabular}
\end{table}
The Table~\ref{table_related_works_result} summarizes
the results of prior works and the present paper.

We notice that the present paper concerns
epistemic logic whose atomic propositions can mention
solely input values.
Epistemic logic with an augmentedset of atomic proposition
such as the one proposed in~\cite{DBLP:journals/jlap/DitmarschGLLR21}
which is more expressive in defining logical obstruction,
is out of the scope of the paper.

The rest of the paper is organized as follows.
Section~\ref{sec_preliminary} defines
distributed tasks, protocols, and its solvability
with simplicial models of DEL.
In Section~\ref{sec_k-simulation},
we introduce the notion of simulation
and show that there is no logical obstruction
for $k$-set agreement tasks and
immediate snapshot protocol in $\mathcal{L}_K$.
In Section~\ref{sec_d-simulation},
we reinforce the simulation with distributed knowledge
and show that there is no logical obstruction
for $2$-set agreement task and
multi- round immediate snapshot protocol in $\mathcal{L}_D$
when $|\Pi| = 3$.
In Section~\ref{sec_lo_construction},
we give a procedure that determines whether a simulation exists or not
and further, in case there is no simulation, produce a logical obstruction.
Section~\ref{sec_conclusion} concludes the paper.


\section{Preliminaries} \label{sec_preliminary}

Throughout the paper, we fix a non empty finite set $\Pi$
for colors, or agents.
We write $\N$ to denote the set of non-negative integers.

\subsection{Simplicial models}

In the topological theory of distributed computing,
distributed systems are modeled by simplicial complexes.
In~\cite{DBLP:journals/iandc/GoubaultLR21},
it is shown that simplicial complexes provide
Kripke models that is suitable for analysis by epistemic logic.

\begin{definition}[Chromatic simplicial complex]
  A \emph{chromatic simplicial complex} $\C = (V, S, \chi)$
  is a triple consisting of
  a set $V$ of vertices,
  a set $S$ of non-empty finite subsets of $V$ and
  a coloring map $\chi \colon V \to \Pi$ such that
  $S$ and $\chi$ satisfy the following conditions:
  \begin{itemize}
    \item $S$ is closed under inclusion,
      i.e.\ $X \in S$ and $\emptyset \subsetneq Y \subseteq X \implies Y \in S$.
    \item For any vertex $v \in V$, $\{ v \} \in S$.
    \item For any simplex $X \in S$, $\chi|_X \colon X \to \Pi$ is injective.
  \end{itemize}

  Each element of $S$ is called a \emph{simplex}.
  We write $X \in \C$ to mean $X$ is a simplex of $C$,
  i.e.\ $X \in S$.
  The \emph{dimension} of a simplex $X \in S$
  is defined by $|X| - 1$.
  A simplex that is maximal with respect to
  inclusion is called a \emph{facet}.
  The set of facets is denoted by $\facet{\C}$.
  A complex is called \emph{pure} if all its facets have the same dimension.

\end{definition}

In the following, Let
$\At = \{\ip_a^v \mid \text{$a \in \Pi$, $v \in V^{in}$} \}$
be a set of atomic propositions,
where $V^{in}$ is an arbitrary set of input values.
For each $a \in \Pi$,
$\At_a = \{\ip_a^v \in \At \mid v \in V^{in} \}$.

\begin{definition}[Simplicial model]
  A \emph{simplicial model} $M = (V, S, \chi, l)$ is
  a pure ($|\Pi| - 1$)-dimensional
  chromatic simplicial complex $\C = (V, S, \chi)$
  equipped with a labeling map $l \colon V \to \mathcal{P}(\At)$
  such that $l(v) \subseteq \At_{\chi(v)}$.

  For each $a \in \Pi$, we define an equivalence relation
  $\sim_a$ over $\facet{\C}$,
  called an \emph{indistinguishability relation},
  by $X \sim_a Y$ if and only if $a \in \chi(X \cap Y)$.
  Then $(\facet{C}, (\sim_a)_{a \in \Pi}, l)$ gives
  a multi-agent Kripke model of facets being the set of possible worlds.

  In abuse of notation, we may write
  $\facet{M}$ to mean a set of facets $\facet{C}$
  in its underlying simplicial complex.

\end{definition}

\begin{definition}[Morphism between simplicial models]
  Let $M = (V, S, \chi, l)$, $M' = (V', S', \chi', l')$ be simplicial models.
  A \emph{morphism} $f \colon M \to M'$ is
  a map $f$ from $V$ to $V'$ that satisfies the following conditions:
  \begin{itemize}
    \item For all $X \in S$, $f(X) \in S'$,
    \item $f$ preserves the coloring,
      i.e.\ $\chi'(f(v))= \chi(v)$ for any $v \in V$ and
    \item $f$ preserves the labeling,
      i.e.\ $l'(f(v))= l(v)$ for any $v \in V$.
  \end{itemize}

\end{definition}

\begin{definition}[Input simplicial model]
  An \emph{input simplicial model} is
  a simplicial model $\I = (V, S, \chi, l)$
  where
  \begin{itemize}
    \item $V \subseteq \Pi \times V^{in}$,
    \item $X \in S$ if $X$ have at most one $a$-colored vertex
      for each color $a \in \Pi$,
    \item $\chi(a,v) = a$ and
    \item $l(a,v) = \{ \ip_a^v \}$.
  \end{itemize}

\end{definition}

In the sequel, we assume $V = \Pi \times V^{in}$.
Figure~\ref{fig_input} shows an example of an input simplicial model,
where each color of vertex is illustrated as a color of node.

\begin{figure}
  \centering
  \begin{minipage}[b]{0.45\linewidth}
    \centering

\begin{tikzpicture}
  \node[circle,scale=0.6,draw,fill=white,label=below:{$0$}] (c0v0) at (0, 0) {};
  \node[circle,scale=0.6,draw,fill=white,label=above:{$1$}] (c0v1) at (1, 1) {};
  \node[circle,scale=0.6,draw,fill=black,label=below:{$0$}] (c1v0) at (1, 0) {};
  \node[circle,scale=0.6,draw,fill=black,label=above:{$1$}] (c1v1) at (0, 1) {};
  \draw (c0v0) -- (c1v0) -- (c0v1) -- (c1v1) -- (c0v0);
\end{tikzpicture}
\caption{Example of an input simplicial model for
  $\Pi = \{\circ, \bullet\}$,
  $V^{in} = \{0, 1\}$.}
\label{fig_input}
  \end{minipage}
  ~~~
  \begin{minipage}[b]{0.45\linewidth}
    \centering

\begin{tikzpicture}
  \node[circle,scale=0.6,draw,fill=white,label=below:{$0 \mapsto 0$}] (c0v0d0) at (0, 0) {};
  \node[circle,scale=0.6,draw,fill=white,label=above:{$1 \mapsto 0$}] (c0v1d0) at (1.5, 1) {};
  \node[circle,scale=0.6,draw,fill=black,label=below:{$0 \mapsto 0$}] (c1v0d0) at (1.5, 0) {};
  \node[circle,scale=0.6,draw,fill=black,label=above:{$1 \mapsto 0$}] (c1v1d0) at (0, 1) {};
  \node[circle,scale=0.6,draw,fill=white,label=below:{$0 \mapsto 1$}] (c0v0d1) at (3, 0) {};
  \node[circle,scale=0.6,draw,fill=white,label=above:{$1 \mapsto 1$}] (c0v1d1) at (4.5, 1) {};
  \node[circle,scale=0.6,draw,fill=black,label=below:{$0 \mapsto 1$}] (c1v0d1) at (4.5, 0) {};
  \node[circle,scale=0.6,draw,fill=black,label=above:{$1 \mapsto 1$}] (c1v1d1) at (3, 1) {};
  \draw (c1v1d0) -- (c0v0d0) -- (c1v0d0) -- (c0v1d0);
  \draw (c0v0d1) -- (c1v1d1) -- (c0v1d1) -- (c1v0d1);
\end{tikzpicture}
\caption{$\I[\mathcal{SA}_1]$ for
  $\Pi = \{\circ, \bullet\}$,
  $V^{in} = \{0, 1\}$, $V^{out} = \{0, 1\}$.}
\label{fig_isa1}
  \end{minipage}
\end{figure}

\subsection{Epistemic logic for simplicial models}

\begin{definition}[Epistemic formula]
  We define the language of epistemic formulas $\mathcal{L}_K$ as follows:
  \[
    \varphi ::= p \mid \lnot \varphi
      \mid (\varphi \land \varphi) \mid \mathrm{K}_a \varphi
  \]
  where $p \in \At$ is an atomic proposition and
  $a \in \Pi$ is an agent.
\end{definition}

\begin{definition}
  Given a formula $\varphi \in \mathcal{L}_K$ and
  a facet $X \in \facet{M}$ of a simlicial model $M$,
  we define the \emph{truth} of $\varphi$ at $X$,
  written $M, X \models \varphi$, as below by induction on $\varphi$.
  \begin{align*}
    &M, X \models p
      && \text{if $p \in l(X)$}. \\
    &M, X \models \lnot\varphi
      && \text{if $M, X \not\models \varphi$}. \\
    &M, X \models \varphi_1 \land \varphi_2
      && \text{if $M, X \models \varphi_1$ and $M, X \models \varphi_2$}. \\
    &M, X \models \mathrm{K}_a\varphi
      && \text{if for all Y $\in \facet{M}$,
        $Y \sim_a X \implies M, Y \models \varphi$}.
  \end{align*}

\end{definition}

\subsection{Protocols, tasks and solvability with DEL}

Protocols and tasks in distributed computing are regarded as
transformation of the input configuration to the output.
In simplicial models, this can be defined as product update.

\begin{definition}[Simplicial action model and product update]
  A \emph{simplicial action model},
  (or an \emph{action model} for short)
  $A = (V, S, \chi, \pre)$ is
  a pure ($|\Pi| - 1$)-dimensional
  chromatic simplicial complex $\C = (V, S, \chi)$
  equipped with a map $\pre \colon \facet{\C} \to \mathcal{L}_K$.
  Each facet in $\facet{\C}$ is called an \emph{action} and
  $\pre$ assigns a \emph{precondition} formula to each action.
  In abuse of notation, we may write
  $\facet{A}$ to mean a set of facets $\facet{C}$
  in its underlying simplicial complex.

  Let $M = (V_M, S_M, \chi_M, l_M)$ be a simplicial model and
  $A = (V_A, S_A, \chi_A, \pre)$ be a simplicial action model.
  A \emph{product update model} $M[A] = (V, S, \chi, l)$ is
  a simplicial model defined as follows:
  \begin{itemize}
    \item $\facet{M[A]} = \{X \times_{\Pi} T \mid \text{$X \in \facet{M}$,
      $T \in \facet{A}$, $M, X \models \pre(T)$}\}$,
    \item $\chi(m, a) = \chi_M(m) = \chi_A(a)$ and
    \item $l(m, a) = l_M(m)$,
  \end{itemize}
  where $X \times_{\Pi} T =
  \{(m, a) \in X \times T \mid \chi_M(m) = \chi_A(a)\}$
  is a subset of the product $X \times T$
  whose each element is a tuple of
  a vertex in $X$ and a vertex in $T$ with the same color.

\end{definition}

\begin{definition}[Task]
  A simplicial action model $A = (V, S, \chi, \pre)$ is called a \emph{task}
  if there exists an injection
  $\iota \colon \facet{A} \to [\Pi, V^{out}]$ such that
  $T \sim_a T'$ holds if and only if $\iota(T)(a) = \iota(T')(a)$
  for any $T, T' \in \facet{A}$ and $a \in \Pi$,
  where $V^{out}$ is an arbitrary set of output values,
  $[\Pi, V^{out}]$ is the set of functions from $\Pi$ to $V^{out}$.

\end{definition}

\begin{example}[$k$-set agreement task]
  We define a task $\mathcal{SA}_k$, where $k = 1, \dots , |\Pi|$,
  called $k$-set agreement as follows:
  \begin{itemize}
    \item $V^{out} = V^{in}$,
    \item $V = \Pi \times V^{out}$,
    \item $\facet{\mathcal{SA}_k} =
      \{\{(a, decide(a)) \mid a \in \Pi\} \mid
        decide \in [\Pi, V^{out}], |decide(\Pi)| \leq k$\},
    \item $\chi(a, d) = a$ and
    \item $\pre(\{(a, decide(a)) \mid a \in \Pi\})
      = \bigwedge_{a \in \Pi} \bigvee_{b \in \Pi} \ip_b^{decide(a)}$.
  \end{itemize}

  This is a well-defined task because there is an injection
  $\{(a, decide(a)) \mid a \in \Pi\} \mapsto decide$.
  In what follows, we write $(a, i_a, d_a)$ to mean a vertex
  $((a, i_a), (a, d_a))$ of $\ISA{k}$.
  we assume the set of input values for the $k$-set agreement
  is given by $V^{in} = \Pi$ unless otherwise stated.
  The product update model $\ISA{1}$ is as shown in Figure~\ref{fig_isa1},
  where each color of vertex is illustrated as a color of node
  and the input $i$ and the decision $d$ is indicated as $i \mapsto d$.

\end{example}

Using product updates, solvability of distributed tasks is defined as follows.

\begin{definition}[Solvability]
  Let $\mathcal{I}$ be an input simplicial model.
  A task $T$ is \emph{solvable} by an action model $P$,
  if there exists a morphism $f \colon \I[P] \to \I[T]$.

\end{definition}


\section{Non-existence of logical obstructions} \label{sec_k-simulation}

In this section, we consider the following language $\pf$
of positive epistemic formulas:
\[
  \varphi ::= p \mid \lnot p \mid (\varphi \lor \varphi)
    \mid (\varphi \land \varphi) \mid \mathrm{K}_a \varphi
\]
where $p \in \At$ is an atomic proposition and
$a \in \Pi$ is an agent.

\subsection{Knowledge gain and logical obstruction}

\begin{theorem}[Knowledge gain~\cite{DBLP:journals/iandc/GoubaultLR21}]
  Suppose $f \colon M \to M'$ be a morphism between simplicial models.
  Then, for any facet $X \in \facet{M}$
  and positive formula $\varphi \in \pf$,
  $M', f(X) \models \varphi$ implies $M, X \models \varphi$.

\end{theorem}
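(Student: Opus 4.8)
The plan is to argue by structural induction on the positive formula $\varphi \in \pf$, after first recording two structural properties of the morphism $f$. First I would check that $f$ sends facets to facets: since $X$ is a facet of the pure $(|\Pi|-1)$-dimensional complex $M$, it carries exactly one vertex of each color, and because $f$ preserves colors and $f(X) \in S'$, the image $f(X)$ again carries one vertex of each color and is therefore a facet of $M'$. Second I would show that $f$ preserves indistinguishability, i.e.\ $X \sim_a Y$ implies $f(X) \sim_a f(Y)$: if $a \in \chi(X \cap Y)$ then some $v \in X \cap Y$ has $\chi(v) = a$, so $f(v) \in f(X) \cap f(Y)$ with $\chi'(f(v)) = a$, whence $a \in \chi'(f(X)\cap f(Y))$.

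For the base cases, writing $l(X) = \bigcup_{v \in X} l(v)$ as usual, label preservation gives $l'(f(X)) = \bigcup_{v\in X} l'(f(v)) = \bigcup_{v\in X} l(v) = l(X)$. Hence $M', f(X) \models p$ holds exactly when $M, X \models p$, and likewise $M', f(X) \models \lnot p$ exactly when $M, X \models \lnot p$. So on atoms and negated atoms the two models agree outright, which is the feature that makes the whole argument go through in the single claimed direction.

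The connective cases $\varphi_1 \lor \varphi_2$ and $\varphi_1 \land \varphi_2$ follow immediately from the induction hypothesis applied to $\varphi_1$ and $\varphi_2$ at the same facet $X$. The interesting case is $\varphi = \mathrm{K}_a \psi$. Assuming $M', f(X) \models \mathrm{K}_a \psi$, I take an arbitrary $Y \in \facet{M}$ with $Y \sim_a X$; by preservation of indistinguishability $f(Y) \sim_a f(X)$, and $f(Y)$ is a facet of $M'$, so the assumption yields $M', f(Y) \models \psi$. The induction hypothesis, applied now to $Y$ and $\psi$, gives $M, Y \models \psi$. Since $Y$ was an arbitrary $a$-neighbour of $X$, we conclude $M, X \models \mathrm{K}_a \psi$.

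The main obstacle, and really the point of restricting to positive formulas, is the directionality of the $\mathrm{K}_a$ step. The image $f(\{Y : Y \sim_a X\})$ need only be a \emph{subset} of $\{Y' : Y' \sim_a f(X)\}$, because $f$ need not be surjective onto the $a$-neighbours of $f(X)$. This is harmless for the universal modality $\mathrm{K}_a$ read from $M'$ down to $M$: truth at \emph{all} neighbours of $f(X)$ certainly descends to truth at the images of neighbours of $X$. It is precisely because negation is confined to atoms, where the two models agree exactly, that no existential (downward) obligation ever arises; admitting unrestricted negation would force the reverse inclusion in the $\mathrm{K}_a$ case, which the morphism does not guarantee.
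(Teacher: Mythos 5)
Your proof is correct and follows essentially the same route as the paper: the paper establishes this theorem by showing that $graph(f)$ is a total simulation (using exactly your two observations, that $f$ preserves labels and that color-preservation yields $X \sim_a Y \implies f(X) \sim_a f(Y)$) and then running the same structural induction on $\varphi$ in the more abstract setting of Theorem~\ref{th_knowledge_gain_via_K-simulation}. Your direct induction is that argument inlined, and your closing remark on why positivity is needed matches the role of the one-directional \textsf{(Forth)} condition.
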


\begin{corollary}
  If there exists a morphism $f \colon M \to M'$,
  for any positive formula $\varphi \in \pf$,
  $M' \models \varphi$ implies $M \models \varphi$.

\end{corollary}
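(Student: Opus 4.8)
The plan is to derive the global statement $M' \models \varphi \Rightarrow M \models \varphi$ from the pointwise Knowledge gain theorem by quantifying over all facets. Recall that $M \models \varphi$ abbreviates the condition that $M, X \models \varphi$ holds for every facet $X \in \facet{M}$, and similarly for $M'$. So I would fix an arbitrary facet $X \in \facet{M}$ and aim to establish $M, X \models \varphi$; since $X$ is arbitrary, this yields $M \models \varphi$ and completes the proof.

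The key intermediate step is to check that $f(X)$ is itself a facet of $M'$, so that the hypothesis $M' \models \varphi$ can legitimately be instantiated at $f(X)$. This is where the structure of simplicial models does the work. Since $M$ is a pure $(|\Pi|-1)$-dimensional complex, $X$ has exactly $|\Pi|$ vertices, and by chromaticity these carry all $|\Pi|$ distinct colors (the coloring is injective on any simplex). A morphism preserves colors, so the vertices $f(v)$ for $v \in X$ again carry $|\Pi|$ distinct colors; in particular $f$ is injective on $X$, and $f(X)$ is a simplex of $M'$ with $|\Pi|$ vertices, i.e.\ of dimension $|\Pi|-1$. Because $M'$ is likewise pure of dimension $|\Pi|-1$, every $(|\Pi|-1)$-simplex is maximal, hence $f(X) \in \facet{M'}$.

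With this in hand the argument closes immediately: the assumption $M' \models \varphi$ gives $M', f(X) \models \varphi$, and the Knowledge gain theorem (applied to the same morphism $f$, the facet $X$, and the positive formula $\varphi \in \pf$) then yields $M, X \models \varphi$, as desired.

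I expect the only genuine obstacle to be the verification that $f(X)$ is a facet rather than a lower-dimensional face; everything else is a routine instantiation of the theorem together with an unwinding of the notation $M \models \varphi$. The purity and chromaticity hypotheses built into the definition of simplicial models are precisely what guarantee that $f$ cannot collapse $X$, so no additional machinery beyond the Knowledge gain theorem is needed.
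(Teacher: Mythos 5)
Your proposal is correct and follows the same route the paper intends: instantiate the Knowledge gain theorem at each facet $X$, using $M' \models \varphi$ at $f(X)$. The paper leaves the corollary as an immediate consequence without proof; your added check that $f(X)$ is indeed a facet (via purity and color-preservation) is a sound and worthwhile elaboration of the same argument.
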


\begin{corollary}
  Let $\mathcal{I}$ be an input simplicial model,
  $T$ be a task and $A$ be an action model.
  If there exists a positive formula $\varphi \in \pf$
  such that $\mathcal{I}[T] \models \varphi$ and
  $\mathcal{I}[A] \not \models \varphi$,
  the task $T$ is not solvable by $A$.

\end{corollary}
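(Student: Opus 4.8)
The plan is to prove this by contraposition, reading it as a direct consequence of the preceding corollary together with the definition of solvability. Concretely, I would assume toward a contradiction that the task $T$ \emph{is} solvable by the action model $A$. By the definition of solvability, this means there exists a morphism $f \colon \I[A] \to \I[T]$ between the two product update (simplicial) models.

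With such a morphism in hand, I would invoke the previous corollary, instantiating $M := \I[A]$ and $M' := \I[T]$. That corollary states that whenever a morphism $M \to M'$ exists, every positive formula true in the target is also true in the source: $M' \models \varphi$ implies $M \models \varphi$ for all $\varphi \in \pf$. Applying this to the given witness formula $\varphi$, and using the hypothesis $\I[T] \models \varphi$, I conclude $\I[A] \models \varphi$. This directly contradicts the standing hypothesis $\I[A] \not\models \varphi$, so no morphism $\I[A] \to \I[T]$ can exist, i.e.\ $T$ is not solvable by $A$.

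The only point requiring care — and the single step I would flag as the substantive one — is matching the \emph{directions} correctly. The knowledge gain theorem transports truth of positive formulas \emph{backward} along a morphism (from $M'$ to $M$), whereas solvability supplies a morphism going \emph{from} the protocol model $\I[A]$ \emph{to} the task model $\I[T]$. Composing these two facts, truth of $\varphi$ flows from $\I[T]$ to $\I[A]$, which is exactly the orientation needed for the contradiction; reversing either arrow would break the argument. Everything else is immediate, so I expect the entire proof to be two or three lines, consisting of the contrapositive setup, the instantiation of the corollary, and the resulting contradiction with $\I[A] \not\models \varphi$.
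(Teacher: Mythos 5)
Your argument is correct and is exactly the intended one: instantiate the preceding knowledge-gain corollary with $M = \I[A]$ and $M' = \I[T]$ using the morphism supplied by solvability, and derive the contradiction with $\I[A] \not\models \varphi$. The directionality you flag is indeed the only delicate point, and you have it right, so nothing further is needed.
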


A positive formula $\varphi$ such as the one in the above corollary is called
a \emph{logical obstruction to the solvability of $T$ by $A$}.
We simply call $\varphi$ a \emph{logical obstruction}
when $T$ and $A$ are clear from the context.

\subsection{Knowledge gain via simulation}

\begin{definition}[Simulation]  \label{def_K-simulation}
  Suppose $M$ and $M'$ are simplicial models.
  A \emph{simulation of $M$ by $M'$} is a binary relation
  $R \subseteq \facet{M} \times \facet{M'}$
  over the facets of simplicial models
  that satisfies the following properties.
  \begin{description}
    \item[(Atom)]
      For all $X \in \facet{M}$ and $X' \in \facet{M'}$,
      $X \mathrel{R} X'$ implies $l(X) = l'(X')$.
    \item[(Forth)]
      For all $a \in \Pi$, $X, Y \in \facet{M}$ and $X' \in \facet{M'}$,
      if $X \mathrel{R} X'$ and $X \sim_a Y$,
      there exists $Y' \in \facet{M'}$ such that
      $Y \mathrel{R} Y'$ and $X' \sim_a Y'$.
  \end{description}

  A simulation $S$ is called \emph{total}
  if for all $X \in \facet{M}$, there exists $X' \in \facet{M'}$
  such that $X \mathrel{S} X'$.
  For a facet $X \in \facet{M}$,
  we say that $S$ is \emph{not total at $X$} if
  there is no $X' \in \facet{M'}$ such that $X \mathrel{S} X'$.

\end{definition}

\begin{proposition}
  Suppose $f \colon M \to M'$ is a morphism between simplicial models,
  and define a binary relation $graph(f) \subseteq \rel$ by
  $graph(f) = \{(X, f(X)) \mid X \in \facet{M}\}$.
  Then, the $graph(f)$ is a total simulation.

\end{proposition}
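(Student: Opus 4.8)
The plan is to verify directly that $graph(f)$ satisfies the three requirements of Definition~\ref{def_K-simulation}: the atomic condition \atom, the \forth\ condition, and totality. Before checking any of these, though, I must confirm that $graph(f)$ is even a legitimate candidate, i.e.\ that it is a subset of $\rel$. This amounts to showing that for every facet $X \in \facet{M}$ the image $f(X)$ is again a \emph{facet} of $M'$, not merely a simplex. This is the only step that genuinely exploits the structural hypotheses, so I would settle it first.

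To see $f(X) \in \facet{M'}$, I would argue as follows. Since $M$ is pure of dimension $|\Pi|-1$ and $\chi|_X$ is injective, each facet $X$ has exactly $|\Pi|$ vertices, one of every colour. Because $f$ preserves colours ($\chi'(f(v)) = \chi(v)$), the restriction $f|_X$ is injective: two vertices of $X$ with the same image would share a colour, contradicting injectivity of $\chi|_X$. Hence $f(X)$ also has $|\Pi|$ distinct vertices, one per colour, and the morphism condition gives $f(X) \in S'$. Finally $f(X)$ is maximal, since adjoining any further vertex would introduce a second vertex of some colour already present, violating the chromatic (injectivity) condition of $M'$; therefore $f(X)$ is a facet. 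This simultaneously shows $graph(f) \subseteq \rel$ and establishes totality, because $X \mathrel{graph(f)} f(X)$ holds for every $X \in \facet{M}$.

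For \atom, I would use that the facet-level labelling is the union of the vertex labels, $l(X) = \bigcup_{v \in X} l(v)$ (and likewise for $M'$). Since $f|_X$ is a bijection of $X$ onto $f(X)$ and $f$ preserves labels vertexwise ($l'(f(v)) = l(v)$), reindexing the union yields $l'(f(X)) = \bigcup_{v \in X} l'(f(v)) = \bigcup_{v \in X} l(v) = l(X)$. As the only facet related to $X$ by $graph(f)$ is $f(X)$, this is exactly the content of \atom.

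Finally, for \forth, suppose $X \mathrel{graph(f)} X'$, so $X' = f(X)$, and $X \sim_a Y$. By definition of $\sim_a$ there is a vertex $v \in X \cap Y$ with $\chi(v) = a$. I would take $Y' = f(Y)$, which is a facet by the argument above and satisfies $Y \mathrel{graph(f)} Y'$ by construction. Then $f(v) \in f(X) \cap f(Y)$ and $\chi'(f(v)) = \chi(v) = a$, so $a \in \chi'(X' \cap Y')$, i.e.\ $X' \sim_a Y'$, as required. The only point demanding real care is the facet claim of the second step; once that purity-and-injectivity argument is made precise, the remaining conditions are direct unfoldings of the definitions, so I anticipate no further obstacle.
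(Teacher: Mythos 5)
Your proof is correct and follows essentially the same route as the paper's: verify \atom{} from label preservation, verify \forth{} by pushing the shared $a$-colored vertex through $f$, and note totality is immediate. The only difference is that you explicitly check that $f$ maps facets to facets (via purity and chromatic injectivity), a point the paper takes for granted; this is a worthwhile detail but not a different argument.
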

\begin{proof}
  \atom: By the definition of morphism, $f$ preserves labeling,
  i.e. $l(X) = l'(f(X))$.

  \forth: Let $X, Y \in \facet{M}$ be facets such that $X \sim_a Y$.
  Then there exists $a$-colored vertex $v \in X \cap Y$.
  Since $f$ is a color-preserving map,
  we have $f(v) \in f(X) \cap f(Y)$ and $\chi'(f(v)) = \chi(v) = a$.
  This implies $f(X) \sim_a f(Y)$.

  It is clear that $graph(f)$ is total.
\end{proof}

\begin{theorem} \label{th_knowledge_gain_via_K-simulation}
  Suppose $S \subseteq \facet{M} \times \facet{M'}$ be a simulation.
  Then, for any facets $(X, X') \in S$
  and positive formula $\varphi \in \pf$,
  $M', X' \models \varphi$ implies $M, X \models \varphi$.
\end{theorem}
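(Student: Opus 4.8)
The plan is to prove this by structural induction on the positive formula $\varphi$, using the two simulation properties \atom\ and \forth\ exactly where they are needed.
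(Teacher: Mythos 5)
Your proposed strategy---structural induction on $\varphi$, with \atom{} handling the literal cases and \forth{} handling the modal case---is exactly the route the paper takes, so there is no disagreement about the approach. However, what you have written is only a plan, not a proof: none of the cases is actually carried out, and in particular the one case that contains any real content, $\varphi = \mathrm{K}_a\psi$, is left entirely unexecuted. That case is where the direction of the argument must be checked. The simulation $S$ relates facets of $M$ to facets of $M'$, while the implication to be proved runs from $M'$ back to $M$; so to show $M,X\models \mathrm{K}_a\psi$ you must take an arbitrary $Y\in\facet{M}$ with $X\sim_a Y$, apply \forth{} to $X$, $X'$ and $Y$ to obtain some $Y'\in\facet{M'}$ with $Y\mathrel{S}Y'$ and $X'\sim_a Y'$, use the assumption $M',X'\models\mathrm{K}_a\psi$ to conclude $M',Y'\models\psi$, and only then invoke the induction hypothesis on the pair $(Y,Y')$ to get $M,Y\models\psi$. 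Until that chain is written down (together with the base case for $p$ and $\lnot p$ via \atom{} and the routine $\land$/$\lor$ cases), the proof is incomplete; once it is, it coincides with the paper's argument for Theorem~\ref{th_knowledge_gain_via_K-simulation}.
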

\begin{proof}
  We proceed by induction on $\varphi$.

  For the base case, suppose $\varphi = p$ or $\lnot p$,
  where $p$ is an atomic proposition.
  Since $S$ satisfies the condition \atom, we have
  $M, X \models p \iff p \in l(X) \iff p \in l'(X') \iff M', X' \models p$.
  This implies $M', X' \models \varphi \implies M, X \models \varphi$.

  The cases of conjunction and disjunction are
  easily shown by induction hypothesis.

  For the case of modal operator, suppose $\varphi = \mathrm{K}_a \psi$.
  Assuming $M', X' \models \mathrm{K}_a \psi$,
  we show that $M, Y \models \psi$ holds for any $Y \in \facet{M}$
  such that $X \sim_a Y$.
  By the condition \forth, there exists $Y' \in \facet{M'}$
  such that $Y \mathrel{S} Y'$ and
  $X' \sim_a Y'$.
  This implies $M', Y' \models \psi$
  and also $M, Y \models \psi$ by induction hypothesis.
  Therefore, $M, X \models \mathrm{K}_a \psi$.
\end{proof}

\begin{corollary} \label{cor_knowledge_gain_via_K-simulation}
  If there exists a total simulation of $M$ by $M'$,
  for any positive formula $\varphi \in \pf$,
  $M' \models \varphi$ implies $M \models \varphi$.

\end{corollary}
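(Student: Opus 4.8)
The plan is to derive this corollary immediately from Theorem~\ref{th_knowledge_gain_via_K-simulation}, which already establishes the pointwise transfer of positive formulas along a simulation. The only extra ingredient is the reading of global satisfaction $M \models \varphi$ as validity, namely truth at every facet: $M \models \varphi$ abbreviates that $M, X \models \varphi$ holds for all $X \in \facet{M}$. With this unfolding, the task reduces to supplying, for each facet of $M$, a related facet of $M'$ at which the hypothesis $M' \models \varphi$ can be invoked.

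Concretely, I would first fix an arbitrary facet $X \in \facet{M}$ and aim to show $M, X \models \varphi$. Since $S$ is a \emph{total} simulation, the totality clause furnishes some $X' \in \facet{M'}$ with $X \mathrel{S} X'$. Next, from the assumption $M' \models \varphi$ I read off $M', X' \models \varphi$, because validity in $M'$ means truth at every facet, $X'$ included. Finally, applying Theorem~\ref{th_knowledge_gain_via_K-simulation} to the related pair $(X, X') \in S$ yields $M, X \models \varphi$. As $X$ was chosen arbitrarily, this gives $M \models \varphi$, completing the argument.

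There is essentially no hard step here: all of the inductive work---in particular the modal case, where the \forth{} condition is used to relocate the witness along $\sim_a$---has already been carried out in the proof of Theorem~\ref{th_knowledge_gain_via_K-simulation}. The one point worth flagging is the indispensable role of \emph{totality}: a plain simulation need not relate every facet of $M$ to anything in $M'$, and a facet left unrelated could then fail $\varphi$ while $M'$ still validates it. Totality is exactly the hypothesis that rules this out, so it is the sole place where the corollary strengthens the bare simulation used in the theorem.
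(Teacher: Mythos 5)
Your proof is correct and is exactly the intended argument: the paper leaves this corollary as an immediate consequence of Theorem~\ref{th_knowledge_gain_via_K-simulation}, obtained precisely by unfolding $M \models \varphi$ as truth at every facet and using totality to supply a related facet $X'$ for each $X$. Nothing further is needed.
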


\subsection{Non-existence of logical obstructions
to the solvability of $\mathcal{SA}_k$ by $\mathcal{IS}^r$}

The (iterated) immediate snapshot is
a protocol that characterizes the wait-free asynchronous
distributed computation in the read-write shared memory systems~\cite{DBLP:conf/podc/BorowskyG92,DBLP:books/mk/Herlihy2013}.
The specification of this protocol is as follows.
Let $\Pi$ be a set of processes and suppose that
processes in $\Pi$ communicate with each other
by iterating the immediate snapshot protocol $r$ times.
When a process $a \in \Pi$ finishes immediate snapshot of $i$-th iteration,
it obtains a \emph{view} $V_a \subseteq \{(p, v_p) \mid p \in \Pi\}$
where $v_p$ is the view of $p$'s $(i-1)$-th iteration,
assuming the view of $p$'s iteration is the initial input value to $p$.
The views $(V_p)_{p \in \Pi}$ of each round satisfy
the following properties.
\begin{description}
  \item[(Self-inclusion)]
    For all $a \in \Pi$, $(a, v_a) \in V_a$.
  \item[(Containment)]
    For all $a, b \in \Pi$, $V_a \subseteq V_b$ or $V_b \subseteq V_a$.
  \item[(Immediacy)]
    For all $a, b \in \Pi$, $(b, v_b) \in V_a$ implies $V_b \subseteq V_a$.
\end{description}
The set of views are equally specified by means of
\emph{ordered partition}.
The $r$-iterated immediate snapshot is thus modeled by an action model
whose underlying simplicial complex is determined by
initial input values and a sequence of $r$ ordered partitions.

\begin{definition}
  An \emph{ordered partition} of $\Pi$ is a finite sequence
  $\gamma = (C_1, \dots , C_l)$ of subsets of $\Pi$ such that
  $\emptyset \neq C_1 \subsetneq C_2 \subsetneq \dots \subsetneq C_l = \Pi$.
  We write $\mathrm{OP}_{\Pi}$ for the set of all ordered partitions of $\Pi$.

\end{definition}

\begin{definition}
  Let $\I$ be an input simplicial model.
  We define a set $\Views^r$ of possible views
  obtained by communicating $r$-times,
  by induction on $r$ as
  \begin{itemize}
    \item $\Views^0 = V^{in}$ and
    \item $\Views^{r+1} =
      \{\{(p, v_p) \mid p \in P\}
      \mid \text{$P \subseteq \Pi$, $v_p \in \Views^r$}\}$.
  \end{itemize}
  Let $a \in \Pi$ be an agent.
  We also define a map
  $view^r_a \colon \facet{\I} \times \mathrm{OP}^r_{\Pi} \to \Views^r$,
  by induction on $r$ as
  \begin{itemize}
    \item $view^0_a(\{(b, i_b) \mid b \in \Pi\}) = i_a$ and
    \item $view^{r+1}_a(I, \gamma_1, \dots , \gamma_r, (C_1, \dots , C_l))
      = \{(p, view^r_p(I, \gamma_1, \dots , \gamma_r))
        \mid p \in C_{\min\{j \mid a \in C_j\}}\}$.
  \end{itemize}

\end{definition}

\begin{definition}
  Let $r \geq 1$ be a natural number.
  We define an action model $\mathcal{IS}^r = (V, S, \chi, \pre)$,
  for \emph{$r$-iterated immediate snapshot protocol} as follows:
  \begin{itemize}
    \item $V \subseteq \Pi \times V^{in} \times \Views^r$,
    \item $X \in \facet{\mathcal{IS}^r}$ if there exists
      a facet $I \in \facet{\I}$ and
      ordered partitions $\gamma_1, \dots , \gamma_r$ such that
      $X = \{(a, i_a, view^r_a(I, \gamma_1, \dots, \gamma_r))
        \mid (a, i_a) \in I\}$,
    \item $\chi(a, i, v) = a$ and
    \item $\pre(\{(a, i_a, v_a) \mid a \in \Pi\})
      = \bigwedge_{a \in \Pi} \ip_a^{i_a}$.
  \end{itemize}

\end{definition}

\begin{figure}
  \centering

\begin{tikzpicture}
  \node[circle,scale=0.8,draw,fill=white] (c0v0) at ({0/1+sqrt(3)*0/1}, {0/1+sqrt(3)*0/1}) {};
  \node[circle,scale=0.6,draw,fill=lightgray] (c1v01) at ({4/3+sqrt(3)*0/1}, {0/1+sqrt(3)*0/1}) {};
  \node[circle,scale=0.6,draw,fill=black] (c2v012) at ({2/1+sqrt(3)*0/1}, {0/1+sqrt(3)*2/5}) {};
  \node[circle,scale=0.8,draw,fill=white] (c0v0) at ({0/1+sqrt(3)*0/1}, {0/1+sqrt(3)*0/1}) {};
  \node[circle,scale=0.6,draw,fill=lightgray] (c1v012) at ({8/5+sqrt(3)*0/1}, {0/1+sqrt(3)*4/5}) {};
  \node[circle,scale=0.6,draw,fill=black] (c2v012) at ({2/1+sqrt(3)*0/1}, {0/1+sqrt(3)*2/5}) {};
  \node[circle,scale=0.8,draw,fill=white] (c0v0) at ({0/1+sqrt(3)*0/1}, {0/1+sqrt(3)*0/1}) {};
  \node[circle,scale=0.6,draw,fill=lightgray] (c1v012) at ({8/5+sqrt(3)*0/1}, {0/1+sqrt(3)*4/5}) {};
  \node[circle,scale=0.6,draw,fill=black] (c2v02) at ({2/3+sqrt(3)*0/1}, {0/1+sqrt(3)*2/3}) {};
  \node[circle,scale=0.6,draw,fill=white] (c0v01) at ({8/3+sqrt(3)*0/1}, {0/1+sqrt(3)*0/1}) {};
  \node[circle,scale=0.6,draw,fill=lightgray] (c1v01) at ({4/3+sqrt(3)*0/1}, {0/1+sqrt(3)*0/1}) {};
  \node[circle,scale=0.6,draw,fill=black] (c2v012) at ({2/1+sqrt(3)*0/1}, {0/1+sqrt(3)*2/5}) {};
  \node[circle,scale=0.6,draw,fill=white] (c0v01) at ({8/3+sqrt(3)*0/1}, {0/1+sqrt(3)*0/1}) {};
  \node[circle,scale=0.8,draw,fill=lightgray] (c1v1) at ({4/1+sqrt(3)*0/1}, {0/1+sqrt(3)*0/1}) {};
  \node[circle,scale=0.6,draw,fill=black] (c2v012) at ({2/1+sqrt(3)*0/1}, {0/1+sqrt(3)*2/5}) {};
  \node[circle,scale=0.6,draw,fill=white] (c0v012) at ({12/5+sqrt(3)*0/1}, {0/1+sqrt(3)*4/5}) {};
  \node[circle,scale=0.6,draw,fill=lightgray] (c1v012) at ({8/5+sqrt(3)*0/1}, {0/1+sqrt(3)*4/5}) {};
  \node[circle,scale=0.6,draw,fill=black] (c2v012) at ({2/1+sqrt(3)*0/1}, {0/1+sqrt(3)*2/5}) {};
  \node[circle,scale=0.6,draw,fill=white] (c0v012) at ({12/5+sqrt(3)*0/1}, {0/1+sqrt(3)*4/5}) {};
  \node[circle,scale=0.6,draw,fill=lightgray] (c1v012) at ({8/5+sqrt(3)*0/1}, {0/1+sqrt(3)*4/5}) {};
  \node[circle,scale=0.8,draw,fill=black] (c2v2) at ({2/1+sqrt(3)*0/1}, {0/1+sqrt(3)*2/1}) {};
  \node[circle,scale=0.6,draw,fill=white] (c0v012) at ({12/5+sqrt(3)*0/1}, {0/1+sqrt(3)*4/5}) {};
  \node[circle,scale=0.8,draw,fill=lightgray] (c1v1) at ({4/1+sqrt(3)*0/1}, {0/1+sqrt(3)*0/1}) {};
  \node[circle,scale=0.6,draw,fill=black] (c2v012) at ({2/1+sqrt(3)*0/1}, {0/1+sqrt(3)*2/5}) {};
  \node[circle,scale=0.6,draw,fill=white] (c0v012) at ({12/5+sqrt(3)*0/1}, {0/1+sqrt(3)*4/5}) {};
  \node[circle,scale=0.8,draw,fill=lightgray] (c1v1) at ({4/1+sqrt(3)*0/1}, {0/1+sqrt(3)*0/1}) {};
  \node[circle,scale=0.6,draw,fill=black] (c2v12) at ({10/3+sqrt(3)*0/1}, {0/1+sqrt(3)*2/3}) {};
  \node[circle,scale=0.6,draw,fill=white] (c0v012) at ({12/5+sqrt(3)*0/1}, {0/1+sqrt(3)*4/5}) {};
  \node[circle,scale=0.6,draw,fill=lightgray] (c1v12) at ({8/3+sqrt(3)*0/1}, {0/1+sqrt(3)*4/3}) {};
  \node[circle,scale=0.6,draw,fill=black] (c2v12) at ({10/3+sqrt(3)*0/1}, {0/1+sqrt(3)*2/3}) {};
  \node[circle,scale=0.6,draw,fill=white] (c0v012) at ({12/5+sqrt(3)*0/1}, {0/1+sqrt(3)*4/5}) {};
  \node[circle,scale=0.6,draw,fill=lightgray] (c1v12) at ({8/3+sqrt(3)*0/1}, {0/1+sqrt(3)*4/3}) {};
  \node[circle,scale=0.8,draw,fill=black] (c2v2) at ({2/1+sqrt(3)*0/1}, {0/1+sqrt(3)*2/1}) {};
  \node[circle,scale=0.6,draw,fill=white] (c0v02) at ({4/3+sqrt(3)*0/1}, {0/1+sqrt(3)*4/3}) {};
  \node[circle,scale=0.6,draw,fill=lightgray] (c1v012) at ({8/5+sqrt(3)*0/1}, {0/1+sqrt(3)*4/5}) {};
  \node[circle,scale=0.6,draw,fill=black] (c2v02) at ({2/3+sqrt(3)*0/1}, {0/1+sqrt(3)*2/3}) {};
  \node[circle,scale=0.6,draw,fill=white] (c0v02) at ({4/3+sqrt(3)*0/1}, {0/1+sqrt(3)*4/3}) {};
  \node[circle,scale=0.6,draw,fill=lightgray] (c1v012) at ({8/5+sqrt(3)*0/1}, {0/1+sqrt(3)*4/5}) {};
  \node[circle,scale=0.8,draw,fill=black] (c2v2) at ({2/1+sqrt(3)*0/1}, {0/1+sqrt(3)*2/1}) {};
  \draw (c0v0) -- (c1v01);
  \draw (c0v0) -- (c1v012);
  \draw (c0v0) -- (c2v012);
  \draw (c0v0) -- (c2v02);
  \draw (c0v01) -- (c1v01);
  \draw (c0v01) -- (c1v1);
  \draw (c0v01) -- (c2v012);
  \draw (c0v012) -- (c1v012);
  \draw (c0v012) -- (c1v1);
  \draw (c0v012) -- (c1v12);
  \draw (c0v012) -- (c2v012);
  \draw (c0v012) -- (c2v12);
  \draw (c0v012) -- (c2v2);
  \draw (c0v02) -- (c1v012);
  \draw (c0v02) -- (c2v02);
  \draw (c0v02) -- (c2v2);
  \draw (c1v01) -- (c2v012);
  \draw (c1v012) -- (c2v012);
  \draw (c1v012) -- (c2v02);
  \draw (c1v012) -- (c2v2);
  \draw (c1v1) -- (c2v012);
  \draw (c1v1) -- (c2v12);
  \draw (c1v12) -- (c2v12);
  \draw (c1v12) -- (c2v2);
  \coordinate (f1) at (1.2,1.65);
  \node (f1') at (-1.5, 1.75) {$\gamma = (\{\circ, \bullet\}, \{\circ, \textcolor{lightgray}{\bullet}, \bullet\})$};
  \draw[->] (node cs:name=f1',anchor=east) to (f1);
  \coordinate (f2) at (2,{2*sqrt(3)/3});
  \node (f2') at (-1.5, 0.5) {$\gamma = (\{\circ, \textcolor{lightgray}{\bullet}, \bullet\})$};
  \draw[->] (node cs:name=f2',anchor=east) to (f2);
  \coordinate (f3) at (1.6,2.3);
  \node (f3') at (-1.5,3) {$\gamma = (\{\bullet\},\{\circ, \bullet\}, \{\circ, \textcolor{lightgray}{\bullet}, \bullet\})$};
  \draw[->] (node cs:name=f3',anchor=east) to (f3);
  \coordinate (f4) at (2,2);
  \node (f4') at (5.5,1.65) {$\gamma = (\{\bullet\}, \{\circ, \textcolor{lightgray}{\bullet}, \bullet\})$~~~~~~~};
  \draw[->] (node cs:name=f4',anchor=west) to (f4);
\end{tikzpicture}
\caption{Simplices of $\mathcal{IS}^1$, which is isomorphic to $\IIS{1}$,
  for the case $\Pi = \{\circ, \textcolor{lightgray}{\bullet}, \bullet\}$
  and all process have $0$ as input.}
\label{fig_iis1}
\end{figure}


The action model $\mathcal{IS}^1$ is illustrated
in Figure~\ref{fig_iis1}, for the case
$\Pi = \{\circ, \graybullet, \bullet\}$
and all process have $0$ as input.
For instance, let $X$ be a facet labeled with
$\gamma = (\{\bullet\},\{\circ, \bullet\},
  \{\circ, \graybullet, \bullet\})$.
Then, the views of each processes are as follows:
$view^1_{\circ} = \{(\circ, 0), (\bullet, 0)\}$ and
$view^1_{\graybullet} = \{(\circ, 0), (\graybullet, 0), (\bullet, 0)\}$,
$view^1_{\bullet} = \{(\bullet, 0)\}$.
This means
$X = \{(\circ, 0, \{(\circ, 0), (\bullet, 0)\}),
  \{(\graybullet, 0, \{(\circ, 0), (\graybullet, 0), (\bullet, 0)\}),
  \{(\bullet, 0, \{(\bullet, 0)\})\}$.

Let $\mathcal{I}$ be an input simplicial model.
The underlying simplicial complex of the product update model $\IIS{r}$
is equivalent to the one of the action model $\mathcal{IS}^r$
up to a simplicial map
$((a, i_a), (a, i_a, v_a)) \mapsto (a, i_a, v_a)$.

In what follows,
we write $(a, i_a, v_a)$, in abuse of notation, to mean a vertex
$((a, i_a), (a, i_a, v_a))$ of $\IIS{r}$.
We also write $I[\gamma_1, \dots, \gamma_r]$ for a facet
$\{(a, i_a, view^r_a(I, \gamma_1, \dots, \gamma_r))
\mid (a, i_a) \in I\}$.

\begin{definition}
  We define a map
  $\car \colon \bigcup_{r \geq 1}\Views^r \to \Views^1$ as
  \[
    \car(w) =
    \begin{cases}
      w, & \text{(if $w \in \Views^1$)} \\
      \bigcup_{p \in P} \car(v_p), & \text{(otherwise)}
    \end{cases}
  \]
  where $P \subseteq \Pi$ is a set of agents and
  $(v_p)_{p \in P}$ is a family of views
  such that $w = \{(p, v_p) \mid p \in P\}$.
\end{definition}

\begin{lemma} \label{lem_flatview}
  Let $I \in \facet{\mathcal{I}}$
  be a facet of an input simplicial model and
  $\gamma_1, \gamma_2, \ldots \in \mathrm{OP}_{\Pi}$
  be ordered partitions of $\Pi$,
  where $\gamma_1 = (C_1, \dots, C_l)$.
  For any $r \geq 1$ and $a \in \Pi$,
  the following holds:
  \[
    \{(p, i_p) \in I \mid p \in C_1\} \subseteq
      \car(view^r_a(I, \gamma_1, \dots, \gamma_r)) \subseteq I.
  \]

\end{lemma}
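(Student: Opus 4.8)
The plan is to prove both inclusions simultaneously by induction on $r \geq 1$, keeping the input facet $I$ and the first partition $\gamma_1 = (C_1, \dots, C_l)$ fixed while letting the agent $a$ and the later partitions vary. The driving observation is that in a single round of immediate snapshot every agent's view contains at least the smallest block of the ordered partition, and that $\car$ only ever records genuine inputs drawn from $I$; the induction simply propagates both facts through the rounds.

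For the base case $r = 1$, I would unfold the definitions directly. Writing $m(a) = \min\{j \mid a \in C_j\}$, we have $view^1_a(I, \gamma_1) = \{(p, i_p) \mid p \in C_{m(a)}\}$, which already lies in $\Views^1$, so $\car$ acts as the identity on it. Since $\emptyset \neq C_1 \subseteq C_{m(a)} \subseteq \Pi$ by the nesting $C_1 \subsetneq \cdots \subsetneq C_l = \Pi$, both inclusions $\{(p, i_p) \in I \mid p \in C_1\} \subseteq \{(p, i_p) \mid p \in C_{m(a)}\} \subseteq I$ are immediate.

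For the inductive step, suppose the claim holds for $r$ and consider $\gamma_{r+1} = (D_1, \dots, D_{l'})$ with $m'(a) = \min\{j \mid a \in D_j\}$. Since $r + 1 \geq 2$, the view $view^{r+1}_a(I, \gamma_1, \dots, \gamma_{r+1}) = \{(p, view^r_p(I, \gamma_1, \dots, \gamma_r)) \mid p \in D_{m'(a)}\}$ is not in $\Views^1$, so $\car$ takes the recursive branch and $\car(view^{r+1}_a(\dots)) = \bigcup_{p \in D_{m'(a)}} \car(view^r_p(I, \gamma_1, \dots, \gamma_r))$. Applying the induction hypothesis to each inner view $view^r_p$ (which retains the same first partition $\gamma_1$) bounds each summand between $\{(q, i_q) \in I \mid q \in C_1\}$ and $I$. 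Taking the union over $p \in D_{m'(a)}$ yields the upper bound $\subseteq I$ at once, and yields the lower bound because $D_{m'(a)}$ is non-empty, so at least one summand already contains $\{(q, i_q) \in I \mid q \in C_1\}$.

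I expect the only delicate point to be bookkeeping rather than mathematics: one must keep the lower bound pinned to $C_1$ of the very first round $\gamma_1$ and check that this is precisely what the induction hypothesis supplies for the inner views $view^r_p(I, \gamma_1, \dots, \gamma_r)$, since they share that first partition. One should also confirm that the block selected in each round is non-empty, which follows since an ordered partition satisfies $\emptyset \neq C_1$ and every later block contains $C_1$, so that the union in the recursive branch of $\car$ is non-vacuous and the lower inclusion survives. The case split in the definition of $\car$ between $\Views^1$ and the higher $\Views^r$ is the other thing to handle cleanly, but it lines up exactly with the base case versus inductive step division.
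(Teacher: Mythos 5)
Your proposal is correct and follows essentially the same induction on $r$ as the paper: the base case unfolds $view^1_a$ and uses $C_1 \subseteq C_{j_a}$, and the inductive step expands $\car$ into a union over the block selected by the last partition and applies the induction hypothesis (with the shared first partition $\gamma_1$) to each summand, using non-emptiness of the block for the lower inclusion. If anything, your bookkeeping is slightly more careful than the paper's, which reuses the notation $C_{j_a}$ for the block of the final partition in the inductive step.
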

\begin{proof}
  We proceed by induction on $r$.
  Given an agent $a \in \Pi$,
  let us write $j_a$ to denote $\min\{j \mid a \in C_j\}$.

  For the base case $r = 1$, we have
  $\car(view^1_a(I, \gamma_1))
    = \{(q, i_q) \in I \mid q \in C_{j_a}\} \subseteq I$
  and $\{(p, i_p) \mid p \in C_1\}
    \subseteq \{(q, i_q) \mid q \in C_{j_a}\}$
  since $C_1 \subseteq C_j$ for any $1 \leq j \leq l$.

  For the case $r + 1$, we have
  \begin{align*}
    &\car(view^{r+1}_a(I, \gamma_1, \dots, \gamma_{r+1})) \\
      &= \car(\{(q, i_q,
        view^r_q(I, \gamma_1, \dots, \gamma_r)) \mid q \in C_{j_a}\}) \\
      &= \bigcup_{q \in C_{j_a}}
        \car(view^r_q(I, \gamma_1, \dots, \gamma_r)).
  \end{align*}
  By induction hypothesis, we obtain
  \begin{align*}
    \{(p, i_p) \mid p \in C_1\}
    &= \bigcup_{q \in C_{j_a}} \{(p, i_p) \mid p \in C_1\} \\
    &\subseteq \bigcup_{q \in C_{j_a}}
      \car(view^r_q(I, \gamma_1, \dots, \gamma_r)) \\
    &\subseteq \bigcup_{q \in C_{j_a}} I = I.
  \end{align*}
  Thus, $\{(p, i_p) \mid p \in C_1\} \subseteq
    \car(view^{r+1}_a(I, \gamma_1, \dots, \gamma_r))
    \subseteq I$ holds.
\end{proof}

The topological model of (iterated) immediate snapshot is
studied~\cite{DBLP:journals/jacm/HerlihyS99}
and it has been shown that the $k$-set agreement task is
unsolvable by $r$-round immediate snapshot
for any $r \geq 1$ and $1 \leq k < |\Pi|$.
It is desirable to have a logical obstruction,
but there is no such a formula.

The following theorem shows that $\pf$ admits no logical obstruction
to $k$-set agreement for any $k \geq 2$.

\begin{theorem} \label{th_non-existence_of_lo_SAk_ISr}
  There is no logical obstruction to the solvability of
  $\mathcal{SA}_k$ by $\mathcal{IS}^r$ in $\pf$,
  whenever $k \geq 2$.

\end{theorem}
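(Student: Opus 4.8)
The plan is to exhibit a \emph{total simulation} $R \subseteq \facet{\IIS{r}} \times \facet{\ISA{k}}$ of $\IIS{r}$ by $\ISA{k}$ and then appeal to Corollary~\ref{cor_knowledge_gain_via_K-simulation}: once such an $R$ exists, every positive formula true throughout $\ISA{k}$ is true throughout $\IIS{r}$, so no $\varphi \in \pf$ can satisfy both $\ISA{k} \models \varphi$ and $\IIS{r} \not\models \varphi$, i.e.\ no logical obstruction exists. The point to stress at the outset is the direction: because $\mathcal{SA}_k$ is unsolvable by $\mathcal{IS}^r$, there is no morphism $\IIS{r} \to \ISA{k}$, so the graph-of-a-morphism construction cannot hand us the simulation for free. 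The entire gain comes from simulation being strictly weaker than a morphism, and the relation must be built explicitly.

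First I would define $R$ by a \emph{carrier-bounded decision} rule. For a protocol facet $X = I[\gamma_1,\dots,\gamma_r]$, write $v_a = view^r_a(I,\gamma_1,\dots,\gamma_r)$ for the view of $a$, and declare $X \mathrel{R} X'$ exactly when $X' = (I, decide)$ is a facet of $\ISA{k}$ over the \emph{same} input assignment $I$ such that, for every agent $a$, the decided value $decide(a)$ is the input of some vertex of $\car(v_a)$. The condition \atom\ is then immediate, since $X$ and $X'$ carry identical input labels $\{\ip_a^{i_a}\}_{a\in\Pi}$. For totality, I would use Lemma~\ref{lem_flatview}: every carrier $\car(v_a)$ contains $\{(p,i_p)\in I \mid p \in C_1\}$, where $C_1$ is the first block of $\gamma_1$; fixing any $p_0 \in C_1$, the constant assignment $decide(a)=i_{p_0}$ lies in every carrier and uses a single value, hence $\lvert decide(\Pi)\rvert = 1 \le k$, producing a task facet related to $X$. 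Thus $R$ is total.

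The crux is \forth, and this is where $k \ge 2$ enters. Suppose $X \mathrel{R} X'$ with $X' = (I, decide)$ and $X \sim_a Y$ for some agent $a$ and protocol facet $Y = I_Y[\delta_1,\dots,\delta_r]$. Since $\sim_a$ means $X$ and $Y$ share their $a$-vertex, they agree on $a$'s input and on $a$'s view, so $\car(v_a^X) = \car(v_a^Y)$ and in particular $decide(a)$ is still the input of some vertex of $a$'s carrier with respect to $Y$. I would then build a witness $Y' = (I_Y, decide')$ by setting $decide'(a) = decide(a)$ and $decide'(c) = i_{q_0}$ for every $c \neq a$, where $q_0$ is any agent in the first block of $\delta_1$ and $i_{q_0}$ its input in $I_Y$. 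By Lemma~\ref{lem_flatview} again, $i_{q_0}$ belongs to $\car(v_c^Y)$ for all $c$, so $decide'$ respects every carrier and hence $Y \mathrel{R} Y'$; moreover $decide'$ takes at most the two values $decide(a)$ and $i_{q_0}$, so $\lvert decide'(\Pi)\rvert \le 2 \le k$ and $Y'$ is a genuine facet of $\ISA{k}$. Finally $decide'(a) = decide(a)$ together with the shared $a$-input gives $X' \sim_a Y'$, completing \forth\ and the construction.

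The main obstacle is precisely this Forth step: one is forced to pin $a$'s decision to the value dictated by $X'$, yet must still assign every other process a value inside its own carrier while keeping the total number of distinct decisions within the $k$-bound. A single, globally consistent, view-indexed decision function of this kind cannot exist — it would solve $k$-set agreement — so the argument genuinely needs the extra freedom of a relation. The two ingredients that make the local patch work are the common carrier element $i_{p_0}$ (resp.\ $i_{q_0}$) supplied by Lemma~\ref{lem_flatview} and the availability of a \emph{second} decision value, which is exactly what $k \ge 2$ guarantees.
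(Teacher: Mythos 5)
Your proposal is correct and follows essentially the same route as the paper: the same carrier-bounded relation (matching inputs plus each decision lying in the agent's carrier), the same \forth{} witness that pins the pivot agent's decision and sends everyone else to the input of an agent in the first block of the ordered partition via Lemma~\ref{lem_flatview}, and the same use of $k \geq 2$ to bound the number of decision values. No substantive differences to report.
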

\begin{proof}
  Let $S \subseteq \facet{\IIS{r}} \times \facet{\ISA{k}}$
  be a binary relation defined as
  \begin{alignat*}{2}
    \{(a, i_a, v_a) \mid a \in \Pi\} \mathrel{S}
    \{(a, i'_a, d_a) \mid a \in \Pi\}
    \iff
    &\text{(1)}& &\text{ for all $a \in \Pi$, $i_a = i'_a$ and} \\
    &\text{(2)}& &\text{ for all $a \in \Pi$, there exists $a' \in \Pi$} \\
    &&&\text{such that $(a', d_a) \in \car(v_a)$.}
  \end{alignat*}
  By Corollary~\ref{cor_knowledge_gain_via_K-simulation},
  it suffices to show that $S$ is a total simulation.

  Let us first show $S$ is a simulation.

  \atom: The labeling of facets is given by
  $l_{\IIS{r}}(\{(a, i_a, v_a) \mid a \in \Pi\})
    = \{\ip^{i_a}_a \mid a \in \Pi\}$
  and $l_{\ISA{k}}(\{(a, i'_a, d_a) \mid a \in \Pi\})
    = \{\ip^{i'_a}_a \mid a \in \Pi\}$.
  Thus, {\atom} follows from the definition of $S$.

  \forth: Suppose $a_0 \in \Pi$ is an agent,
  $X, Y$ are facets of $\IIS{r}$ and
  $X'$ is a facet of $\ISA{k}$ such that
  $X \mathrel{S} X'$ and $X \sim_{a_0} Y$.
  Let $I \in \facet{\I}$ be a facet of the input simplicial complex and
  $\gamma_1, \dots, \gamma_r$ be ordered partitions such that
  $Y = I[\gamma_1, \dots, \gamma_r]$,
  where $\gamma_1 = (C_1, \dots, C_j)$.
  Fix an agent $a_1 \in C_1$.

  For all $a \in \Pi$, let us write $input_X(a)$ for an input value and
  $view_X(a)$ for a view in $\Views^r$
  such that $(a, input_X(a), view_X(a))$ is an $a$-colored vertex in $X$.
  Similarly, let us write $(a, input_Y(a), view_Y(a))$
  for an $a$-colored vertex in $Y$,
  and $(a, input_{X'}(a), decide_{X'}(a))$
  for an $a$-colored vertex in $X'$.

  We define a facet $Y' \in \ISA{k}$ by
  \[
    Y' = \{(a_0, input_Y(a_0), decide_{X'}(a_0))\}
      \cup \{(a, input_Y(a), input_Y(a_1)) \mid a \in \Pi \setminus \{a_0\}\}.
  \]
  Let us write $(a, input_{Y'}(a), decide_{Y'}(a))$ for
  an $a$-colored vertex in $Y'$ for all $a \in \Pi$ as above.
  Then, we have
  $input_{Y'}(a) = input_Y(a)$ for all $a \in \Pi$,
  $decide_{Y'}(a_0) = decide_{X'}(a_0)$ and
  $decide_{Y'}(a) = input_Y(a_1)$ for all $a \in \Pi \setminus \{a_0\}$.

  We prove that
  $Y' \in \facet{\ISA{k}}$, $X' \sim_{a_0} Y'$ and $Y \mathrel{S} Y'$.

  It is clear that $|decide_{Y'}(\Pi)| \leq 2 \leq k$ by definition.
  Suppose $Y$ and $Y'$ satisfies both (1) and (2).
  Then for all $a \in \Pi$,
  $input_Y(a) = input_{Y'}(a)$ and
  there exists $a' \in \Pi$
  such that $(a', decide_{Y'}(a)) \in \car(view_Y(a))$.
  Because
  $\car(view_Y(a)) = \car(view_a^r(I, \gamma_1, \dots, \gamma_r)) \subseteq I$
  by Lemma~\ref{lem_flatview},
  $(a', decide_{Y'}(a)) \in I$,
  where $I = \{(a, input_Y(a)) \mid a \in \Pi\}
    = \{(a, input_{Y'}(a)) \mid a \in \Pi\}$.
  Hence, we have $I \models \bigwedge_{a \in \Pi}
    \bigvee_{b \in \Pi} \ip_b^{decide_{Y'}(a)}$.
  These imply $Y' \in \facet{\ISA{k}}$
  if $Y$ and $Y'$ satisfies both (1) and (2).
  Therefore,
  $Y' \in \facet{\ISA{2}}$ holds because $Y$ and $Y'$ satisfies
  the conditions (1) and (2), as shown below.

  Since $X \mathrel{S} X'$, $X \sim_{a_0} Y$ and
  $input_{Y'}(a) = input_Y(a)$ for all $a \in \Pi$,
  $input_{X'}(a_0) = input_X(a_0) = input_Y(a_0) = input_{Y'}(a_0)$.
  Recall that $decide_{Y'}(a_0) = decide_{X'}(a_0)$ by definition.
  Thus, we have
  $(a_0, input_{X'}(a_0), decide_{X'}(a_0))
    = (a_0, input_{Y'}(a_0), decide_{Y'}(a_0)) \in X' \cap Y'$.
  Hence, $X' \sim_{a_0} Y'$ holds.

  Since $input_{Y'}(a) = input_Y(a)$ for all $a \in \Pi$ by definition,
  $Y$ and $Y'$ satisfies (1).
  We prove $Y$ and $Y'$ satisfies (2).
  For the case $a \neq a_0$, we have
  $(a_1, decide_{Y'}(a)) = (a_1, input_Y(a_1)) \in \car(view_Y(a))$
  by Lemma~\ref{lem_flatview}.
  For the other case $a = a_0$,
  since $decide_{Y'}(a_0) = decide_{X'}(a_0)$,
  $X \mathrel{S} X'$ and $X \sim_{a_0} Y$,
  there exists $a'_0 \in \Pi$ such that
  $(a'_0, decide_{Y'}(a_0)) = (a'_0, decide_{X'}(a_0)) \in
    \car(view_X(a_0)) = \car(view_Y(a_0))$.
  Consequently, we have for all $a \in \Pi$, there exists $a' \in \Pi$
  such that $(a', decide_{Y'}(a)) \in \car(view_Y(a))$, i.e.\
  $Y$ and $Y'$ satisfies (2), and $Y \mathrel{S} Y'$.
  Therefore, {\forth} is satisfied.

  To show the totality,
  suppose $X$ is a facet of $\facet{\ISA{k}}$.
  Let $I = \{(a, input(a)) \mid a \in \Pi\}$
  be a facet of the input model $\I$ and
  $\gamma_1, \dots, \gamma_r$ be ordered partitions
  such that $X = I[\gamma_1, \dots, \gamma_r]$,
  where $\gamma_1 = (C_1, \dots, C_j)$.
  Fix an agent $a_1 \in C_1$.
  We define a facet $X' \in \ISA{k}$
  as $X' = \{(a, input(a), input(a_1)) \mid a \in \Pi \}$.
  By Lemma~\ref{lem_flatview},
  $(a_1, input(a_1)) \in \{(p, input(p)) \mid p \in C_1\}
    \subseteq \car(view_a^r(I, \gamma_1, \dots, \gamma_r))$
  for all $a \in \Pi$.
  Thus, we have $X \mathrel{S} X'$.
  This implies that $S$ is a total simulation.
\end{proof}

We have shown the non-existence of logical obstruction in $\pf$
for cases where $k \geq 2$.
For the case $k = 1$,~\cite{DBLP:journals/iandc/GoubaultLR21}
have shown that there exists a logical obstruction
to the solvability in $\mathcal{L}_{CK}^+$,
an epistemic logic language of positive formulas
that extends $\mathcal{L}_K^+$
with common knowledge operator.
It is not difficult to see that $k$-set agreement task ($k \geq 2$)
admits no logical obstruction in $\mathcal{L}_{CK}^+$
because the simulation technique presented in this section is
compatible with common knowledge operators.


\section{Non-existence of logical obstructions
with distributed knowledge operator}
\label{sec_d-simulation}

  In this section, we consider a language of epistemic logic $\mathcal{L}_D$,
  which is an extension of $\mathcal{L}_K$ with distributed operator
  $\mathrm{D}_A \varphi$,
  and its sublanguage $\mathcal{L}_D^+$ of positive formulas:
  \begin{align*}
    \mathcal{L}_D: \quad
    \varphi &::= p \mid \lnot \varphi
    \mid (\varphi \land \varphi) \mid \mathrm{D}_A \varphi  \\
    \mathcal{L}_D^+:  \quad
    \varphi &::= p \mid \lnot p \mid (\varphi \lor \varphi)
    \mid (\varphi \land \varphi) \mid \mathrm{D}_A \varphi
  \end{align*}
  where $p \in \At$ is an atomic proposition and
  $A \subseteq \Pi$ is a set of agents.

  Given a formula $\varphi \in \mathcal{L}_D$ and
  a facet $X \in \facet{M}$ of a simplicial model $M$,
  we define the \emph{truth} of $\varphi$ at $X$,
  written $M, X \models \varphi$, as below by induction on $\varphi$.
  \begin{align*}
    &M, X \models p
    && \text{if $p \in l(X)$}. \\
    &M, X \models \lnot\varphi
    && \text{if $M, X \not\models \varphi$}. \\
    &M, X \models \varphi_1 \land \varphi_2
    && \text{if $M, X \models \varphi_1$ and $M, X \models \varphi_2$}. \\
    &M, X \models \mathrm{D}_A\varphi
    && \text{if for all Y $\in \facet{M}$,
    $A \subseteq \chi(X \cap Y) \implies M, Y \models \varphi$}.
  \end{align*}

  Notice that $\mathrm{K}_a \varphi$ is
  a special case of $\mathrm{D}_A \varphi$ where
  $A$ is a singleton set $\{a\}$.

\subsection{Knowledge gain and
a logical obstruction to the solvability of
$\mathcal{SA}_k$ by $\mathcal{IS}^1$
with distributed knowledge}

The knowledge gain theorem holds,
even the additional modality of distributed knowledge.

\begin{theorem}[Knowledge gain with distributed knowledge operator]
  Suppose $f \colon M \to M'$ be a morphism between simplicial models.
  Then, for any facet $X \in \facet{M}$
  and positive formula $\varphi \in \pfd$,
  $M', f(X) \models \varphi$ implies $M, X \models \varphi$.

\end{theorem}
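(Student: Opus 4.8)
The plan is to mirror the proof of the original knowledge gain theorem (and of Theorem~\ref{th_knowledge_gain_via_K-simulation}), proceeding by induction on the structure of the formula $\varphi \in \pfd$. The only genuinely new ingredient is the clause for the distributed knowledge operator $\mathrm{D}_A$; the remaining cases are handled exactly as in the $\pf$ setting, so I would dispatch them quickly and concentrate on the modal step.

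For the base cases $\varphi = p$ and $\varphi = \lnot p$, I would invoke the labeling-preservation property of the morphism, namely $l'(f(X)) = l(X)$, which gives $p \in l'(f(X)) \iff p \in l(X)$ and hence $M', f(X) \models \varphi \iff M, X \models \varphi$. The conjunction and disjunction cases follow immediately from the induction hypothesis, since both connectives are evaluated pointwise at the facet and the language $\pfd$ contains no negation except in front of atoms.

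The main case is $\varphi = \mathrm{D}_A \psi$. Assume $M', f(X) \models \mathrm{D}_A \psi$; I must show $M, X \models \mathrm{D}_A \psi$, that is, $M, Y \models \psi$ for every $Y \in \facet{M}$ with $A \subseteq \chi(X \cap Y)$. Fixing such a $Y$, for each $a \in A$ there is an $a$-colored vertex $v \in X \cap Y$; since $f$ preserves colors, $f(v) \in f(X) \cap f(Y)$ with $\chi'(f(v)) = \chi(v) = a$, whence $A \subseteq \chi'(f(X) \cap f(Y))$. This is the crucial observation: the color-preserving morphism transports the distributed-indistinguishability condition from $M$ to $M'$. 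Applying the hypothesis $M', f(X) \models \mathrm{D}_A \psi$ to the facet $f(Y)$ then yields $M', f(Y) \models \psi$, and the induction hypothesis (for $\psi$ at the facet $Y$) gives $M, Y \models \psi$, as required.

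The step I expect to be the crux is precisely this lifting of $A \subseteq \chi(X \cap Y)$ to $A \subseteq \chi'(f(X) \cap f(Y))$, which is the distributed-knowledge analogue of the $\forth$ condition used for single-agent knowledge; it is what makes the implication go through, and it relies essentially on the morphism being color-preserving rather than merely simplicial. An alternative route, consistent with the paper's methodology, would be to introduce a $\mathrm{D}_A$-variant of simulation, verify that $graph(f)$ satisfies its $\forth$-style clause, and then derive the theorem as a corollary; but since the statement is phrased directly for morphisms, the direct induction above is the shortest path and specializes to the earlier theorem by taking $A$ to be a singleton.
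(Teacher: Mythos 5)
Your proof is correct and is essentially the argument the paper relies on: the paper states this theorem without an explicit proof, but its later machinery (showing $graph(f)$ is a total D-simulation via exactly your color-preservation step, then proving knowledge gain by induction with the $\mathrm{D}_A$ clause) is precisely your direct induction, factored into two lemmas. The only point worth making explicit is that $f(Y)$ is again a facet of $M'$ (it has $|\Pi|$ distinctly colored vertices and $M'$ is pure), which is needed to apply the semantics of $\mathrm{D}_A$ at $f(X)$ to the world $f(Y)$.
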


\begin{corollary}
  If there exists a morphism $f \colon M \to M'$,
  for any positive formula $\varphi \in \pfd$,
  $M' \models \varphi$ implies $M \models \varphi$.

\end{corollary}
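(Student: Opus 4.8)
The plan is to derive this global statement directly from the pointwise Knowledge gain theorem with distributed knowledge operator, exactly as the analogous corollary for $\pf$ is obtained from its pointwise predecessor. First I would fix the reading of the validity notation: $M \models \varphi$ abbreviates ``$M, X \models \varphi$ for every facet $X \in \facet{M}$'', and likewise for $M'$. Under this reading the corollary is just the universal closure of the theorem over all facets, so the argument amounts to applying the theorem at each facet of $M$ in turn.

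Concretely, I would fix an arbitrary facet $X \in \facet{M}$ and show $M, X \models \varphi$. The one observation that needs checking is that $f(X)$ is not merely a simplex of $M'$ but in fact a facet, since the theorem is stated for facets $f(X) \in \facet{M'}$. This holds because both $M$ and $M'$ are pure of dimension $|\Pi| - 1$: the facet $X$ consists of exactly $|\Pi|$ vertices realizing all colors of $\Pi$, and since a morphism preserves coloring, $f(X)$ again realizes all $|\Pi|$ colors and so has dimension $|\Pi| - 1$, which is maximal and hence a facet.

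Granting this, the hypothesis $M' \models \varphi$ gives $M', f(X) \models \varphi$, and the Knowledge gain theorem with distributed knowledge operator then yields $M, X \models \varphi$. Since $X \in \facet{M}$ was arbitrary, $M \models \varphi$ follows. The only step that is not purely mechanical is the verification that $f$ maps facets to facets, but this is forced by purity together with color preservation, so I anticipate no genuine obstacle; the corollary is essentially a uniform restatement of the theorem, differing from the earlier $\pf$ corollary only in the ambient language $\pfd$.
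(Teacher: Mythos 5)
Your proof is correct and follows the only natural route: the corollary is the universal closure over facets of the pointwise knowledge gain theorem with distributed knowledge, which the paper also states without further argument. Your extra check that $f(X)$ is a facet (forced by purity and color preservation) is a sound and worthwhile detail, but the approach is essentially the same as the paper's.
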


\begin{corollary}
  Let $\mathcal{I}$ be an input simplicial model,
  $T$ be a task and $A$ be an action model.
  If there exists a positive formula $\varphi \in \pfd$
  such that $\mathcal{I}[T] \models \varphi$ and
  $\mathcal{I}[A] \not \models \varphi$,
  the task $T$ is not solvable by $A$.

\end{corollary}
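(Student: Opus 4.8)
The plan is to prove the statement by contraposition, in exact parallel with its $\pf$-analogue established in Section~\ref{sec_k-simulation}. By the definition of solvability, the task $T$ is solvable by $A$ precisely when there is a morphism $f \colon \mathcal{I}[A] \to \mathcal{I}[T]$. So I would assume, toward a contradiction, that such a morphism exists, and then show that no positive formula can separate $\mathcal{I}[T]$ from $\mathcal{I}[A]$ in the way a logical obstruction demands.

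Concretely, first I would instantiate the immediately preceding corollary---the distributed-knowledge knowledge-gain transfer---with $M = \mathcal{I}[A]$ and $M' = \mathcal{I}[T]$. Since $f \colon \mathcal{I}[A] \to \mathcal{I}[T]$ is a morphism, that corollary yields, for every positive formula $\varphi \in \pfd$, the implication $\mathcal{I}[T] \models \varphi \implies \mathcal{I}[A] \models \varphi$. Applying this to the obstruction candidate $\varphi$ and using the hypothesis $\mathcal{I}[T] \models \varphi$, I obtain $\mathcal{I}[A] \models \varphi$, which directly contradicts the hypothesis $\mathcal{I}[A] \not\models \varphi$. Hence no morphism $f \colon \mathcal{I}[A] \to \mathcal{I}[T]$ can exist, i.e.\ $T$ is not solvable by $A$.

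There is essentially no obstacle peculiar to this corollary: it is a one-line consequence of the preceding transfer corollary, and the only point at which the distributed-knowledge setting enters is in replacing the plain knowledge-gain machinery of $\pf$ by its $\pfd$ counterpart. All the genuine work has already been discharged in the knowledge-gain theorem for $\mathrm{D}_A$ (and its first corollary), which I may assume; in particular, the soundness of the transfer across the new modality $\mathrm{D}_A$ rests on the coloring-preservation property of morphisms, exactly as the $\mathrm{K}_a$ case did, so that an $A$-indistinguishability witnessed in the target is reflected back into the source. I would therefore keep the proof short, merely citing the preceding corollary and unwinding the definition of solvability.
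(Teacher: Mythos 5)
Your proof is correct and is exactly the argument the paper intends (the corollary is stated there without proof, as an immediate consequence of the preceding knowledge-gain corollary for $\mathcal{L}_D^+$): assume a morphism $f \colon \mathcal{I}[A] \to \mathcal{I}[T]$ witnessing solvability, transfer $\mathcal{I}[T] \models \varphi$ to $\mathcal{I}[A] \models \varphi$, and derive a contradiction. No issues.
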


Thus, we also call a positive formula $\varphi$
such as the one in the above corollary,
a \emph{logical obstruction to the solvability of $T$ by $A$},
or a \emph{logical obstruction}, when $T$ and $A$ are clear from the context.

The following theorem on the unsolvability of the set agreement task
by a single round immediate snapshot is due to Nishida~\cite{Nishida:Msc20},
where he provided a concrete logical obstruction
in the language $\mathcal{L}_D^{+}$.
It has also been shown that the same formula applies to show
the unsolvability of the set agreement task
by a single round atomic snapshot protocol~\cite{DBLP:journals/corr/abs-2011-13630}.

\begin{theorem}
  For any $k < |\Pi|$,
  there exists a logical obstruction to the solvability of
  $\mathcal{SA}_k$ by $\mathcal{IS}^1$ in $\pfd$.

\end{theorem}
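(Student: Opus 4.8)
The plan is to produce a concrete positive formula $\varphi \in \pfd$ and verify the two conditions of the knowledge-gain corollary with distributed knowledge: $\ISA{1} \models \varphi$ while $\IIS{1} \not\models \varphi$. Together these rule out any morphism $\IIS{1} \to \ISA{1}$, hence the unsolvability, so $\varphi$ is the desired logical obstruction. Throughout I would fix the input configuration in which the agents hold pairwise distinct inputs, i.e.\ the input facet $I_0 = \{(a,a) \mid a \in \Pi\}$ (recall $V^{in} = \Pi$), since this is the configuration on which the $k$-set constraint bites when $k < |\Pi|$.

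The observation driving the construction is that the atomic propositions speak only of inputs, so a formula can separate the two models solely through their differing indistinguishability relations: in $\ISA{1}$ one has $X \sim_a Y$ exactly when $a$ carries the same input and the same decision in $X$ and $Y$, whereas in $\IIS{1}$ one has $X \sim_a Y$ exactly when $a$ carries the same input and the same view. Because a decision ranges over at most $k$ values while a snapshot view records precisely which agents $a$ observed, the protocol grants agents strictly finer knowledge of the surrounding inputs. I would therefore take $\varphi$ to be Nishida's obstruction: a formula built from the input atoms $\ip_a^v$ by disjunction and nested distributed-knowledge operators $\mathrm{D}_A$ over suitably chosen subsets $A \subseteq \Pi$, designed to assert that, from the standpoint of distributed knowledge, no labelling of the agents by at most $k$ decision values can be made coherent across all the indistinguishability links emanating from the all-distinct configuration.

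Verifying $\ISA{1} \models \varphi$ is a counting argument: at every facet of $\ISA{1}$ the decision map takes at most $k$ distinct values, and the coarse relation $\sim_a$ (agreement of $a$'s input--decision pair only) is too weak to pin the inputs down, so the distributed-knowledge subformulas are forced true by pigeonhole over the $\leq k$ decision values. Verifying $\IIS{1} \not\models \varphi$ amounts to exhibiting one facet $I_0[\gamma]$ for a well-chosen ordered partition $\gamma$ of $\Pi$ at which the formula fails; here Lemma~\ref{lem_flatview} is the essential tool, since it identifies $\car(view^1_a(I_0,\gamma))$ with the genuine set of inputs that $a$ has observed, thereby turning the refined view structure into concrete knowledge of inputs that defeats the distributed-knowledge claim.

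The main obstacle is calibrating the formula so that it simultaneously (i) is positive and lies in $\pfd$, (ii) is genuinely forced in $\ISA{1}$ by the at-most-$k$-values constraint rather than by any accidental feature, and (iii) is refuted at a single reachable facet of $\IIS{1}$. The delicate point is step (ii): translating the purely combinatorial $k$-set bound into a statement about distributed knowledge of inputs, and checking that the chosen nesting of $\mathrm{D}_A$ operators matches the connectivity of the task complex. Once the formula is fixed, the protocol-side refutation via Lemma~\ref{lem_flatview} is comparatively routine.
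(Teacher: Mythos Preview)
The paper does not prove this theorem itself; it is stated as a known result due to Nishida and attributed to~\cite{Nishida:Msc20}. Your plan---exhibit a concrete positive formula in $\pfd$ and verify $\I[\mathcal{SA}_k]\models\varphi$ while $\IIS{1}\not\models\varphi$---is exactly what the statement demands and is the method Nishida used, so the overall strategy is sound and matches what the paper invokes.

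One slip: you write $\ISA{1}$ throughout where you plainly intend $\ISA{k}$ (your own text speaks of ``at most $k$ decision values'' and ``the $k$-set constraint''). More substantively, the proposal is only a programme rather than a proof: you appeal to ``Nishida's obstruction'' without ever writing it down, and the verification sketches (``a counting argument'', ``pigeonhole over the $\le k$ decision values'') are too coarse to pin down which nesting of $\mathrm{D}_A$ operators actually works, or to confirm the result lies in $\pfd$. Lemma~\ref{lem_flatview} is also doing no real work here, since for $r=1$ the carrier of a view is the view itself. Until the formula is displayed and both validity checks are carried out explicitly, what you have is an outline of where the proof lives, not the proof.
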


\subsection{Knowledge gain via D-simulation}

In the following, we refine the definition of simulation
so that Theorem~\ref{th_knowledge_gain_via_K-simulation}
is valid if positive formulas have
distributed knowledge operator.

\begin{definition}[D-simulation]
  A binary relation over facets of simplicial models
  $R \subseteq \facet{M} \times \facet{M'}$ is
  called a \emph{D-simulation of $M$ by $M'$},
  if the following conditions hold.
  \begin{description}
    \item[(Atom)]
    For all $X \in \facet{M}$ and $X' \in \facet{M'}$,
    $X \mathrel{R} X'$ implies $l(X) = l'(X')$.
    \item[(D-Forth)]
    For all $X, Y \in \facet{M}$ and $X' \in \facet{M'}$,
    if $X \mathrel{R} X'$ holds,
    there exists $Y' \in \facet{M'}$ such that
    $Y \mathrel{R} Y'$ and $\chi(X \cap Y) \subseteq \chi'(X' \cap Y')$.
  \end{description}

  A D-simulation $S$ is called \emph{total}
  if for all $X \in \facet{M}$, there exists $X' \in \facet{M'}$
  such that $X \mathrel{S} X'$.
  For a facet $X \in \facet{M}$,
  we say that $S$ is \emph{not total at $X$} if
  there is no $X' \in \facet{M'}$ such that $X \mathrel{S} X'$.
  We occasionally write \emph{K-simulation}
  to refer to the simulation defined in Definition~\ref{def_K-simulation},
  and as well {\kforth} to refer to {\forth}.

\end{definition}

\begin{proposition}
  Suppose $f \colon M \to M'$ is a morphism between simplicial models.
  Then, a binary relation
  $graph(f) = \{(X, f(X)) \mid X \in \facet{M}\} \subseteq \rel$
  is a total D-simulation.

\end{proposition}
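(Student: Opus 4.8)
The plan is to verify directly that $graph(f)$ satisfies the three requirements of a total D-simulation, reusing only the defining properties of a morphism: it maps simplices to simplices, preserves the coloring, and preserves the labeling. The sole genuinely new ingredient relative to the analogous statement for K-simulations (established earlier in this section) is the condition \dforth, which replaces the single-color matching of \kforth with an inclusion between \emph{sets} of shared colors; everything else is literally the same argument.

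First I would unfold the relation: $X \mathrel{graph(f)} X'$ holds exactly when $X' = f(X)$. I would then observe that $f(X)$ is in fact a facet of $M'$, not merely a simplex. Indeed, $\chi|_X$ is injective and $f$ preserves coloring, so if $f(u) = f(v)$ for $u,v \in X$ then $\chi(u) = \chi'(f(u)) = \chi'(f(v)) = \chi(v)$, forcing $u = v$; hence $f$ is injective on $X$ and $f(X)$ has dimension $|\Pi| - 1$, which is maximal in the pure complex $M'$. This remark simultaneously settles totality, since $X \mathrel{graph(f)} f(X)$ for every $X \in \facet{M}$.

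For \atom{} I would simply invoke that $f$ preserves labeling vertex-wise, so that $l(X) = l'(f(X)) = l'(X')$ whenever $X \mathrel{graph(f)} X'$, exactly as in the K-simulation case. The substance of the proof is \dforth. Given $X \mathrel{graph(f)} X'$, i.e.\ $X' = f(X)$, and an arbitrary $Y \in \facet{M}$, I would take $Y' = f(Y)$, so that $Y \mathrel{graph(f)} Y'$ holds by definition of the graph. It then remains to check $\chi(X \cap Y) \subseteq \chi'(X' \cap Y')$. Here I would combine the elementary set-theoretic fact $f(X \cap Y) \subseteq f(X) \cap f(Y)$ with color preservation: for each $v \in X \cap Y$ we have $f(v) \in f(X) \cap f(Y)$ and $\chi'(f(v)) = \chi(v)$, so $\chi(v) \in \chi'(f(X) \cap f(Y))$; ranging over all such $v$ yields the desired inclusion.

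I do not expect any serious obstacle. The one point requiring care is conceptual rather than computational: \dforth{} asks us to reproduce the \emph{entire} shared-color set $\chi(X \cap Y)$ inside $\chi'(X' \cap Y')$, rather than a single shared color as in \kforth{}. This stronger demand is met uniformly, because the color-preservation identity $\chi'(f(v)) = \chi(v)$ applies to \emph{every} vertex $v$ of $X \cap Y$ at once, so no genuinely new idea beyond the K-simulation argument is needed.
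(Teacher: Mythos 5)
Your proof is correct and follows essentially the same route as the paper's: take $Y' = f(Y)$ and use color preservation on each vertex of $X \cap Y$ to obtain $\chi(X \cap Y) \subseteq \chi'(f(X) \cap f(Y))$. The only addition is your explicit check that $f(X)$ is a facet of $M'$, which the paper leaves implicit; it is a harmless and in fact welcome piece of bookkeeping.
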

\begin{proof}
  \atom: By the definition of morphism, $f$ preserves labeling,
  i.e. $l(X) = l'(f(X))$.

  \dforth: Let $X, Y \in \facet{M}$ be facets.
  It suffices to show that $\chi(X \cap Y) \subseteq \chi'(f(X) \cap f(Y))$.
  Suppose $a \in \chi(X \cap Y)$.
  Then there exists an $a$-colored vertex $v \in X \cap Y$.
  Since $f$ is a color-preserving map,
  we have $f(v) \in f(X) \cap f(Y)$ and $\chi'(f(v)) = \chi(v) = a$.
  This implies $a \in \chi'(f(X) \cap f(Y))$.

  It is clear that $graph(f)$ is total.
\end{proof}

\begin{theorem}
  Suppose $S \subseteq \facet{M} \times \facet{M'}$ is a D-simulation.
  Then, for any facets $(X, X') \in S$
  and positive formula $\varphi \in \pfd$,
  $M', X' \models \varphi$ implies $M, X \models \varphi$.
\end{theorem}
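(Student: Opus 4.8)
The plan is to mirror the proof of Theorem~\ref{th_knowledge_gain_via_K-simulation}, proceeding by induction on the structure of $\varphi \in \pfd$. The base case ($\varphi = p$ or $\varphi = \lnot p$) should follow immediately from the \atom{} condition, which guarantees $l(X) = l'(X')$, so that $p \in l(X) \iff p \in l'(X')$ and hence the truth of $p$ (and of $\lnot p$) transfers in both directions. The conjunction and disjunction cases are routine applications of the induction hypothesis, exploiting that $\pfd$ admits only positive Boolean combinations so that no negation needs to be pushed across the simulation.

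The only case requiring the refined \dforth{} condition is the modal case $\varphi = \mathrm{D}_A \psi$. Here I would assume $M', X' \models \mathrm{D}_A \psi$ and aim to show $M, X \models \mathrm{D}_A \psi$, that is, $M, Y \models \psi$ for every $Y \in \facet{M}$ with $A \subseteq \chi(X \cap Y)$. Fixing such a $Y$, I would apply \dforth{} to the pair $X \mathrel{S} X'$ together with this $Y$, obtaining a facet $Y' \in \facet{M'}$ with $Y \mathrel{S} Y'$ and the key color-containment $\chi(X \cap Y) \subseteq \chi'(X' \cap Y')$. Composing this with the hypothesis $A \subseteq \chi(X \cap Y)$ yields $A \subseteq \chi'(X' \cap Y')$, so the distributed-knowledge assumption at $X'$ gives $M', Y' \models \psi$. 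Since $(Y, Y') \in S$, the induction hypothesis then transfers this to $M, Y \models \psi$, closing the case.

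The step I expect to be the crux is exactly this modal case, and more precisely the role of \dforth{}: whereas the K-simulation argument only needed a single shared $a$-colored vertex to witness $X' \sim_a Y'$, here the entire set $A$ must be simultaneously realized within $X' \cap Y'$. The containment $\chi(X \cap Y) \subseteq \chi'(X' \cap Y')$ is engineered precisely to supply this, ensuring that whatever colors witness $A$ on the $M$ side are also present on the $M'$ side. Once this observation is in place the argument is structurally identical to the $\mathrm{K}_a$ case, and indeed the earlier result is subsumed as the singleton instance $A = \{a\}$, consistent with the remark that $\mathrm{K}_a \varphi$ is a special case of $\mathrm{D}_A \varphi$.
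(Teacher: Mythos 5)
Your proposal is correct and follows essentially the same route as the paper: induction on $\varphi$, with the base and Boolean cases handled exactly as in Theorem~\ref{th_knowledge_gain_via_K-simulation}, and the $\mathrm{D}_A\psi$ case resolved by applying \dforth{} to obtain $Y'$ with $Y \mathrel{S} Y'$ and $A \subseteq \chi(X \cap Y) \subseteq \chi'(X' \cap Y')$. The paper's proof is precisely this argument, so no further comparison is needed.
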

\begin{proof}
  This theorem is similarly proved
  as in Theorem~\ref{th_knowledge_gain_via_K-simulation}
  by induction on $\varphi$.
  Let us only show the case of distributed knowledge operator.

  Supoose $\varphi = \mathrm{D}_A \psi$.
  Assuming $M', X' \models \mathrm{D}_A \psi$,
  we show that $M, Y \models \psi$ holds for any $Y \in \facet{M}$
  such that $A \subseteq \chi(X \cap Y)$.
  By the condition \dforth, there exists $Y' \in \facet{M'}$
  such that $Y \mathrel{S} Y'$ and
  $A \subseteq \chi(X \cap Y) \subseteq \chi'(X' \cap Y')$.
  This implies $M', Y' \models \psi$
  and also $M, Y \models \psi$ by induction hypothesis.
  Therefore, $M, X \models \mathrm{D}_A \psi$.
\end{proof}

\begin{corollary} \label{cor_knowledge_gain_via_D}
  If there exists a total D-simulation of $M$ by $M'$,
  for any positive formula $\varphi \in \pfd$,
  $M' \models \varphi$ implies $M \models \varphi$.
\end{corollary}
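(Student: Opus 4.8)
The plan is to derive this global statement from the pointwise knowledge gain theorem for D-simulations that immediately precedes it, using the totality hypothesis to supply a related world of $M'$ for each world of $M$. Recall that $M \models \varphi$ abbreviates the claim that $M, X \models \varphi$ holds at every facet $X \in \facet{M}$; thus it suffices to fix an arbitrary facet $X \in \facet{M}$ and establish $M, X \models \varphi$, after which universal generalization over $X$ delivers the result.

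First I would invoke totality of the given D-simulation $S$: by hypothesis there exists some $X' \in \facet{M'}$ with $X \mathrel{S} X'$. Next, since we assume $M' \models \varphi$, the formula holds at every facet of $M'$, and in particular $M', X' \models \varphi$. Now the pair $(X, X')$ lies in $S$, so the preceding knowledge gain theorem for D-simulations applies directly to it, yielding $M, X \models \varphi$. As $X$ was arbitrary, $M \models \varphi$ follows.

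There is no genuine obstacle here: the argument is the exact analogue of the proof of Corollary~\ref{cor_knowledge_gain_via_K-simulation}, merely feeding totality into the per-facet theorem. The only point that requires care—and the sole place the totality hypothesis is actually used—is the existence of the witness $X'$; without it a facet $X$ of $M$ unrelated by $S$ to any facet of $M'$ could not inherit the truth of $\varphi$, and the conclusion would break down. Since the restriction to positive formulas and the distributed-knowledge clause are already fully discharged inside the preceding theorem, nothing about the syntactic structure of $\varphi \in \pfd$ needs to be reexamined at this stage, and the corollary reduces to this one-line threading of totality through the pointwise result.
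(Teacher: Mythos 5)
Your proof is correct and follows exactly the route the paper intends: the corollary is stated without an explicit proof precisely because it is this immediate consequence of totality combined with the pointwise knowledge gain theorem for D-simulations. Nothing to add.
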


\subsection{Non-existence of logical obstructions
to the solvability of $\mathcal{IS}^2$ by $\mathcal{SA}_2$}

Using a D-simulation, let us argue the unsolvability of the set agreement task
for a particular case $\Pi = \{0, 1, 2\}$.
We leave the general case as an open problem
because D-simulation for higher dimensional model
would be much complicated and hard to construct.

The following binary relation $R^{p,q}$ defines
a set of pairs of two vertices that
we will constrain their decision values.
Although it was sufficient to constrain each vertex individually
when we constructed a K-simulation
in Theorem~\ref{th_non-existence_of_lo_SAk_ISr},
we need to consider constraints for each tuple of vertices
in order to construct D-simulation.

\begin{definition}
  Let $V$ be the set of vertices
  and $S$ be the set of simplices for $\IIS{2}$.
  For a vertex $x = (p, i_p, v_p) \in V$ and an agent $q \in \Pi$,
  in abuse of notation,
  we write $q \in view(x)$ if there exists $i_q \in V^{in}$
  such that $(q, i_q) \in \car(v_p)$
  and $q \notin view(x)$ otherwise.
  Given $p,q \in \Pi$ $(p < q)$,
  We define a binary relation $R_n^{p,q} \subseteq V \times V$
  for $n \in \N$,
  by induction on $n$ as follows.
  \begin{itemize}
    \item $x \mathrel{R_0^{p,q}} y$ if
      $\{x, y\}$ is a 1-simplex of $\IIS{2}$ and
      it holds either
      $p \notin view(x)$ or $q \notin view(y)$.
    \item $x \mathrel{R_{n+1}^{p,q}} y$
      if either $x \mathrel{R_n^{p,q}} y$ or
      there exists $z \in V$ such that
      $x \mathrel{R_n^{p,q}} z$, $z \mathrel{R_n^{p,q}} y$
      and $\{x, y, z\}$ is a 2-simplex.
  \end{itemize}
  We further define a binary relation $R^{p,q} \subseteq V \times V$ as
  $R^{p,q} = \bigcup_{n \in \N} R_n^{p,q} \setminus R_0^{p.q}$.

\end{definition}

\begin{figure}
  \begin{minipage}[t]{0.45\linewidth}
    \centering
    \input{figures/40_two_round_is_1}
    \subcaption{The binary relation $\Rwg_0$}
    \label{fig_iis2_0}
  \end{minipage}
  ~~~
  \begin{minipage}[t]{0.45\linewidth}
    \centering
    \input{figures/41_two_round_is_2}
    \subcaption{The binary relation $\bigcup_{n \in \N} \Rwg_n$}
    \label{fig_iis2_1}
  \end{minipage}

  \vspace{2em}
  \begin{minipage}[t]{0.45\linewidth}
    \centering
    \input{figures/42_two_round_is_3}
    \subcaption{The binary relation $\Rwg$
      and the subcomplex $\IIS{2}_{\circ} \subseteq \IIS{2}$
      (lower left facets enclosed by the thick red lines)}
    \label{fig_iis2_2}
  \end{minipage}
  \hfill

  \caption{The binary relations
  $\Rwg_0$, $\bigcup_{n \in \N} \Rwg_n$ and $\Rwg$
  where each arrow $\textcolor{red}{\to}$ indicates
  a pair of vertices related by
  $\bigcup_{n \in \N} R^{\circ, \textcolor{lightgray}{\bullet}}_n$
  and \textcolor{red}{\textbf{---}} indicates
  both \textcolor{red}{$\to$} and \textcolor{red}{$\leftarrow$}}
  \label{fig_iis2}
\end{figure}

We illustrated the relations
$R_0^{p,q}$, $\bigcup_{n \in \N} R_n^{p,q}$ and $R^{p,q}$
in Figure~\ref{fig_iis2}.
Almost all of the vertices related by $R_n^{p,q}$
(in the Figure~\ref{fig_iis2_1}),
especially all of them on the boundary,
are due to $R_0^{p,q}$ (in the Figure~\ref{fig_iis2_0}).
$R_1^{p,q}$ adds some pairs of vertices such that
each one constitute a facet together with the topmost black vertex.
$R_n^{p,q}$ adds one pair to each of the two symmetrical locations
when $2 \leq n \leq 5$,
and no pair are added when $n \geq 6$.

\begin{definition}  \label{def_simulation_IIS2_ISA2}
  A binary relation
  $S \subseteq \facet{\IIS{2}} \times \facet{\ISA{2}}$ is defined by
  $\{(a, i_a, v_a) \mid a \in \Pi\} \mathrel{S}
    \{(a, i'_a, d_a) \mid a \in \Pi\}$
  if all of the following conditions are satisfied.
  \begin{enumerate}
    \item \label{def_simulation_IIS2_ISA2_cond_1}
      For all $a \in \Pi$, $i_a = i'_a$.
    \item \label{def_simulation_IIS2_ISA2_cond_2}
      For all $a \in \Pi$, there exists $a' \in \Pi$
      such that $(a', d_a) \in \car(v_a)$.
    \item \label{def_simulation_IIS2_ISA2_cond_3}
      For all $a, b, p, q \in \Pi$ such that $a \neq b$ and $p < q$,
      if $|\{i_0, i_1, i_2\}| = 3$ and
      $(a, i_a, v_a) \mathrel{R}^{p,q} (b, i_b, v_b)$,
      either $d_a \neq i_p$ or $d_b \neq i_q$ holds.
  \end{enumerate}

\end{definition}

\begin{lemma} \label{lem_location_of_Rpq}
  Let $a \in \Pi$ be an agent and
  $\IIS{2}_a$ be a subcomplex of $\IIS{2}$ such that
  $\facet{\IIS{2}_a} =
    \{I[\gamma_1, \gamma_2] \in \facet{\IIS{2}}
    \mid \gamma_1 = (C_1, \dots, C_l), a \in C_1\}$.
  Then, for any $p, q \in \Pi$ $(p < q)$,
  $x \mathrel{R^{p,q}} y$ implies
  $\{x, y\} \notin \IIS{2}_p$ and $\{x, y\} \notin \IIS{2}_q$.

\end{lemma}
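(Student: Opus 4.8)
The plan is to reduce the lemma to a single statement proved by induction on the stage index $n$, exploiting the fact --- read off from Lemma~\ref{lem_flatview} --- that membership of a facet in $\IIS{2}_p$ is governed entirely by the first block $C_1$ of its first-round ordered partition. Indeed, if a facet $I[\gamma_1,\gamma_2]$ with $\gamma_1=(C_1,\dots,C_l)$ has $p\in C_1$, then Lemma~\ref{lem_flatview} gives $(p,i_p)\in\car(view^2_a(I,\gamma_1,\gamma_2))$ for every $a\in\Pi$, so every vertex $x$ of that facet satisfies $p\in view(x)$. I would therefore first record the key consequence: whenever an edge $\{x,y\}$ lies in $\IIS{2}_p$ one has $p\in view(x)$ and $p\in view(y)$, and symmetrically for $q$ and $\IIS{2}_q$.

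Next I would prove, by induction on $n$, the strengthened claim $P(n)$: for every pair with $x\mathrel{R_n^{p,q}}y$, if $\{x,y\}\in\IIS{2}_p$ then $q\notin view(y)$, and if $\{x,y\}\in\IIS{2}_q$ then $p\notin view(x)$. Granting $P(n)$, the lemma is immediate, because $x\mathrel{R^{p,q}}y$ means $x\mathrel{R_n^{p,q}}y$ for some $n$ while $x\not\mathrel{R_0^{p,q}}y$, so by definition of $R_0^{p,q}$ one has $p\in view(x)$ and $q\in view(y)$; were $\{x,y\}\in\IIS{2}_p$, then $P(n)$ would force $q\notin view(y)$, a contradiction, and the case $\{x,y\}\in\IIS{2}_q$ is ruled out dually. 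The base case $n=0$ is just the contrapositive packaging of the definition of $R_0^{p,q}$ against the key consequence: if $x\mathrel{R_0^{p,q}}y$ and $\{x,y\}\in\IIS{2}_p$, then $p\in view(x)$ eliminates the first disjunct of $R_0^{p,q}$ and leaves $q\notin view(y)$, and symmetrically for $\IIS{2}_q$.

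For the inductive step, a pair in $R_{n+1}^{p,q}\setminus R_n^{p,q}$ comes from some $z$ with $x\mathrel{R_n^{p,q}}z$, $z\mathrel{R_n^{p,q}}y$ and $\{x,y,z\}$ a $2$-simplex $\sigma$. To establish the first clause I assume $\{x,y\}\in\IIS{2}_p$, witnessed by a facet $F\supseteq\{x,y\}$ with $p$ in its first block $C_1$, and aim to show that the sub-edge $\{z,y\}$ again lies in $\IIS{2}_p$; then the induction hypothesis applied to $z\mathrel{R_n^{p,q}}y$ yields exactly $q\notin view(y)$. The second clause is handled symmetrically through the sub-edge $\{x,z\}$ and the induction hypothesis on $x\mathrel{R_n^{p,q}}z$. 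The cleanest situation is $F=\sigma$: then $\sigma\in\IIS{2}_p$ and both sub-edges lie inside $\sigma$, so the transfer is free. The content of the step is thus to understand the general case $F\neq\sigma$, where $F$ and $\sigma$ are the two facets sharing the interior edge $\{x,y\}$ and differ only in the colour-$\chi(z)$ pivot vertex.

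The main obstacle is precisely this adjacent-facet analysis: moving across $\{x,y\}$ alters only the pivot process's state, so the first-round partition --- and hence whether $p\in C_1$ --- may genuinely change, and $\sigma\in\IIS{2}_p$ can fail. To close the argument I would track which first-round views are already recorded along $\{x,y\}$: the Self-inclusion, Containment and Immediacy properties totally order the round-one views and, via Immediacy, force the pivot to observe both $x$ and $y$ in the second round whenever $x$ and $y$ fail to observe it, which constrains how $C_1$ can differ between $F$ and $\sigma$. In the degenerate configurations this even collapses directly: for instance when $q=\chi(y)$, Self-inclusion makes $q\in view(y)$ unavoidable, so $P$ must instead assert $\{x,y\}\notin\IIS{2}_p$, which one reads off at once. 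I expect the structural bookkeeping needed to guarantee that the chosen sub-edge inherits $\IIS{2}_p$-membership, rather than the induction itself, to be where the real work lies, and it may require strengthening $P(n)$ so that it also records, for each related pair, the first-round block in which each endpoint sits.
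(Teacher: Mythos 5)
Your reduction of the lemma to the statement $P(n)$ is sound: the observation that $\{x,y\}\in\IIS{2}_p$ forces $p\in view(x)$ and $p\in view(y)$ (via Lemma~\ref{lem_flatview}), the base case $P(0)$, and the derivation of the lemma from $P(n)$ (a pair in $R^{p,q}=\bigcup_n R_n^{p,q}\setminus R_0^{p,q}$ spans a $1$-simplex with $p\in view(x)$ and $q\in view(y)$, contradicting $P(n)$ if the edge lay in $\IIS{2}_p$ or $\IIS{2}_q$) are all correct. But the inductive step, which is the entire content of the argument, is not carried out. The transfer you need --- that $\{x,y\}\in\IIS{2}_p$ and $\{x,y,z\}$ a $2$-simplex imply $\{z,y\}\in\IIS{2}_p$ --- is false as a general statement about the complex: when $\{x,y\}$ lies on the boundary of the subcomplex $\IIS{2}_p$, the witnessing facet $F$ and the facet $\sigma=\{x,y,z\}$ can be the two facets on opposite sides of that boundary, and then $\{z,y\}$ need not belong to $\IIS{2}_p$ at all. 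You would have to use the hypotheses $x\mathrel{R_n^{p,q}}z$ and $z\mathrel{R_n^{p,q}}y$ to exclude exactly these boundary configurations, and you give no argument for that; you explicitly defer it (``where the real work lies'', ``may require strengthening $P(n)$''). As it stands this is a proof plan with the decisive step missing, not a proof.

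For comparison, the paper does not attempt a structural argument at all: since $|\Pi|=3$ and $\IIS{2}$ is a concrete finite complex, it computes the relation $R^{p,q}$ explicitly and verifies by inspection of Figure~\ref{fig_iis2_2} that no related pair spans an edge of $\IIS{2}_p$ (and symmetrically for $\IIS{2}_q$). Your approach, if completed, would be more informative and potentially generalizable beyond three processes, but completing it requires precisely the combinatorial analysis of adjacent facets across the boundary of $\IIS{2}_p$ that you have identified and left open. Either finish that analysis (most plausibly by strengthening $P(n)$ to track the first-round concurrency class of each endpoint, as you suggest) or fall back on the finite verification.
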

\begin{proof}
  The relation $R^{p,q}$ and the subcomplex $\IIS{2}_p$
  are shown in the Figure~\ref{fig_iis2_2}
  where $\circ$ and $\bullet$ are read as $p$ and $q$, respectively.
  Hence, as illustrated in Figure~\ref{fig_iis2_2},
  no pair $(x, y)$ related by $R^{p,q}$ is contained in $\IIS{2}_p$
  as a $1$-simplex.
  Similarly for $\IIS{2}_q$.
\end{proof}

\begin{proposition} \label{prop_existence_D-simulation_SA2_IS2}
  The binary relation $S \subseteq \facet{\IIS{2}} \times \facet{\ISA{2}}$
  defined above is a total D-simulation of $\IIS{2}$ by $\ISA{2}$.

\end{proposition}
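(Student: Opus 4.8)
The plan is to establish the two clauses \atom{} and \dforth{} of a D-simulation and, separately, totality. The clause \atom{} is immediate from condition~(1) of Definition~\ref{def_simulation_IIS2_ISA2}: if $X \mathrel{S} X'$ then the two facets carry the same inputs, and the labels of $\IIS{2}$ and $\ISA{2}$ depend only on inputs. Totality is equally short. Given $X = I[\gamma_1,\gamma_2]$ with $\gamma_1 = (C_1,\dots,C_l)$, fix $a_1 \in C_1$ and let $X'$ decide $i_{a_1}$ at every agent; Lemma~\ref{lem_flatview} gives $(a_1,i_{a_1}) \in \car(v_a)$ for all $a$, so condition~(2) holds, condition~(3) holds because a constant decision can never produce $d_a = i_p \neq i_q = d_b$, and $|decide(\Pi)| = 1$, so $X' \in \facet{\ISA{2}}$.

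The substance is \dforth. Fix $X \mathrel{S} X'$ and an arbitrary $Y = I[\gamma_1,\gamma_2]$, write $A = \chi(X\cap Y)$, and set $\gamma_1 = (C_1,\dots,C_l)$. I will define $Y'$ by forcing $decide_{Y'}(a) = decide_{X'}(a)$ for every $a \in A$ and by giving each remaining agent a value it can see in $Y$. Copying the inputs of $Y$ gives condition~(1), and the forcing on $A$ yields the required inclusion $\chi(X\cap Y) \subseteq \chi'(X'\cap Y')$, since on $A$ both the inputs and the decisions of $X'$ and $Y'$ now agree. Condition~(2) for $a \in A$ is inherited from $X \mathrel{S} X'$ through $view_Y(a) = view_X(a)$; for a free agent it will hold by construction, because every value I assign lies in $\{\,i_s \mid s \in C_1\,\} \subseteq \car(view_Y(a))$ by Lemma~\ref{lem_flatview}. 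Everything therefore reduces to meeting condition~(3) and the ceiling $|decide_{Y'}(\Pi)| \le 2$ at the same time.

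Here Lemma~\ref{lem_location_of_Rpq} drives the case analysis, organised by $|C_1|$. If the inputs of $Y$ are not pairwise distinct, condition~(3) is vacuous. If $|C_1| \ge 2$, then for every pair $p<q$ at least one of $p,q$ lies in $C_1$, so by Lemma~\ref{lem_location_of_Rpq} no $1$-simplex of $Y$ is $R^{p,q}$-related and condition~(3) is again vacuous; moreover each agent of $Y$ now sees at least two of the three inputs, so every free agent can copy one of the at most two values already forced on $A$ (or the agents can agree on a common visible input when $A$ is small), keeping the count at two. The only genuine case is $C_1 = \{r\}$, where $i_r$ is the unique universally visible input and, by Lemma~\ref{lem_location_of_Rpq}, the sole active instances of condition~(3) are those for the pair $\{p,q\} = \Pi\setminus\{r\}$. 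My default here is to assign $i_r$ to every free agent, which discharges all instances of condition~(3) that involve a free agent at once, since $i_r \neq i_p,i_q$.

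The main obstacle is reconciling this default with the two-value ceiling in the troublesome configurations, namely $|A| = 2$ with the two forced values distinct: adding $i_r$ would then threaten a third value. The observation that resolves it is a coupling already visible at the level of $R_0^{p,q}$. If some agent $x$ of $Y$ sees only $i_r$, then for the $p$-vertex $u$ and the $q$-vertex $w$ one has $u \mathrel{R_0^{p,q}} x$ and $x \mathrel{R_0^{p,q}} w$ through the $2$-simplex $\{u,w,x\} = Y$, which forces the edge $\{u,w\}$ into $R^{p,q}$ (and, by the same argument with the roles of $u,w$ swapped, in the reverse direction as well); condition~(3) then bars the forced pair from being either $(i_p,i_q)$ or $(i_q,i_p)$, so whenever two forced decisions are distinct one of them must already be $i_r$ and the count stays at two. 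Since $p$ and $q$ always see their own inputs, the only agent that can be this blind is the one complementary to $\{p,q\}$, so when no agent sees only $i_r$ every free agent sees a second input and has room to match a forced value directly. Verifying that these two regimes exhaust all facets with $C_1=\{r\}$ and that the resulting assignment satisfies condition~(3) everywhere is precisely the hand-check encoded in Figure~\ref{fig_iis2}, and it is the least mechanical and most delicate step of the proof.
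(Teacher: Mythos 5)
Your overall strategy is the same as the paper's: keep $decide_{X'}$ on $A = \chi(X\cap Y)$, give every free agent a value it can see in $Y$, use Lemma~\ref{lem_location_of_Rpq} to reduce condition~(3) to the single pair $\{p,q\} = \Pi\setminus C_1$, and fight the two-value ceiling when $|A|=2$. The clauses \atom{} and totality, the $|C_1|\ge 2$ regime, and the vacuous cases are fine. One small repair is needed in your blind-agent regime: the coupling $u \mathrel{R^{p,q}} w$ only constrains the forced pair when $A=\{p,q\}$; when $r\in A$ the conclusion ``one forced value is already $i_r$'' instead follows from condition~(2) of $X\mathrel{S}X'$, since the blind agent must be $r$ and then $\car(view_X(r))=\{(r,i_r)\}$ forces $decide_{X'}(r)=i_r$. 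You should also note that inheriting condition~(3) on pairs inside $A$ requires $|input_X(\Pi)|=3$, which holds because every vertex of $Y$ sees position $r$, so $I_X=I_Y$.

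The genuine gap is the last configuration: $C_1=\{r\}$, no blind agent, $|A|=2$, the two forced decisions distinct and (necessarily, else your default works) equal to $\{i_p,i_q\}$. Here you only assert that the free agent ``has room to match a forced value directly'' and defer condition~(3) to ``the hand-check encoded in Figure~\ref{fig_iis2}.'' But \emph{which} forced value is copied matters: writing $y_1,y_2$ for the vertices of $A$ deciding $i_p,i_q$ and $y_0$ for the free vertex, copying $i_q$ is unsafe if $y_1 \mathrel{R^{p,q}} y_0$, and copying $i_p$ is unsafe if $y_0 \mathrel{R^{p,q}} y_2$; neither obstruction is excluded by anything you have established, since these pairs are not contained in $X$ and so condition~(3) of $X\mathrel{S}X'$ says nothing about them. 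Ruling out that both obstructions occur simultaneously is exactly the argument the paper supplies: $y_1\mathrel{R^{p,q}}y_2$ fails by condition~(3) of $X\mathrel{S}X'$ and $(y_1,y_2)\notin R_0^{p,q}$ by condition~(2), hence $(y_1,y_2)\notin R_n^{p,q}$ for all $n$; if both $(y_1,y_0)$ and $(y_0,y_2)$ were in $\bigcup_n R_n^{p,q}$, composing through the $2$-simplex $Y$ would place $(y_1,y_2)$ in some $R_n^{p,q}$, a contradiction. So one direction fails, and that failure dictates the free agent's decision. The figure cannot stand in for this step: it depicts $R^{\circ,\graybullet}$ for one input assignment and does not quantify over the admissible $X$ and $X'$. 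Without this argument \dforth{} is not established.
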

\begin{proof}

  \atom:
    This case follows from
    Definition~\ref{def_simulation_IIS2_ISA2}~\ref{def_simulation_IIS2_ISA2_cond_1}.

  \dforth:
    Suppose $X, Y$ are facets of $\IIS{2}$ and
    $X'$ is a facet of $\ISA{k}$ such that $X \mathrel{S} X'$.
    Without loss of generality, we may assume $Y \in \IIS{2}_0$ by symmetry.
    Let $A = \chi_{\IIS{2}}(X \cap Y)$.
    We will show that there exists $Y' \in \facet{\ISA{2}}$
    such that $A \subseteq \chi_{\ISA{2}} (X' \cap Y')$ and
    $Y \mathrel{S} Y'$.

    For all $a \in \Pi$, let us write $input_X(a)$ for an input value and
    $view_X(a)$ for a view in $\Views^2$
    such that $(a, input_X(a), view_X(a))$ is an $a$-colored vertex in $X$.
    Similarly, let us write $(a, input_Y(a), view_Y(a))$
    for an $a$-colored vertex in $Y$,
    and $(a, input_{X'}(a), decide_{X'}(a))$
    for an $a$-colored vertex in $X'$.

    \begin{itemize}
      \item Case $|decide_{X'}(A) \cup \{input_Y(0)\}| \leq 2$.
        Define a facet $Y' \in \ISA{2}$ by
        \[
          Y' = \{(a, input_Y(a), decide_{X'}(a))
            \mid a \in A\}
            \cup \{(a, input_Y(a), input_Y(0))
            \mid a \in \Pi \setminus A\}.
        \]
        Let us write $(a, input_{Y'}(a), decide_{Y'}(a))$ for
        an $a$-colored vertex in $Y'$ for all $a \in \Pi$ as above.
        Then, we have
        $input_{Y'}(a) = input_Y(a)$ for all $a \in \Pi$,
        $decide_{Y'}(a) = decide_{X'}(a)$ for all $a \in A$ and
        $decide_{Y'}(a) = input_Y(0)$ for all $a \in \Pi \setminus A$.

        We show that this $Y'$ satisfies
        $Y' \in \facet{\ISA{2}}$,
        $A \subseteq \chi_{\ISA{2}} (X' \cap Y')$ and
        $Y \mathrel{S} Y'$.
        \begin{itemize}
          \item
            We have
            $|decide_{Y'}(\Pi)| \leq
            |decide_{X'}(A) \cup \{input_Y(0)\}| \leq 2$.
            Suppose $Y$ and $Y'$ satisfies the conditions~\ref{def_simulation_IIS2_ISA2_cond_1}
            and~\ref{def_simulation_IIS2_ISA2_cond_2}
            of Definition~\ref{def_simulation_IIS2_ISA2}.
            Then, by a similar argument as in the proof of
            Theorem~\ref{th_non-existence_of_lo_SAk_ISr},
            $\{(a, input_{Y'}(a)) \mid a \in \Pi\} \models
              \bigwedge_{a \in \Pi} \bigvee_{b \in \Pi}
              \ip_b^{decide_{Y'}(a)}$ holds.
            Therefore,
            $Y' \in \facet{\ISA{2}}$ holds
            because $Y$ and $Y'$ satisfies
            the conditions~\ref{def_simulation_IIS2_ISA2_cond_1}
            and~\ref{def_simulation_IIS2_ISA2_cond_2},
            as shown below.
          \item
            We have $A \subseteq \chi_{\ISA{2}} (X' \cap Y')$
            by definition of $Y'$.
          \item
            To show $Y \mathrel{S} Y'$,
            we examine the three conditions
            of Definition~\ref{def_simulation_IIS2_ISA2}.
            \begin{enumerate}
              \item
                It is clear that $Y'$ satisfies
                the condition~\ref{def_simulation_IIS2_ISA2_cond_1}
                by definition of $input_{Y'}$.
              \item
                Suppose $a \in A$. Then, we have
                $decide_{Y'}(a) = decide_{X'}(a)$ and
                $view_Y(a) = view_X(a)$.
                Since $X \mathrel{S} X'$,
                there exists $a' \in \Pi$ such that
                $(a', decide_{X'}(a)) \in \car(view_X(a))$.
                Hence, we obtain $(a', decide_{Y'}(a)) \in \car(view_Y(a))$.
                Suppose otherwise.
                Then $a \notin A$ implies
                $(0, decide_{Y'}(a)) = (0, input_Y(0)) \in \car(view_Y(a))$
                because $Y \in \IIS{2}_0$ and by Lemma~\ref{lem_flatview},
                all vertices in $\IIS{2}_0$ have
                $(0, input_Y(0))$ in their view.
              \item Suppose $|input_Y(\Pi)| = 3$ and
                $(a, input_Y(a), view_Y(a)) \mathrel{R}^{p,q}
                  (b, input_Y(b), view_Y(b))$,
                  where $a \neq b$ and $p < q$.
                Since $Y \in \IIS{2}_0$,
                it must be $p = 1$ and $q = 2$
                by Lemma~\ref{lem_location_of_Rpq}.

                Suppose $decide_{Y'}(a) = decide_{Y'}(b)$.
                Either
                $decide_{Y'}(a) \neq input_Y(1)$ or
                $decide_{Y'}(b) \neq input_Y(2)$ holds because
                $input_Y(1) \neq input_Y(2)$.

                Suppose otherwise, that is,
                $decide_{Y'}(a) \neq decide_{Y'}(b)$.
                Then, either $decide_{Y'}(a)$ or $decide_{Y'}(b)$
                is equal to $input_Y(0)$ since
                $decide_{Y'}(\Pi) \subseteq
                  decide_{X'}(A) \cup \{input_Y(0)\}$ and
                $|decide_{X'}(A) \cup \{input_Y(0)\}| \leq 2$.
                Thus, either $decide_{Y'}(a) \neq input_Y(1)$ or
                $decide_{Y'}(b) \neq input_Y(2)$ holds, since
                $input_Y(0)$, $input_Y(1)$ and $input_Y(2)$ are distinct.
            \end{enumerate}
        \end{itemize}
      \item Case $|decide_{X'}(A) \cup \{input_Y(0)\}| = 3$.
        If $|A| = 3$, it must be $X = Y$.
        Hence the conditions $Y' \in \facet{\ISA{2}}$,
        $A \subseteq \chi_{\ISA{2}} (X' \cap Y')$ and
        $Y \mathrel{S} Y'$
        are satisfied by taking $Y' = X'$.
        If $|A| \leq 1$, we have
        $|decide_{X'}(A) \cup \{input_Y(0)\}| \leq 2$,
        for which case we have already shown above.

        It remains to prove the case $|A| = 2$.
        By the condition~\ref{def_simulation_IIS2_ISA2_cond_2}
        of $X \mathrel{S} X'$,
        for all $a \in A$ there exists $a' \in \Pi$ such that
        $(a', decide_{X'}(a)) \in \car(view_X(a))$.
        This implies $decide_X(A) \subseteq input_Y(\Pi)$,
        because $\car(view_X(a)) = \car(view_Y(a))
          \subseteq \{(a, input_Y(a)) \mid a \in \Pi\}$.
        Hence, we get
        $\Pi = decide_{X'}(A) \cup \{input_Y(0)\} = input_Y(\Pi)$.
        We have $(0, input_X(0)) \in \car(view_X(a))$
        and $(0, input_Y(0)) \in \car(view_Y(a))$
        for some $a \in A$,
        since $X \cap Y \in \IIS{2}_0$.
        Thus, $view_X(a) = view_Y(a)$ implies $input_X(0) = input_Y(0)$
        and hence $decide_{X'}(A) = input_X(\{1, 2\})$.
        Therefore, $input_X(a) = input_Y(a)$ for all $a \in \Pi$
        because $input_X(\Pi) = \Pi = input_Y(\Pi)$ and $|X \cap Y| = 2$.

        Let us write $a_0$ for the sole agent $a \in \Pi \setminus A$
        and $a_j$ for the agent $a \in A$ such that
        $decide_{X'}(a) = input_X(j)$.
        We also write $y_j$ for a vertex
        $(a_j, input_Y(a_j), view_Y(a_j)) \in Y$.

        We have $y_1, y_2 \in X$, $|input_X(\Pi)| = 3$ and
        $decide_X(a_j) = input_X(j)$ for $j \in \{1, 2\}$.
        Thus, $y_1 \mathrel{R}^{1, 2} y_2$ does not hold
        by the condition~\ref{def_simulation_IIS2_ISA2_cond_3}
        of $X \mathrel{S} X'$.
        Furthermore,
        by the condition~\ref{def_simulation_IIS2_ISA2_cond_2}
        of $X \mathrel{S} X'$, we have
        $1 \in view(y_1)$ and $2 \in view(y_2)$.
        Therefore, $y_1 \mathrel{R}_n^{1, 2} y_2$
        does not hold for any $n \in \N$.

        Suppose by contradiction that both
        $y_1 \mathrel{\bigcup_{n \in \N} R_n^{1, 2}} y_0$ and
        $y_0 \mathrel{\bigcup_{n \in \N} R_n^{1, 2}} y_2$ hold.
        Then, there exists $n_1, n_2 \in \N$ such that
        $y_1 \mathrel{R}_{n_1}^{1, 2} y_0$ and
        $y_0 \mathrel{R}_{n_2}^{1, 2} y_2$.
        By definition of $R_n^{1,2}$, we have
        $y_1 \mathrel{R}_n^{1, 2} y_2$ for any $n \geq 1 + \max\{n_1, n_2\}$,
        which is a contradiction.

        Suppose $y_1 \mathrel{\bigcup_{n \in \N} R_n^{1, 2}} y_0$ does not hold.
        The other case is proven in a similar way.
        Define a facet $Y' \in \ISA{k}$ by
        \[
          Y' = \{(a_0, input_Y(a_0), input_Y(2))\}
            \cup  \{(a, input_Y(a), decide_{X'}(a)) \mid a \in A\}.
        \]
        Then, we have
        $input_{Y'}(a) = input_Y(a)$ for all $a \in \Pi$,
        $decide_{Y'}(a_0) = input_Y(2)$,
        $decide_{Y'}(a_1) = decide_{X'}(a_1)$ and
        $decide_{Y'}(a_2) = decide_{X'}(a_2)$.

        We show that this $Y'$ satisfies
        $Y' \in \facet{\ISA{2}}$,
        $A \subseteq \chi_{\ISA{2}} (X' \cap Y')$ and
        $Y \mathrel{S} Y'$.
        \begin{itemize}
          \item
            Since $decide_{X'}(a_2) = input_X(2) = input_Y(2)$,
            we have $|decide_{Y'}(\Pi)| = |decide_{X'}(A)| = 2$.
            Therefore,
            $Y' \in \facet{\ISA{2}}$ holds,
            similarly to the case $|decide_{X'}(A) \cup \{input_Y(0)\}| \leq 2$,
            because $Y$ and $Y'$ satisfies
            the conditions~\ref{def_simulation_IIS2_ISA2_cond_1}
            and~\ref{def_simulation_IIS2_ISA2_cond_2},
            as shown below.
          \item
            We have $A \subseteq \chi_{\ISA{2}} (X' \cap Y')$
            by definition of $Y'$.
          \item
            Let us show $Y \mathrel{S} Y'$.
            \begin{enumerate}
              \item
                It is clear that $Y'$ satisfies
                the condition~\ref{def_simulation_IIS2_ISA2_cond_1}
                by definition of $input_{Y'}$.
              \item
                Suppose $a \in A$.
                Since $decide_{Y'}(a) = decide_{X'}(a)$,
                this is similarly proved as in the case
                $|decide_{X'}(A) \cup \{input_Y(0)\}| \leq 2$ above.
                Suppose otherwise, i.e.\ $a = a_0$.
                Since $y_1 \mathrel{R}_0^{1, 2} y_0$ does not hold,
                we have $2 \in view(y_0)$.
                Hence, we have
                $(2, decide_{Y'}(a_0)) = (2, input_Y(2)) \in \car(view_Y(a_0))$.
              \item
                Suppose
                $(a, input_Y(a), view_Y(a)) \mathrel{R}^{p,q}
                  (b, input_Y(b), view_Y(b))$ holds,
                  where $a \neq b$ and $p < q$.
                By Lemma~\ref{lem_location_of_Rpq},
                it must be $p = 1$ and $q = 2$.
                If $a \neq a_1$, we have
                $decide_{Y'}(a) = input_Y(2) \neq input_Y(p)$;
                If $a = a_1$ and $b = a_0$,
                $y_1 \mathrel{R}^{1, 2} y_0$ does not hold,
                as we have shown above;
                If $a = a_1$ and $b = a_2$,
                $y_1 \mathrel{R}^{1, 2} y_2$ does not hold,
                since we supposed that
                $y_1 \mathrel{\bigcup_{n \in \N} R_n^{1, 2}} y_0$ does not hold.
            \end{enumerate}
        \end{itemize}
      \end{itemize}

    To show the totality,
    let $X = \{(a, input_X(a), view_X(a)) \mid a \in \Pi\}
      \in \facet{\IIS{2}_p}$.
    Then,
    $X' = \{(a, input_X(a), input_X(p)) \mid a \in \Pi\}$ satisfies that
    $X' \in \facet{\ISA{2}}$ and $X \mathrel{S} X'$.
\end{proof}

\begin{theorem}
  There is no logical obstruction to the solvability of
  $\mathcal{SA}_2$ by $\mathcal{IS}^r$ in $\pfd$,
  whenever $r \geq 2$.

\end{theorem}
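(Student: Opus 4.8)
The plan is to reduce the general case $r \geq 2$ to the case $r = 2$ already settled in Proposition~\ref{prop_existence_D-simulation_SA2_IS2}. By Corollary~\ref{cor_knowledge_gain_via_D}, it suffices to exhibit a \emph{total} D-simulation of $\IIS{r}$ by $\ISA{2}$: such a simulation guarantees that $\ISA{2} \models \varphi$ implies $\IIS{r} \models \varphi$ for every $\varphi \in \pfd$, so no $\varphi$ can hold in the task model while failing in the protocol model. The idea is to precompose the $r = 2$ simulation $S$ of Proposition~\ref{prop_existence_D-simulation_SA2_IS2} with a \emph{truncation morphism} $g \colon \IIS{r} \to \IIS{2}$ that discards all but the first two rounds of communication.

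First I would construct $g$. By self-inclusion, every view $view^{n+1}_a(I,\gamma_1,\dots,\gamma_{n+1})$ contains the pair $(a, view^n_a(I,\gamma_1,\dots,\gamma_n))$, and since the agents occurring in a view are pairwise distinct, this pair is unique. Hence, starting from a round-$r$ view $v_a$ and repeatedly selecting the component indexed by $a$, one recovers $view^2_a$ after $r-2$ steps as a function of the vertex $(a, i_a, v_a)$ alone, independent of any facet it belongs to. Define $g(a, i_a, v_a) = (a, i_a, view^2_a)$ accordingly. On a facet $I[\gamma_1,\dots,\gamma_r]$ this yields exactly $\{(a, i_a, view^2_a(I,\gamma_1,\gamma_2)) \mid (a,i_a) \in I\} = I[\gamma_1,\gamma_2]$, a facet of $\IIS{2}$; since $g$ fixes the colour $a$ and the input $i_a$, it preserves both colouring and labelling, and being colour-preserving it is injective on each simplex, so it carries simplices to simplices. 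Thus $g$ is a morphism.

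Next I would invoke the proposition stating that the graph of a morphism is a total D-simulation, concluding that $graph(g)$ is a total D-simulation of $\IIS{r}$ by $\IIS{2}$, and then compose it with $S$. Concretely, set
\[
  R = \{(X, Y') \in \facet{\IIS{r}} \times \facet{\ISA{2}} \mid (g(X), Y') \in S\}.
\]
For \atom, if $X \mathrel{R} Y'$ then $l(X) = l(g(X)) = l'(Y')$, using that $g$ preserves labels and $S$ satisfies \atom. For \dforth, given $X \mathrel{R} Y'$ and any $Y$, apply \dforth for $S$ to the pair $(g(X), Y') \in S$ and the facet $g(Y)$ to obtain $Y''$ with $g(Y) \mathrel{S} Y''$ and $\chi(g(X) \cap g(Y)) \subseteq \chi(Y' \cap Y'')$; since $g$ is colour-preserving we have $\chi(X \cap Y) \subseteq \chi(g(X) \cap g(Y))$, so $Y \mathrel{R} Y''$ witnesses the required inclusion. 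Totality of $R$ follows from totality of $S$ together with $g(X) \in \facet{\IIS{2}}$. Applying Corollary~\ref{cor_knowledge_gain_via_D} to $R$ then gives the claim; the case $r = 2$ is Proposition~\ref{prop_existence_D-simulation_SA2_IS2} itself, with $g$ the identity.

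The main obstacle I anticipate is verifying that the truncation $g$ is genuinely well defined as a vertex map — that is, that the recovered round-$2$ view $view^2_a$ depends only on the vertex $(a,i_a,v_a)$ and not on any particular facet $I[\gamma_1,\dots,\gamma_r]$ exhibiting it — and that it lands in the correct facet $I[\gamma_1,\gamma_2]$ of $\IIS{2}$. This hinges entirely on the nesting of views guaranteed by self-inclusion; once that is secured, the remaining checks (the morphism axioms and the routine composition of D-simulations above) are mechanical.
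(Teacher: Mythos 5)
Your proposal is correct and follows essentially the same route as the paper: the paper likewise reduces $r \geq 3$ to $r = 2$ via the truncation morphism $f(I[\gamma_1,\dots,\gamma_r]) = I[\gamma_1,\gamma_2]$ and then invokes Proposition~\ref{prop_existence_D-simulation_SA2_IS2} with Corollary~\ref{cor_knowledge_gain_via_D}. Your only additions — checking well-definedness of the truncation at the vertex level via self-inclusion, and explicitly composing $graph(g)$ with $S$ into one total D-simulation rather than chaining the two knowledge-gain statements — are correct elaborations of steps the paper leaves implicit.
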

\begin{proof}
  For the case $r = 2$, the statement follows from
  Corollary~\ref{cor_knowledge_gain_via_D} and
  Proposition~\ref{prop_existence_D-simulation_SA2_IS2}.

  For the case $r \geq 3$, it suffices to show that
  there exists a morphism $f \colon \IIS{r} \to \IIS{2}$.
  Such a morphism is given by
  $f(I[\gamma_1, \dots, \gamma_r]) = I[\gamma_1, \gamma_2]$.
\end{proof}

We have shown the non-existence of logical obstruction in $\pfd$
for case where $|\Pi| = 3$ and $k = 2$.
For general cases, we conjecture that
for any $\Pi$ and $2 \leq k < |\Pi|$,
there exists $r_{\Pi,k} \geq 2$ such that
the following holds:
There is no logical obstruction to the solvability of
$\mathcal{SA}_k$ by $\mathcal{IS}^r$ in $\pfd$,
whenever $r \geq r_{\Pi,k}$.
We further conjecture that the non-existence
still holds even if common knowledge operator is available.
These are because
$\mathcal{L}_{CD}^+$ has enough expressibility
for path-connectedness, but not enough for simply-connectedness.


\section{Constructing a logical obstruction}  \label{sec_lo_construction}

We define the maximum nesting depth of modal operators of
an epistemic formula $\varphi$ of either $\mathcal{L}_K$ or $\mathcal{L}_D$,
written $\deg(\varphi)$, as below by induction on $\varphi$.
\begin{alignat*}{2}
  &\deg(p) &&= 0 \\
  &\deg(\lnot \varphi) &&= \deg(\varphi) \\
  &\deg(\varphi_1 \land \varphi_2) &&= \max(\deg(\varphi_1), \deg(\varphi_2)) \\
  &\deg(\mathrm{K} \varphi) &&= \deg(\varphi) + 1 \\
  &\deg(\mathrm{D} \varphi) &&= \deg(\varphi) + 1 \\
\end{alignat*}

\subsection{$n$-simulation}

An $n$-simulation is a finite approximation of a simulation.
The following is a formal definition of $n$-simulation.

\begin{definition}
  Let $R \subseteq \rel$ be a binary relation.
  We define an \emph{$n$-K-simulation} as below
  by induction on $n$.
  \begin{itemize}
    \item $R$ is a $0$-K-simulation
      if $R$ satisfies the condition {\atom}.
    \item $R$ is an $(n + 1)$-K-simulation
      if $R$ satisfies the condition {\atom} and
      for all $a \in \Pi$, $X, Y \in \facet{M}$ and $X' \in \facet{M'}$,
      $X \mathrel{R} X'$ and $X \sim_a Y$ implies that
      there exists an $n$-K-simulation $R'$ and
      $Y' \in \facet{M'}$ such that
      $Y \mathrel{R'} Y'$ and $X' \sim_a Y'$.
  \end{itemize}
  Similarly, we define an \emph{$n$-D-simulation} as below
  by induction on $n$.
  \begin{itemize}
    \item $R$ is a $0$-D-simulation
      if $R$ satisfies the condition {\atom}.
    \item $R$ is an $(n + 1)$-D-simulation
      if for all $X, Y \in \facet{M}$ and $X' \in \facet{M'}$,
      $X \mathrel{R} X'$ implies that
      there exists an $n$-D-simulation $R'$ and
      $Y' \in \facet{M'}$ such that
      $Y \mathrel{R} Y'$ and $\chi(X \cap Y) \subseteq \chi'(X' \cap Y')$.
  \end{itemize}

\end{definition}

Notice that a K-simulation $S$ is an $n$-K-simulation for all $n \in \N$
and a D-simulation $S$ is an $n$-D-simulation for all $n \in \N$.

In the sequel, we write $\Box$ to mean K or D.
For instance,
``a $\Box$-simulation $S$ is an $n$-$\Box$-simulation for all $n \in \N$''
means both
``a K-simulation $S$ is an $n$-K-simulation for all $n \in \N$'' and
``a D-simulation $S$ is an $n$-D-simulation for all $n \in \N$.''

\begin{remark}  \label{rem_n-D-simulation_satisfy_atom}
  Although not explicitly mentioned,
  all $n$-D-simulations satisfy {\atom}
  even if $n \geq 1$.
  Indeed, if facets $X$ and $X'$ satisfy $X \mathrel{S} X'$
  for some $(n + 1)$-D-simulation,
  by taking $Y = X$ in the above definition,
  there exists an $n$-D-simulation $S'$ and
  $Y' \in \facet{M'}$ such that $X \mathrel{S'} Y'$ and
  $\chi(X) = \Pi \subseteq \chi' (X' \cap Y')$.
  Since the latter condition implies $Y' = X'$,
  we have $X \mathrel{S'} X'$.
  This inductively proves that every $n$-simulation satisfies {\atom}.
\end{remark}

\begin{proposition} \label{prop_knowledge_gain_via_n-simulation}
  Suppose $S \subseteq \facet{M} \times \facet{M'}$
  be an $n$-$\Box$-simulation.
  Then, for any facets $(X, X') \in S$
  and positive formula $\varphi \in \pfb$ with
  $\deg(\varphi) \leq n$,
  $M', X' \models \varphi$
  implies $M, X \models \varphi$.

\end{proposition}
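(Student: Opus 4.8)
The plan is to prove the statement by induction on $n$, treating the cases $\Box = \mathrm{K}$ and $\Box = \mathrm{D}$ in parallel, and at each level running an inner induction on the structure of $\varphi$. This mirrors the proof of Theorem~\ref{th_knowledge_gain_via_K-simulation} and its D-analogue, with the extra bookkeeping that an $n$-$\Box$-simulation only guarantees the back condition $n$ times, which is exactly enough when $\deg(\varphi)\le n$.

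For the base case $n=0$, the hypothesis $\deg(\varphi)\le 0$ forces $\varphi$ to be a positive Boolean combination of literals $p$ and $\lnot p$, with no modal operator. A $0$-$\Box$-simulation satisfies {\atom} (by definition for $\mathrm{K}$, and by Remark~\ref{rem_n-D-simulation_satisfy_atom} for $\mathrm{D}$), so $l(X)=l'(X')$ and hence $X,X'$ agree on every atomic proposition. A straightforward induction on the literal-$\land$-$\lor$ structure of $\varphi$ then gives $M',X'\models\varphi$ iff $M,X\models\varphi$.

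For the inductive step I would assume the claim for $n$ and let $S$ be an $(n+1)$-$\Box$-simulation with $\deg(\varphi)\le n+1$, proceeding by inner induction on $\varphi$. The literal cases use {\atom} as above; $\land$ and $\lor$ follow from the inner hypothesis, since the conjuncts/disjuncts again have degree $\le n+1$. The interesting case is a top-level modality, where $\deg(\psi)\le n$. In the $\mathrm{K}$-case, suppose $M',X'\models\mathrm{K}_a\psi$ and fix $Y$ with $X\sim_a Y$; the defining clause of $(n+1)$-K-simulation supplies an $n$-K-simulation $S'$ and a facet $Y'$ with $Y\mathrel{S'}Y'$ and $X'\sim_a Y'$, whence $M',Y'\models\psi$, and the outer induction hypothesis applied to $S'$ and $\psi$ yields $M,Y\models\psi$; as $Y$ was arbitrary, $M,X\models\mathrm{K}_a\psi$. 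The $\mathrm{D}$-case is identical with $\sim_a$ replaced by $A\subseteq\chi(X\cap Y)$: the $(n+1)$-D-simulation clause produces $S'$ and $Y'$ with $\chi(X\cap Y)\subseteq\chi'(X'\cap Y')$, so $A\subseteq\chi'(X'\cap Y')$ gives $M',Y'\models\psi$, and the outer hypothesis again closes the case.

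The main point to get right — and the only genuine obstacle — is the interaction of the two inductions: the degree must drop by exactly one whenever a modal operator is stripped, so that $\psi$ satisfies $\deg(\psi)\le n$ and can be handled by the fresh $n$-$\Box$-simulation $S'$ that the definition hands us. It is essential here that $S'$ need not be $S$ itself; the definition deliberately allows a different witnessing relation at each peel, and the induction on $n$ is precisely what makes this harmless. Once this alignment is fixed, every remaining step is the same routine verification as in the infinitary simulation theorems.
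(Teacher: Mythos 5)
Your proof is correct and follows essentially the same route as the paper, which phrases the same double induction as a single induction on the lexicographic order of the pairs $(n,\varphi)$; in particular, your handling of the modal case (stripping one operator, passing to the fresh $n$-$\Box$-simulation $S'$, and invoking the outer hypothesis on $\psi$ with $\deg(\psi)\le n$) and your appeal to Remark~\ref{rem_n-D-simulation_satisfy_atom} for the D-case literals match the paper's argument exactly.
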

\begin{proof}
  We proceed by induction on the lexicographical order
  of the pairs $(n, \varphi)$.

  For the base case $n = 0$, suppose $X \mathrel{S} X'$ and
  $\varphi$ be a positive formula such that $\deg(\varphi) = 0$,
  which means $\varphi$ has no modal operator.
  \begin{itemize}
    \item For the case of atomic propositions,
      let $\varphi = p \in \At$.
      Since $S$ satisfies the condition {\atom},
      $M, X \models p \iff p \in l(X) \iff p \in l'(X') \iff M', X' \models p$.
      We are done.
    \item The case $\varphi = \lnot p$ is similarly proved.
    \item The cases of conjunction and disjunction are
      easily shown by induction hypothesis.
  \end{itemize}

  For the induction case, suppose $X \mathrel{S} X'$ and
  $\varphi$ be a positive formula such that $\deg(\varphi) = n + 1$.
  \begin{itemize}
    \item The cases of atomic propositions and
      negated atomic propositions are similarly proved
      since $S$ satisfies {\atom}.
      (See Remark~\ref{rem_n-D-simulation_satisfy_atom}.)
    \item The cases of conjunction and disjunction are
      easily shown by induction hypothesis.
    \item For the case of modal operator,
      suppose $\Box = \mathrm{D}$ and $\varphi = \mathrm{D}_A \psi$.
      Then $\deg(\psi) \leq n$.
      Assuming $M', X' \models \mathrm{D}_A \psi$,
      we show that $M, Y \models \psi$ holds for any $Y \in \facet{M}$
      such that $A \subseteq \chi(X \cap Y)$.
      By definition, there exists an $n$-simulation $S'$
      and a facet $Y' \in \facet{M'}$
      such that $Y \mathrel{S'} Y'$ and
      $A \subseteq \chi(X \cap Y) \subseteq \chi'(X' \cap Y')$.
      This implies $M', Y' \models \psi$
      and hence $M, Y \models \psi$ by induction hypothesis.
      Therefore, $M, X \models \mathrm{D}_A \psi$.

      The other case of $\Box = \mathrm{K}$ is similarly proved.
      \qedhere
  \end{itemize}
\end{proof}

\begin{corollary} \label{cor_knowledge_gain_via_n}
  If there exists a total $n$-$\Box$-simulation of $M$ by $M'$,
  for any positive formula $\varphi \in \pfb$ with $\deg(\varphi) \leq n$,
  $M' \models \varphi$ implies $M \models \varphi$.
\end{corollary}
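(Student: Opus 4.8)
The plan is to derive this directly from Proposition~\ref{prop_knowledge_gain_via_n-simulation}, using totality of the simulation merely to supply, for each facet of $M$, a related facet of $M'$ at which the hypothesis can be invoked. First I would unfold the global satisfaction relation: $M \models \varphi$ abbreviates ``$M, X \models \varphi$ for all $X \in \facet{M}$'', and likewise $M' \models \varphi$ means $M', X' \models \varphi$ for every $X' \in \facet{M'}$. So the goal reduces to checking $M, X \models \varphi$ facet by facet, under the standing assumptions that $S$ is a total $n$-$\Box$-simulation, that $M' \models \varphi$, and that $\varphi \in \pfb$ with $\deg(\varphi) \leq n$.

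The core argument is then a single witnessing step. Fix an arbitrary facet $X \in \facet{M}$. By totality of $S$, there exists some $X' \in \facet{M'}$ with $X \mathrel{S} X'$. Since $M' \models \varphi$, in particular $M', X' \models \varphi$ holds at this chosen $X'$. Now $(X, X') \in S$, $S$ is an $n$-$\Box$-simulation, and $\deg(\varphi) \leq n$, so Proposition~\ref{prop_knowledge_gain_via_n-simulation} applies and gives $M, X \models \varphi$. As $X \in \facet{M}$ was arbitrary, we conclude $M \models \varphi$, which is exactly the claim. This mirrors verbatim the way the earlier corollaries (after Theorem~\ref{th_knowledge_gain_via_K-simulation} and as Corollary~\ref{cor_knowledge_gain_via_D}) are obtained from their respective facet-level knowledge-gain statements.

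I do not anticipate a genuine obstacle here: all the substance lives in Proposition~\ref{prop_knowledge_gain_via_n-simulation}, which is already established by induction on the lexicographic order of $(n, \varphi)$, and the corollary contributes nothing beyond the existential quantifier elimination that totality provides. The only point deserving a moment's care is purely bookkeeping, namely that totality is precisely what guarantees \emph{some} $S$-related $X'$ for \emph{each} $X$, so that the per-facet invocation of the proposition genuinely ranges over all of $\facet{M}$ and hence covers every world at which $\varphi$ must be verified.
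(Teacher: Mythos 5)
Your proof is correct and is exactly the argument the paper intends (the corollary is left without an explicit proof precisely because it follows from Proposition~\ref{prop_knowledge_gain_via_n-simulation} by the one witnessing step you describe: totality supplies, for each $X \in \facet{M}$, a related $X'$ at which $M' \models \varphi$ is instantiated). Nothing is missing.
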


\begin{definition}  \label{def_n-simulation}
  We define a map
  $f^{\Box} \colon \mathcal{P}(\rel) \to \mathcal{P}(\rel)$
  over relations as follows.
  \begin{itemize}
    \item Given a relation $R \subseteq \rel$,
      $X \f{\mathrm{K}}{R} X'$ if
      for all $a \in \Pi$, $Y \in \facet{M}$ with $X \sim_a Y$,
      there exists $Y' \in \facet{M'}$
      such that $Y \mathrel{R} Y'$ and
      $X' \sim_a Y'$.
    \item Given a relation $R \subseteq \rel$,
      $X \f{\mathrm{D}}{R} X'$ if
      for all $Y \in \facet{M}$
      there exists $Y' \in \facet{M'}$
      such that $Y \mathrel{R} Y'$ and
      $\chi(X \cap Y) \subseteq \chi'(X' \cap Y')$.
  \end{itemize}

  We also define a sequence of binary relations $(\SB{n})_{n \in \N}$,
  where $\SB{n} \subseteq \rel$, as follows.
  \begin{itemize}
    \item $\SB{0} = \{(X, X') \mid l(X) = l'(X)\}$.
    \item $\SK{n+1} = \SK{0} \cap f^{\mathrm{K}}(\SK{n})$
    \item $\SD{n+1} = f^{\mathrm{D}}(\SD{n})$.
  \end{itemize}

\end{definition}

\begin{proposition} \label{prop_Sn_is_maximum}
  $\S{n}$ is the maximum $n$-$\Box$-simulation with respect to inclusion.

\end{proposition}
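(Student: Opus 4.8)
The plan is to prove Proposition~\ref{prop_Sn_is_maximum} by a single induction on $n$, establishing simultaneously two claims: (i) $\SB{n}$ is itself an $n$-$\Box$-simulation, and (ii) every $n$-$\Box$-simulation $R$ satisfies $R \subseteq \SB{n}$. Together these are exactly the assertion that $\SB{n}$ is the maximum $n$-$\Box$-simulation with respect to inclusion. Before the induction I would record the one structural fact that drives everything, namely that the operator $f^{\Box}$ of Definition~\ref{def_n-simulation} is monotone: if $R \subseteq R'$ then $f^{\Box}(R) \subseteq f^{\Box}(R')$. This is immediate from the purely existential shape of the defining clauses of $\f{\mathrm{K}}{R}$ and $\f{\mathrm{D}}{R}$, since any witness $Y'$ valid for $R$ remains valid for the larger $R'$.

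For the base case $n=0$, the relation $\SB{0} = \{(X,X') \mid l(X) = l'(X')\}$ is precisely the set of pairs satisfying {\atom}, so it is a $0$-$\Box$-simulation, and by definition any $0$-$\Box$-simulation is contained in it; both claims hold at once. In the inductive step I would handle the two modalities in parallel, drawing on \emph{both} halves of the induction hypothesis. For the K-case, claim (i) uses $\SK{n+1} = \SK{0} \cap f^{\mathrm{K}}(\SK{n})$: the factor $\SK{0}$ supplies {\atom}, and unfolding $f^{\mathrm{K}}(\SK{n})$ yields, for every $X \sim_a Y$, a witness $Y'$ with $Y \mathrel{\SK{n}} Y'$ and $X' \sim_a Y'$, where hypothesis~(i) at level $n$ certifies that $\SK{n}$ is a legitimate $n$-K-simulation to serve as the required witnessing relation $R'$ in the definition of {\kforth}. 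Claim~(ii) runs in the opposite direction: given any $(n+1)$-K-simulation $R$ and a pair $X \mathrel{R} X'$, the {\atom} clause puts the pair in $\SK{0}$, and for each $X \sim_a Y$ the definition supplies \emph{some} $n$-K-simulation $R'$ and some $Y'$; hypothesis~(ii) at level $n$ gives $R' \subseteq \SK{n}$, whence $Y \mathrel{\SK{n}} Y'$ and therefore $X \f{\mathrm{K}}{\SK{n}} X'$, so $X \mathrel{\SK{n+1}} X'$.

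The D-case is identical in shape but slightly leaner, since $\SD{n+1} = f^{\mathrm{D}}(\SD{n})$ carries no separate $\SD{0}$ intersection and the definition of $(n+1)$-D-simulation imposes only the {\dforth}-style condition; the {\atom} property is inherited automatically by the argument of Remark~\ref{rem_n-D-simulation_satisfy_atom}. Concretely, for claim~(i) one unfolds $f^{\mathrm{D}}(\SD{n})$ and uses hypothesis~(i) to take $R' = \SD{n}$, while for claim~(ii) one takes an arbitrary $(n+1)$-D-simulation $R$, reads off for each $Y$ a witness pair $(R', Y')$ with $\chi(X \cap Y) \subseteq \chi'(X' \cap Y')$, and absorbs $R'$ into $\SD{n}$ via hypothesis~(ii) to conclude $X \f{\mathrm{D}}{\SD{n}} X'$.

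I expect the only point needing genuine care—and the real crux of the proof—to be the existential quantifier ``there exists an $n$-$\Box$-simulation $R'$'' built into the definition of $n$-$\Box$-simulation: the witnessing relation is permitted to depend on the chosen move. The resolution is precisely the interlocking of the two induction hypotheses. Claim~(i) at level $n$ is what guarantees that the canonical relation $\SB{n}$ is an admissible witness when verifying that $\SB{n+1}$ is a simulation, and claim~(ii) at level $n$ is what lets me replace any move-dependent witness $R'$ by $\SB{n}$ when proving maximality. Once this mutual dependence is kept straight, the remaining verifications reduce to unwinding the definitions of $f^{\Box}$ and $\SB{n}$, with monotonicity of $f^{\Box}$ making the bookkeeping routine, so I anticipate no further obstacle.
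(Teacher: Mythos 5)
Your proposal is correct and follows essentially the same route as the paper: an induction on $n$ whose hypothesis is exactly the conjunction of your claims (i) and (ii) (the paper phrases it as ``$\S{n}$ is the maximum $n$-$\Box$-simulation''), with the witnessing relation $R'$ absorbed into $\S{n}$ by maximality in the inductive step. The paper's proof is merely terser; your explicit separation of the two claims and the remark on handling {\atom} in the D-case via Remark~\ref{rem_n-D-simulation_satisfy_atom} add no new ideas beyond what the paper's argument already contains.
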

\begin{proof}
  We proceed by induction on $n$.

  The base case $n = 0$ is trivially follows from definition.

  For the induction case, suppose $\SK{n}$ is
  the maximum $n$-K-simulation.
  Then, $\SK{n + 1} = \SK{0} \cap f^{\mathrm{K}}(\SK{n})$ is
  an $(n + 1)$-simulation by definition.
  Let $R$ be any $(n + 1)$-K-simulation and
  suppose $X \mathrel{R} X'$.
  Then $l(X) = l'(X')$ and
  for all $Y \in \facet{M}$ with $X \sim_a Y$,
  there exists an $n$-K-simulation $R'$ and $Y' \in \facet{M'}$
  such that $X' \sim_a Y'$ and $Y \mathrel{R'} Y'$.
  Since $\SK{n}$ is the maximum $n$-K-simulation,
  we have $R' \subseteq \SK{n}$ and
  $R \subseteq \SK{0} \cap f^{\mathrm{K}}(\SK{n}) = \SK{n + 1}$.
  Therefore, $\SK{n + 1}$ is the maximum $(n + 1)$-$\Box$-simulation.
  The case of $\Box = \mathrm{D}$ is similar.
\end{proof}

\begin{corollary}
  $\S{n+1} \subseteq \S{n}$ for all $n$.
\end{corollary}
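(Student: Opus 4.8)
The plan is to deduce the inclusion from the maximality statement of Proposition~\ref{prop_Sn_is_maximum}. Since $\S{n}$ is the \emph{maximum} $n$-$\Box$-simulation, it suffices to verify that $\S{n+1}$ is itself an $n$-$\Box$-simulation; maximality then yields $\S{n+1} \subseteq \S{n}$ at once. By Proposition~\ref{prop_Sn_is_maximum}, $\S{n+1}$ is an $(n+1)$-$\Box$-simulation, so the whole corollary reduces to the following uniform fact: every $(n+1)$-$\Box$-simulation is also an $n$-$\Box$-simulation. This is the only real content, and I would isolate it as a lemma and prove it by induction on $n$ (for both $\Box = \mathrm{K}$ and $\Box = \mathrm{D}$ simultaneously).

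For the base case $n = 0$, I must show that every $1$-$\Box$-simulation satisfies {\atom}, hence is a $0$-$\Box$-simulation. When $\Box = \mathrm{K}$ this is immediate, since {\atom} is explicitly part of the definition of a $1$-K-simulation. When $\Box = \mathrm{D}$, {\atom} is not imposed syntactically, but it holds by the argument of Remark~\ref{rem_n-D-simulation_satisfy_atom}: instantiating the \dforth{} clause at $Y = X$ forces the matching facet $Y'$ to equal $X'$, whence $l(X) = l'(X')$. For the inductive step, let $R$ be an $(n+2)$-$\Box$-simulation; I must check that $R$ meets the definition of an $(n+1)$-$\Box$-simulation. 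It already satisfies {\atom}. Given a witnessing configuration (for $\Box = \mathrm{K}$, facets $X,Y,X'$ and an agent $a$ with $X \mathrel{R} X'$ and $X \sim_a Y$; for $\Box = \mathrm{D}$, facets $X,Y,X'$ with $X \mathrel{R} X'$), the $(n+2)$-$\Box$-simulation property supplies an $(n+1)$-$\Box$-simulation $R'$ and a facet $Y'$ with $Y \mathrel{R'} Y'$ together with the required condition ($X' \sim_a Y'$, respectively $\chi(X \cap Y) \subseteq \chi'(X' \cap Y')$). By the induction hypothesis $R'$ is an $n$-$\Box$-simulation, which is exactly the type of witness demanded by the definition of an $(n+1)$-$\Box$-simulation. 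Hence $R$ is an $(n+1)$-$\Box$-simulation, completing the induction.

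With the lemma in hand the corollary is immediate: $\S{n+1}$ is an $(n+1)$-$\Box$-simulation by Proposition~\ref{prop_Sn_is_maximum}, hence an $n$-$\Box$-simulation by the lemma, hence contained in the maximum $n$-$\Box$-simulation $\S{n}$. I expect the only delicate point to be bookkeeping the indices of the witness simulations in the inductive step, namely making sure the witness handed over by the $(n+2)$-simulation is downgraded by exactly one level via the induction hypothesis so that it fits the $(n+1)$-simulation requirement. An alternative route avoids the lemma altogether: one checks directly that $f^{\Box}$ is monotone and argues by induction on $n$ using $\SK{n+1} = \SK{0} \cap f^{\mathrm{K}}(\SK{n})$ and $\SD{n+1} = f^{\mathrm{D}}(\SD{n})$; there the base case $\SD{1} \subseteq \SD{0}$ again rests on Remark~\ref{rem_n-D-simulation_satisfy_atom}, while the inductive step is pure monotonicity. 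Either way, Remark~\ref{rem_n-D-simulation_satisfy_atom} is the one external ingredient that carries the distributed-knowledge case.
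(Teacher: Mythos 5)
Your proof is correct and follows essentially the same route as the paper, which disposes of the corollary in one line by citing maximality of $\S{n}$ together with the fact that any $(n+1)$-$\Box$-simulation is an $n$-$\Box$-simulation. You merely spell out that latter fact as an explicit induction (correctly flagging that the base case for $\Box=\mathrm{D}$ rests on Remark~\ref{rem_n-D-simulation_satisfy_atom}), which is a more careful rendering of what the paper dismisses as ``by definition.''
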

\begin{proof}
  It follows from that any $(n + 1)$-$\Box$-simulation
  is also $n$-$\Box$-simulation by definition.
\end{proof}

\subsection{A method for determining non-existence of logical obstruction}

\begin{proposition} \label{prop_equivalence_simulation_fixpoint}
  For any binary relation $R \subseteq \rel$,
  $R$ is a $\Box$-simulation if and only if
  $R \subseteq \S{0} \cap f^{\Box}(R)$.

\end{proposition}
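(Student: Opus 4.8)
The plan is to prove the biconditional by a direct unfolding of the definitions of $\S{0}$, of $f^{\Box}$, and of a $\Box$-simulation, observing that the single containment $R \subseteq \S{0} \cap f^{\Box}(R)$ factors exactly into the two defining clauses of a $\Box$-simulation. First I would use the elementary fact that $R \subseteq \S{0} \cap f^{\Box}(R)$ holds if and only if both $R \subseteq \S{0}$ and $R \subseteq f^{\Box}(R)$ hold, and then treat these two inclusions separately. For the first, I recall that $\S{0}$ is precisely the set of pairs $(X, X')$ with $l(X) = l'(X')$, so that $R \subseteq \S{0}$ says exactly that every $(X, X') \in R$ satisfies $l(X) = l'(X')$; this is verbatim the clause {\atom}, shared by the definitions of both K-simulation and D-simulation.

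For the inclusion $R \subseteq f^{\Box}(R)$, I would split on $\Box$ and expand the definition of $f^{\Box}$ in each case. The inclusion reads ``for every $(X, X') \in R$ one has $X \f{\Box}{R} X'$'', and substituting the definition of $f^{\Box}$ turns this into a universally quantified statement that coincides word for word with the relevant forth clause. When $\Box = \mathrm{K}$, the quantification over $a \in \Pi$ and over $Y$ with $X \sim_a Y$ built into $f^{\mathrm{K}}$ reproduces {\kforth}; when $\Box = \mathrm{D}$, the quantification over all $Y$ built into $f^{\mathrm{D}}$ reproduces {\dforth}. Combining the two inclusions, $R \subseteq \S{0} \cap f^{\Box}(R)$ is equivalent to the conjunction of {\atom} and the appropriate forth clause, i.e.\ to $R$ being a $\Box$-simulation.

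Since the argument is a definitional unfolding, I expect no substantial obstacle beyond lining up the quantifiers correctly. The outer universal quantifier over the pairs of $R$ implicit in $R \subseteq f^{\Box}(R)$ must be matched with the universal quantifier over $X, Y, X'$ (and $a$, in the K-case) in the forth clauses: the hypothesis ``$X \mathrel{R} X'$'' of the forth clause is supplied exactly by ranging over the pairs of $R$, while the inner quantifiers over $Y$ are already part of $f^{\Box}$, so they transfer without adjustment. I would also flag, for the reader's orientation, that in the $\Box = \mathrm{D}$ case it is the factor $\S{0}$ that explicitly enforces {\atom}, in contrast to the definition of $n$-D-simulation (cf.\ Remark~\ref{rem_n-D-simulation_satisfy_atom}), where {\atom} is instead derived rather than imposed.
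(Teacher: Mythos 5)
Your proposal is correct and follows essentially the same route as the paper's own proof: split the containment into $R \subseteq \S{0}$ (which is verbatim \atom) and $R \subseteq f^{\Box}(R)$ (which unfolds to \kforth\ or \dforth), then observe each equivalence is a definitional unfolding. The extra remark contrasting this with the derived \atom\ in Remark~\ref{rem_n-D-simulation_satisfy_atom} is a harmless aside and does not change the argument.
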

\begin{proof}
  By definition of the condition {\atom},
  $R$ satisfies {\atom} if and only if $R \subseteq \S{0}$.
  It suffices to show that
  $R$ satisfies {\bforth}
  if and only if $R \subseteq \f{\Box}{R}$.

  Suppose $R$ satisfies {\kforth}.
  Let $X \in \facet{M}$ and $X' \in \facet{M'}$
  be facets with $X \mathrel{R} Y$.
  Applying the condition {\kforth}, we obtain that
  for any $Y \in \facet{M}$ with $X \sim_a Y$,
  there exists $Y' \in \facet{M'}$
  such that $Y \mathrel{R} Y'$ and $X' \sim_a Y'$.
  This implies $X \f{\mathrm{K}}{R} Y$. Hence $R \subseteq \f{\mathrm{K}}{R}$.
  The converse is similar.

  The case for $\Box = \mathrm{D}$ is similarly proved.
\end{proof}

\begin{theorem} \label{th_maximam_simulation}
  Suppose $M$ and $M'$ are finite.
  Then, there exists $n \in \N$ such that
  $\S{n}$ is the maximum $\Box$-simulation.
  Moreover, the maximum $\Box$-simulation is total
  if $\S{n}$ is total for all $n \in \N$.

\end{theorem}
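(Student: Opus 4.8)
The plan is to exploit finiteness to show that the decreasing chain $(\S{n})_{n \in \N}$ stabilizes, and then to identify its limit with the maximum $\Box$-simulation using the fixpoint characterization of Proposition~\ref{prop_equivalence_simulation_fixpoint}. First I would observe that since $M$ and $M'$ are finite, the set $\rel$ is finite, hence so is its power set. By the corollary establishing $\S{n+1} \subseteq \S{n}$, the sequence $(\S{n})_{n}$ is a decreasing chain of subsets of this finite set, so it must eventually become constant: there is some $N \in \N$ with $\S{N+1} = \S{N}$.

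Next I would verify that this $\S{N}$ is a $\Box$-simulation by checking the criterion of Proposition~\ref{prop_equivalence_simulation_fixpoint}, namely $\S{N} \subseteq \S{0} \cap f^{\Box}(\S{N})$. In the case $\Box = \mathrm{K}$ this is immediate, since $\S{N} = \S{N+1} = \SK{0} \cap f^{\mathrm{K}}(\S{N})$ already has the required form. In the case $\Box = \mathrm{D}$ the recurrence only gives $\S{N} = \S{N+1} = f^{\mathrm{D}}(\S{N})$, so I additionally need $\S{N} \subseteq \S{0}$; this follows because $\S{N}$ is the maximum $N$-D-simulation by Proposition~\ref{prop_Sn_is_maximum} and every $n$-D-simulation satisfies {\atom} by Remark~\ref{rem_n-D-simulation_satisfy_atom}, whence $\S{N} \subseteq \S{0}$ and therefore $\S{N} \subseteq \S{0} \cap f^{\mathrm{D}}(\S{N})$. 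In either case Proposition~\ref{prop_equivalence_simulation_fixpoint} yields that $\S{N}$ is a $\Box$-simulation.

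For maximality I would recall that every $\Box$-simulation $S$ is an $n$-$\Box$-simulation for all $n$, so by Proposition~\ref{prop_Sn_is_maximum} we have $S \subseteq \S{n}$ for every $n$, and in particular $S \subseteq \S{N}$. Since $\S{N}$ is itself a $\Box$-simulation, it is the maximum one, which proves the first assertion with the witness $n = N$. The totality claim is then a direct consequence: the maximum $\Box$-simulation is exactly this $\S{N}$, so if $\S{n}$ is total for every $n \in \N$ then in particular $\S{N}$ is total, and hence the maximum $\Box$-simulation is total.

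The only delicate point, and the main thing to get right, is the $\Box = \mathrm{D}$ case of the second step: because the definition of $\SD{n+1}$ applies $f^{\mathrm{D}}$ without intersecting with $\S{0}$, one cannot directly read off the {\atom} condition from the stabilized fixpoint and must separately establish $\S{N} \subseteq \S{0}$, which is precisely where Remark~\ref{rem_n-D-simulation_satisfy_atom} enters. Everything else is routine once stabilization of the chain has been secured.
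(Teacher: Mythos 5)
Your proof is correct and follows essentially the same route as the paper's: finiteness forces the descending chain $(\S{n})_{n \in \N}$ to stabilize, and Proposition~\ref{prop_equivalence_simulation_fixpoint} identifies the stabilized relation as a $\Box$-simulation, from which maximality and the totality claim follow. The only local differences are that you obtain maximality from Proposition~\ref{prop_Sn_is_maximum} together with the observation that every $\Box$-simulation is an $n$-$\Box$-simulation, whereas the paper argues directly from the monotonicity of $f^{\Box}$, and that you explicitly supply the $\S{N} \subseteq \S{0}$ step in the case $\Box = \mathrm{D}$ via Remark~\ref{rem_n-D-simulation_satisfy_atom} --- a detail the paper's one-line appeal to the fixpoint characterization leaves implicit.
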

\begin{proof}
  Because the powerset of $\rel$ is finite, the descending chain
  $\S{0} \supseteq \S{1} \supseteq \cdots \supseteq \S{n} \supseteq \cdots$
  stabilizes for a sufficiently large $n$.
  Then, $\S{n}$ is a simulation by Proposition~\ref{prop_equivalence_simulation_fixpoint}.
  Let $R \subseteq \rel$ be any simulation.
  Again by Proposition~\ref{prop_equivalence_simulation_fixpoint},
  $R \subseteq \S{0}$ and $R \subseteq f^{\Box}(R)$.
  Because $f^{\Box}$ is monotone with respect to inclusion by definition,
  we obtain $R \subseteq f^{\Box}\circ \dots \circ f^{\Box}(R)
    \subseteq f^{\Box}\circ \dots \circ f^{\Box}(\S{0}) = \S{n}$.
\end{proof}

It is worth observing here that
the finiteness assumption of the theorem above cannot be eliminated.

\begin{proposition}
  There exists an action model $\mathcal{P}$
  and a task $\mathcal{T}$ such that
  neither logical obstruction to the solvability of $T$ by $P$
  nor total $\Box$-simulation exists.

\end{proposition}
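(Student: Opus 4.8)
The plan is to refute the converse of Theorem~\ref{th_maximam_simulation} in the infinite setting by exhibiting a König's-lemma gap: an \emph{infinite} pair $(\mathcal{P},\mathcal{T})$ for which every finite approximant is total (so that no logical obstruction exists) but no genuine total $\Box$-simulation exists. At the level of the facet Kripke models the picture is that $\I[\mathcal{P}]$ should be a single infinite $\sim$-path with pairwise distinct labels, while $\I[\mathcal{T}]$ should contain, for each $n$, a \emph{disjoint} copy of the length-$n$ initial segment of that path and no infinite path. Concretely I would fix $\Pi=\{a,b\}$ (so all facets are edges) and distinct input values $x_0,x_1,\dots$, $y_0,y_1,\dots$, and consider the alternating staircase of input facets $E_{2j}=\{(a,x_j),(b,y_j)\}$, $E_{2j+1}=\{(a,x_j),(b,y_{j+1})\}$, with $E_0\sim_a E_1\sim_b E_2\sim_a\cdots$ and labels $\ell_k:=l(E_k)$. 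I would take $\mathcal{P}$ to be the action model whose complex is an infinite path $\tau_0\sim_a\tau_1\sim_b\tau_2\cdots$ mirroring the staircase, with $\pre(\tau_k)=\ip_a^{\,x_{\lfloor k/2\rfloor}}\wedge\ip_b^{\,y_{\lceil k/2\rceil}}$, the finite propositional formula true at exactly $E_k$; the product update then selects exactly $\{E_k\times_{\Pi}\tau_k\}$, so $\I[\mathcal{P}]$ is the infinite path. For $\mathcal{T}$ I would take, for each $n$, a disjoint finite path of actions $\sigma^n_0\sim_a\cdots\sim\sigma^n_n$ with $\pre(\sigma^n_k)=\pre(\tau_k)$; labelling each action by the pair of its own vertices as decision values witnesses the injection $\iota$ required of a task. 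Writing $F^M_k:=E_k\times_{\Pi}\sigma^M_k$, the product update yields $\I[\mathcal{T}]=\bigsqcup_M Q_M$, where $Q_M$ is the disjoint length-$M$ path $F^M_0\sim_a\cdots\sim F^M_M$ carrying labels $\ell_0,\dots,\ell_M$.

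Next I would establish totality of each finite approximant. For fixed $n$, define $R_n$ relating $E_k\times_{\Pi}\tau_k$ to $F^M_k$ whenever the budget $M-k\ge n$. This is total, and by induction on $n$ it is an $n$-$\Box$-simulation of $\I[\mathcal{P}]$ by $\I[\mathcal{T}]$: a single $\sim_c$-step from $E_k$ is matched by the corresponding step to $F^M_{k\pm1}$ in $Q_M$ (for the $\Box=\mathrm{D}$ clause, the same applies to each $\sim_c$-adjacent $Y$, while non-adjacent $Y$ impose no constraint), and the residual budget $M-(k\pm1)\ge n-1$ lets the continuation lie in $R_{n-1}$. Hence a total $n$-$\Box$-simulation exists for every $n$, so by Corollary~\ref{cor_knowledge_gain_via_n} any positive $\varphi$ (of finite degree $\deg(\varphi)$) true throughout $\I[\mathcal{T}]$ is true throughout $\I[\mathcal{P}]$; that is, no logical obstruction exists, in either $\pf$ or $\pfd$.

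I would then show that no total $\Box$-simulation $S$ exists. If $E_0\times_{\Pi}\tau_0\mathrel{S}G_0$, then by {\atom} the facet $G_0$ has label $\ell_0$ and so lies in some $Q_N$ with $G_0=F^N_0$. Walking along the infinite path and applying {\bforth} at each step, the distinctness of the labels $\ell_k$ forces $G_k=F^N_k$ for all $k$ (no $\sim_c$-neighbour of the correct label other than the intended one is available, since the $Q_M$ are vertex-disjoint). But $Q_N$ terminates at index $N$: the endpoint $F^N_N$ has no $\sim_c$-neighbour of label $\ell_{N+1}$, contradicting {\bforth} at the $(N{+}1)$-st step. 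Thus $S$ cannot be total at $E_0\times_{\Pi}\tau_0$, which simultaneously shows (consistently) that no morphism $\I[\mathcal{P}]\to\I[\mathcal{T}]$ exists, so $\mathcal{T}$ is in fact unsolvable by $\mathcal{P}$ yet admits no obstruction.

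The main obstacle is the realizability bookkeeping rather than the combinatorics: one must confirm that $\mathcal{P}$ is a legitimate action model and $\mathcal{T}$ a legitimate task---in particular that the decision-vector labelling satisfies the injectivity-plus-indistinguishability clause of the task definition, that the finite preconditions genuinely isolate single input facets, and that the disjoint copies $Q_M$ introduce no spurious $\sim_c$-edges between distinct paths. Once the two complexes are pinned down exactly as the infinite path and $\bigsqcup_M Q_M$, the budget induction for $R_n$ and the label-forced walk for non-totality are routine, and together they witness that the finiteness hypothesis of Theorem~\ref{th_maximam_simulation} cannot be dropped.
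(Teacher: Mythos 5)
Your proof is correct, and it rests on the same underlying idea as the paper's: a compactness (K\"onig's-lemma) gap in which every finite-depth approximant is total---so Corollary~\ref{cor_knowledge_gain_via_n} rules out any obstruction, because every positive formula has finite degree---while the descending chain of approximants has empty intersection, so no genuine total $\Box$-simulation survives. The witnessing models differ, though. You make $\mathcal{I}[\mathcal{P}]$ an infinite path and $\mathcal{I}[\mathcal{T}]$ a disjoint union of all of its finite truncations $Q_M$, the ``budget'' being the remaining length $M-k$ of the chosen copy; this forces an infinite input alphabet and an infinite action model on the protocol side. The paper instead keeps the protocol finite---the trivial one-round protocol over $\Pi=V^{in}=\{0,1\}$, whose update model is the $4$-cycle---and takes the task model to be a single infinite ray whose labels wind around that cycle with period four (enforced by the $\bmod\ 2$ preconditions); there the budget is the distance $d_0+d_1$ from the ray's endpoint, and non-totality follows because simulating the cycle would require walking backwards along the ray forever and falling off its end. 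The paper's version is slightly sharper in that it locates the failure entirely in the infinitude of the task (protocol and input alphabet stay finite), whereas yours also makes the protocol infinite; but the proposition only asserts existence, so either construction suffices, and the bookkeeping you flag as the main obstacle (legitimacy of $\mathcal{T}$ via the vertex-pair injection $\iota$, isolation of single input facets by the preconditions, vertex-disjointness of the copies $Q_M$) does check out against the paper's definitions.
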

\begin{proof}
  We consider the case $\Pi = V^{in} = \{0,1\}$.
  Let us define a protocol $P = (V_P, S_P, \chi_P, \pre_P)$ as follows:
  \begin{itemize}
    \item $V_P = \Pi$,
    \item $S_P = \mathcal{P}(V_P)$,
    \item $\chi_P(a) = a$ and
    \item $\pre_P(X) = \top$,
  \end{itemize}
  where $\top \in \mathcal{L}_K$ is defined by
  $p \lor \lnot p$ for some $p \in \At$.
  We also define a task $T = (V_T, S_T, \chi_T, \pre_T)$ as follows:
  \begin{itemize}
    \item $V^{out} = \N$,
    \item $V_T = \Pi \times V^{out}$,
    \item $\facet{T} = \{\{(a, decide(a)) \mid a \in \Pi\}
      \mid decide \in [\Pi, V^{out}], 0 \leq decide(0) - decide(1) \leq 1\}$,
    \item $\chi_T(a,d) = a$ and
    \item $\pre_P(\{(a, decide(a)) \mid a \in \Pi\}) =
      \ip_0^{decide(0) \bmod 2} \land \ip_1^{decide(1) \bmod 2}$.
  \end{itemize}
  This is a well-defined task because there is an injection
  $\{(a, decide(a)) \mid a \in \Pi\} \mapsto decide$.
  Let $\mathcal{I}$ be an input model.
  Then, the product update models $\mathcal{I}[P]$ and $\mathcal{I}[T]$
  are defined as illustrated in Figure~\ref{fig_counterexample},
  where we write $(a, i_a)$ to mean a vertex
  $((a, i_a), a)$ of $\mathcal{I}[P]$ and
  $(a, i_a, d_a)$ to mean a vertex
  $((a, i_a), (a, d_a))$ of $\mathcal{I}[T]$.

  For all $i,j \in \Pi$, let $X_{i,j} = \{(0,i), (1,j)\}$
  be a facet of $\mathcal{I}[P]$.
  Then, an easy induction shows:
  \[
    X_{i,j} \mathrel{\SB{n}} X'
    \iff
    \text{there exists $d_0, d_1 \in \N$ such that
    $d_0 + d_1 \geq n$ and $X' = \{\{(0,i,d_0), (1,j,d_1)\}\}$.}
  \]
  Hence, $\SB{n}$ is total for all $n$ and
  $\bigcap_{n \in \N} \SB{n} = \emptyset$.

  Non-existence of logical obstruction follows from the totality of $\SB{n}$'s
  and Corollary~\ref{cor_knowledge_gain_via_n}.

  Let $R$ be any $\Box$-simulation.
  Then we have $R \subseteq \bigcap_{n \in \N} \SB{n} = \emptyset$
  since any $\Box$-simulation is an $n$-$\Box$-simulation and
  by Proposition~\ref{prop_Sn_is_maximum}, $\SB{n}$ is
  the maximum $n$-$\Box$-simulation for all $n$.
  Thus, $R$ cannot be a total $\Box$-simulation.
\end{proof}


\begin{figure}
  \centering

\begin{tikzpicture}
  \node[circle,scale=0.6,draw,fill=white,label=below:{$0$}] (c0v0) at (-4, 0) {};
  \node[circle,scale=0.6,draw,fill=white,label=above:{$1$}] (c0v1) at (-3, 1) {};
  \node[circle,scale=0.6,draw,fill=black,label=below:{$0$}] (c1v0) at (-3, 0) {};
  \node[circle,scale=0.6,draw,fill=black,label=above:{$1$}] (c1v1) at (-4, 1) {};
  \draw (c0v0) -- (c1v0) -- (c0v1) -- (c1v1) -- (c0v0);
  \node[circle,scale=0.6,draw,fill=white,label=below:{$0 \mapsto 0$}] (c0v0d0) at (-1.5, 0) {};
  \node[circle,scale=0.6,draw,fill=black,label=below:{$0 \mapsto 0$}] (c1v0d0) at (0, 0) {};
  \node[circle,scale=0.6,draw,fill=white,label=above:{$1 \mapsto 1$}] (c0v1d1) at (0, 1) {};
  \node[circle,scale=0.6,draw,fill=black,label=above:{$1 \mapsto 1$}] (c1v1d1) at (1.5, 1) {};
  \node[circle,scale=0.6,draw,fill=white,label=below:{$0 \mapsto 2$}] (c0v0d2) at (1.5, 0) {};
  \node[circle,scale=0.6,draw,fill=black,label=below:{$0 \mapsto 2$}] (c1v0d2) at (3, 0) {};
  \node[circle,scale=0.6,draw,fill=white,label=above:{$1 \mapsto 3$}] (c0v1d3) at (3, 1) {};
  \node[circle,scale=0.6,draw,fill=black,label=above:{$1 \mapsto 3$}] (c1v1d3) at (4.5, 1) {};
  \node[circle,scale=0.6,draw,fill=white,label=below:{$0 \mapsto 4$}] (c0v0d4) at (4.5, 0) {};
  \coordinate (x) at (5.5,0);
  \draw (c0v0d0) -- (c1v0d0);
  \draw (c1v0d0) -- (c0v1d1);
  \draw (c0v1d1) -- (c1v1d1);
  \draw (c1v1d1) -- (c0v0d2);
  \draw (c0v0d2) -- (c1v0d2);
  \draw (c1v0d2) -- (c0v1d3);
  \draw (c0v1d3) -- (c1v1d3);
  \draw (c1v1d3) -- (c0v0d4);
  \draw (c0v0d4) -- (x);
  \node (dots) at (6,0.5) {$\cdots$};
\end{tikzpicture}
\caption{Simplicial models $\mathcal{I}[P]$ (on the left) and
$\mathcal{I}[T]$ (on the right)}
\label{fig_counterexample}
\end{figure}

\subsection{A method for constructing a logical obstruction}

  In the following, we assume simplicial models $M, M'$ are finite.

\begin{definition}

  For any facet $X \in \facet{M}$,
  we define positive formulas $\PhiB(n, X)$ as follows
  by induction on $n$:
  \begin{align*}
    \PhiB(0, X)
      & = \bigvee_{p \in l(X)} \lnot p
        \lor \bigvee_{p \in \At \setminus l(X)} p \\
    \PhiK(n + 1, X)
      & = \PhiK(0, X) \lor
        \bigvee_{a \in \Pi}
        \bigvee_{\substack{Y \in \facet{M} \\ X \sim_a Y}}
        \mathrm{K}_a \PhiK(n, Y) \\
    \PhiD(n + 1, X)
      & = \bigvee_{Y \in \facet{M}}
      \mathrm{D}_{\chi(X \cap Y)} \PhiD(n, Y)
  \end{align*}

\end{definition}

The following theorem implies that
each formula $\PhiB(n, X)$ characterizes facets $X'$ such that
$X$ and $X'$ is not $n$-$\Box$-similar.
It is known that there exists similar characteristic formulas
for $n$-bisimulation~\cite{DBLP:books/el/07/BBW2007}.

\begin{theorem} \label{theorem_universal_logical_obstruction}
  For all $n \in \N$, $X \in \facet{M}$ and $X' \in \facet{M'}$,
  \[
    X \mathrel{\S{n}} X' \iff M', X' \not\models \Phi^{\Box}_M(n, X)
  \]
\end{theorem}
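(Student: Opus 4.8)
The plan is to argue by induction on $n$, unfolding simultaneously the recursive definition of $\S{n}$ from Definition~\ref{def_n-simulation} and the modal semantics of $\PhiB(n,X)$, and then matching the two sides via the induction hypothesis. The guiding intuition is that $\PhiB(n,X)$ is a \emph{failure certificate}: it holds at $X'$ precisely when the attempt to relate $X$ to $X'$ by an $n$-$\Box$-simulation breaks down. Thus the biconditional is really a de Morgan duality between the existential/universal quantifiers appearing in $f^{\Box}$ and those hidden in the modal operators.

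For the base case $n=0$ I would unfold both sides to the same atomic condition. By Definition~\ref{def_n-simulation}, $X \mathrel{\S{0}} X'$ holds iff $l(X)=l'(X')$. On the logical side, $M',X' \models \PhiB(0,X)$ holds iff some $p\in l(X)$ fails at $X'$ or some $p\in\At\setminus l(X)$ holds at $X'$, i.e.\ iff $l(X)\neq l'(X')$. Negating the latter gives exactly $l(X)=l'(X')$, so $M',X'\not\models\PhiB(0,X) \iff X\mathrel{\S{0}}X'$.

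For the inductive step in the distributed case I would use $\SD{n+1}=f^{\mathrm{D}}(\SD{n})$: by definition $X\mathrel{\SD{n+1}}X'$ means that for every $Y\in\facet{M}$ there is some $Y'\in\facet{M'}$ with $Y\mathrel{\SD{n}}Y'$ and $\chi(X\cap Y)\subseteq\chi'(X'\cap Y')$. On the logical side, $M',X'\models\PhiD(n+1,X)$ holds iff for some $Y$ we have $M',X'\models\mathrm{D}_{\chi(X\cap Y)}\PhiD(n,Y)$, which by the semantics of $\mathrm{D}_A$ means that every $Y'$ with $\chi(X\cap Y)\subseteq\chi'(X'\cap Y')$ satisfies $M',Y'\models\PhiD(n,Y)$. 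Applying the induction hypothesis in the form $M',Y'\models\PhiD(n,Y)\iff Y\not\mathrel{\SD{n}}Y'$, this says that no such $Y'$ is $\SD{n}$-related to $Y$. Hence $M',X'\models\PhiD(n+1,X)$ iff there exists a $Y$ admitting no suitable $Y'$; taking the contrapositive yields precisely $X\mathrel{\SD{n+1}}X'$. The knowledge case is handled the same way, carrying the extra $\SK{0}$-conjunct of $\SK{n+1}=\SK{0}\cap f^{\mathrm{K}}(\SK{n})$ against the $\PhiK(0,X)$-disjunct of $\PhiK(n+1,X)$, and restricting the quantification over $Y$ to the $a$-neighbours $Y\sim_a X$ as in the definitions of $f^{\mathrm{K}}$ and $\PhiK$.

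The step I expect to be the main obstacle is bookkeeping the nested quantifier alternation under negation without error: the outer disjunction over $Y$ in $\PhiB(n+1,X)$ must turn into the universal ``for all $Y$'' of $f^{\Box}$, while the universal quantifier built into $\mathrm{D}_A$ (and $\mathrm{K}_a$) must turn into the existential ``there exists $Y'$'' of the simulation clause, with the induction hypothesis flipping membership in $\S{n}$ to satisfaction of $\PhiB(n,\cdot)$ exactly once. Getting all three polarity switches to line up --- and, in the knowledge case, confirming that the $\PhiK(0,X)$ disjunct is exactly what enforces the {\atom} condition on both sides --- is the only delicate point; everything else is a direct transcription of the definitions.
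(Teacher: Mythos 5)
Your proposal is correct and follows essentially the same route as the paper's own proof: induction on $n$, unfolding $\S{0}$ against $\PhiB(0,X)$ in the base case, and in the inductive step matching the definition of $f^{\Box}$ against the semantics of the disjunction of $\mathrm{K}_a$/$\mathrm{D}_A$ formulas via the induction hypothesis and de Morgan duality, with the $\SK{0}$-conjunct paired against the $\PhiK(0,X)$-disjunct in the knowledge case. The quantifier bookkeeping you flag as the delicate point is exactly the chain of equivalences the paper writes out explicitly, and it goes through as you describe.
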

\begin{proof}
  We proceed by induction on $n$.

  For the base case $n = 0$,
  we have $X \mathrel{\S{0}} X' \iff l(X) = l'(X')$ and
  $M', X' \models \PhiB(0, X) \iff l(X) \neq l'(X')$
  by definition of $\PhiB$.
  Thus, $X \mathrel{\S{0}} X' \iff M', X' \not\models \PhiB(0, X)$.

  For the induction case, suppose the theorem holds for $n$, then
  \begin{align*}
      X \mathrel{\SK{n+1}} X'
    \iff
      & \text{$X \mathrel{\SK{0}} X'$ and} \\
      & \text{for all $a \in \Pi$ and $Y \in \facet{M}$ with $X \sim_a Y$,
        there exists $Y' \in \facet{M'}$} \\
      & ~~\text{such that $Y \mathrel{\SK{n}} Y'$ and $X' \sim_a Y'$.} \\
    \overset{\text{I.H.}}{\iff}
      & \text{$M', X' \not\models \PhiK(0, X)$ and} \\
      & \text{for all $a \in \Pi$ and $Y \in \facet{M}$ with $X \sim_a Y$,
        there exists $Y' \in \facet{M'}$} \\
      & ~~\text{such that $M', Y' \not\models
        \PhiK(n, Y)$ and $X' \sim_a Y'$.} \\
    \iff
      & \text{$M', X' \not\models \PhiK(0, X)$ and} \\
      & \text{for all $a \in \Pi$ and $Y \in \facet{M}$ with $X \sim_a Y$,
        $M', X' \not\models \mathrm{K}_a \PhiK(n, Y)$.} \\
    \iff
      & \text{$M', X' \not\models \PhiK(0, X)$ and
        $M', X' \not\models
          \bigvee_{a \in \Pi}
          \bigvee_{\substack{Y \in \facet{M} \\ X \sim_a Y}}
          \mathrm{K}_a \PhiK(n, Y)$.}\\
    \iff
      & M', X' \not\models \PhiK(0, X) \lor
        \bigvee_{a \in \Pi} \bigvee_{\substack{Y \in \facet{M} \\ X \sim_a Y}}
        \mathrm{K}_a \PhiK(n, Y)
  \end{align*}
  \begin{align*}
      X \mathrel{\SD{n+1}} X'
    \iff
      & \text{for all $Y \in \facet{M}$, there exists $Y' \in \facet{M'}$} \\
      & ~~\text{such that $Y \mathrel{\SD{n}} Y'$ and
        $\chi(X \cap Y) \subseteq \chi'(X' \cap Y')$.} \\
    \overset{\text{I.H.}}{\iff}
      & \text{for all $Y \in \facet{M}$, there exists $Y' \in \facet{M'}$} \\
      & ~~\text{such that $M', Y' \not\models \PhiD(n, Y)$ and
        $\chi(X \cap Y) \subseteq \chi'(X' \cap Y')$.} \\
    \iff
      & \text{for all $Y \in \facet{M}$,
        $M', X' \not\models \mathrm{D}_{\chi(X \cap Y)} \PhiD(n, Y)$.} \\
    \iff
      & M', X' \not\models
        \bigvee_{Y \in \facet{M}} \mathrm{D}_{\chi(X \cap Y)} \PhiD(n, Y).
  \end{align*}
  Hence, we get $X \mathrel{\S{n+1}} X' \iff M', X' \not\models \PhiB(n+1, X)$.
\end{proof}

\begin{corollary} \label{cor_self_not_satisfy_ulo}
  Let $X \in \facet{M}$ be a facet.
  Then, $M , X \not\models \PhiB(n, X)$ for any $n$.

\end{corollary}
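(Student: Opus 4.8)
The plan is to deduce the corollary from Theorem~\ref{theorem_universal_logical_obstruction} by instantiating it at the diagonal. That theorem is stated for an arbitrary pair of simplicial models $M, M'$ with facets $X \in \facet{M}$ and $X' \in \facet{M'}$; since $\PhiB(n, X)$ depends only on $M$ and $X$, nothing prevents me from taking $M' = M$ and $X' = X$. With this instantiation the equivalence reads $X \mathrel{\S{n}} X \iff M, X \not\models \PhiB(n, X)$, where now $\S{n} \subseteq \facet{M} \times \facet{M}$ denotes the maximum $n$-$\Box$-simulation of $M$ by itself. Thus it suffices to verify that $X \mathrel{\S{n}} X$ holds for every facet $X$ and every $n$.

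To establish $X \mathrel{\S{n}} X$, I will exhibit a single $\Box$-simulation containing all such pairs, namely the diagonal $\Delta = \{(X, X) \mid X \in \facet{M}\} \subseteq \facet{M} \times \facet{M}$. Checking the simulation conditions is routine: {\atom} holds because $l(X) = l(X)$; for {\kforth} a challenge $X \sim_a Y$ is answered by $Y' = Y$, which trivially satisfies $Y \mathrel{\Delta} Y$ and $X \sim_a Y$; and for {\dforth} an arbitrary $Y$ is again answered by $Y' = Y$, giving $Y \mathrel{\Delta} Y$ and $\chi(X \cap Y) \subseteq \chi(X \cap Y)$. Hence $\Delta$ is a $\Box$-simulation uniformly for $\Box \in \{\mathrm{K}, \mathrm{D}\}$, and therefore an $n$-$\Box$-simulation for every $n$ by the observation recorded just after the definition of $n$-simulation.

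Finally, Proposition~\ref{prop_Sn_is_maximum} tells me that $\S{n}$ is the maximum $n$-$\Box$-simulation with respect to inclusion, so $\Delta \subseteq \S{n}$; in particular $X \mathrel{\S{n}} X$ for all $X$ and all $n$. Plugging this into the instantiated form of Theorem~\ref{theorem_universal_logical_obstruction} yields $M, X \not\models \PhiB(n, X)$, as required.

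I do not anticipate a genuine obstacle here: the argument is essentially a one-line specialization plus a trivial check that the diagonal self-simulates. The only point demanding a little care is the legitimacy of setting $M' = M$ in a theorem phrased for two models, so I should note explicitly that both the statement and its inductive proof go through verbatim when $M' = M$. As a fallback, the same conclusion can be obtained by a direct induction on $n$, using at each modal step the fact that the witnessing world $Y$ can serve as its own counterexample — so that $M, X \models \mathrm{K}_a \PhiK(n, Y)$ or $M, X \models \mathrm{D}_{\chi(X \cap Y)} \PhiD(n, Y)$ would force the contradictory $M, Y \models \PhiB(n, Y)$ — but the diagonal-simulation route is cleaner.
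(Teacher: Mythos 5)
Your proposal is correct and follows essentially the same route as the paper: instantiate Theorem~\ref{theorem_universal_logical_obstruction} at $M' = M$, $X' = X$ and reduce to showing $X \mathrel{\S{n}} X$ for all $n$. The only (immaterial) difference is that the paper discharges that last step by a direct induction on $n$, whereas you obtain it by observing that the diagonal is a $\Box$-simulation, hence an $n$-$\Box$-simulation for every $n$, and then invoking the maximality of $\S{n}$ from Proposition~\ref{prop_Sn_is_maximum}.
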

\begin{proof}
  Suppose $M' = M$.
  Then, by Theorem~\ref{theorem_universal_logical_obstruction},
  it suffices to show that $X \mathrel{\S{n}} X$
  holds for any $n \in \N$.
  This is easily proved by induction on $n$.
\end{proof}

\begin{corollary}
  If there is no total $\Box$-simulation,
  there exists a positive formula $\varphi \in \pfb$
  such that $M \not\models \varphi$ and $M' \models \varphi$.

\end{corollary}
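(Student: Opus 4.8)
The plan is to combine the fixpoint characterization of the maximum $\Box$-simulation from Theorem~\ref{th_maximam_simulation} with the universal obstruction formulas $\PhiB(n, X)$ and their characterization in Theorem~\ref{theorem_universal_logical_obstruction}. The witnessing formula $\varphi$ will simply be $\PhiB(m, X)$ for a suitable depth $m$ and a suitable facet $X$ of $M$ at which totality fails.

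First I would argue that the hypothesis ``there is no total $\Box$-simulation'' forces the maximum $\Box$-simulation to be non-total. Indeed, the maximum $\Box$-simulation is itself a $\Box$-simulation, so were it total there would exist a total $\Box$-simulation, contrary to assumption. By Theorem~\ref{th_maximam_simulation} the maximum $\Box$-simulation equals $\S{n}$ for all sufficiently large $n$, and that theorem states the maximum is total whenever every $\S{n}$ is total. Reading this contrapositively, the non-totality of the maximum produces some $m \in \N$ for which $\S{m}$ is not total; that is, there is a facet $X \in \facet{M}$ such that no $X' \in \facet{M'}$ satisfies $X \mathrel{\S{m}} X'$.

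Using this distinguished pair $(m, X)$, I would set $\varphi := \PhiB(m, X)$, which is a positive formula of $\pfb$ because the finiteness of $M$ makes every disjunction in its inductive definition range over only finitely many facets. By Theorem~\ref{theorem_universal_logical_obstruction} applied to $M'$, for each $X' \in \facet{M'}$ we have $X \mathrel{\S{m}} X' \iff M', X' \not\models \PhiB(m, X)$. Since $\S{m}$ is not total at $X$, no such $X'$ is related to $X$, hence $M', X' \models \PhiB(m, X)$ at every facet $X'$ of $M'$, which is exactly $M' \models \varphi$. On the other hand, Corollary~\ref{cor_self_not_satisfy_ulo} gives $M, X \not\models \PhiB(m, X)$, so $M \not\models \varphi$, and the two conditions together furnish the required separating formula.

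I expect the only genuinely delicate step to be the passage from ``no total $\Box$-simulation'' to ``$\S{m}$ is not total for some finite $m$''; once that finite stage is located, the rest is a direct appeal to the two preceding results. In particular I would emphasize that the standing finiteness assumption on $M$ and $M'$ in this subsection is precisely what makes Theorem~\ref{th_maximam_simulation} applicable and what guarantees that $\PhiB(m, X)$ is a bona fide finite formula of $\pfb$ rather than an infinitary expression.
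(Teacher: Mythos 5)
Your proposal is correct and follows essentially the same route as the paper: invoke Theorem~\ref{th_maximam_simulation} to locate a finite stage $m$ at which $\S{m}$ fails to be total at some facet $X$, then conclude via Theorem~\ref{theorem_universal_logical_obstruction} that $M' \models \PhiB(m,X)$ and via Corollary~\ref{cor_self_not_satisfy_ulo} that $M \not\models \PhiB(m,X)$. The only cosmetic difference is that the paper takes the stabilization index $n$ directly and observes that $\S{n}$, being a non-total simulation, fails at some $X$, whereas you reach the same finite stage through the contrapositive of the theorem's totality clause.
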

\begin{proof}
  By Theorem~\ref{th_maximam_simulation},
  there exists $n \in \N$ such that $\S{n}$ is a $\Box$-simulation.
  Since it is assumed that any simulation is not total,
  there exists $X \in \facet{M}$ such that
  for all $X' \in \facet{M'}$,
  $X \mathrel{\S{n}} X'$ does not hold.
  This implies $M' \models \PhiB(n, X)$
  by Theorem~\ref{theorem_universal_logical_obstruction}.
  On the other hand, $M \not\models \PhiB(n, X)$
  by Corollary~\ref{cor_self_not_satisfy_ulo}.
\end{proof}

As a result, we obtain a procedure that determines
whether a logical obstruction exists and
gives a concrete obstruction formula of $\pfb$, if any.
\begin{enumerate}
  \item Calculate $\SB{0}, \SB{1}, \ldots$ until it does not change,
  \item Check whether $\SB{n}$ is total or not, at the stopped $n$,
  \item If it is total, there is no logical obstruction, and
  \item If it is not total at some $X$,
    $\PhiB(n, X)$ is a logical obstruction.
\end{enumerate}

\subsection{Constructing a logical obstruction for \textsc{know-all} model}

We apply the simulation technique developed so far
to show an impossibility result for \textsc{know-all} model~\cite{DBLP:journals/tcs/CastanedaFPRRT21}.
An instance of \textsc{know-all} model specifies a distributed system
of processes communicating via a network,
where the network connection varies
at each round of communication in a way given a priori.
Formally, it is specified by a sequence of directed graphs,
each of which represents the network connection of a particular round.
Each graph $G$ consists of the set $V(G)$ of graph nodes and
the set $E(G)$ of edges,
where $V(G)$ is the set of communicating processes $\Pi$ and
each edge $(p,q) \in E(G)$ means that
the process $p$ can transmit information to the process $q$.
We assume every process can transmit its value to itself at each round.
Hence every node $p$ in a graph is assumed to have an self-loop edge $(p,p)$.

In this section, we consider \textsc{know-all} model as a protocol.
An instance of \textsc{know-all} model is
a sequence of directed graphs whose set of nodes is $\Pi$.
An edge in a graph indicates that the nodes connected by the edge
are allowed to communicate.
Every node has the self-loop,
assuming every process can always communicate with itself.

\begin{definition}
  Let $p \in \Pi$ be an agent, $P \subseteq \Pi$ be a set of agents
  and $G, H$ be graphs.
  We define some operations for graphs as follows.
  \begin{itemize}
    \item Set of in-neighbors:
      $In(p, G) = \{q \in V(G) \mid (p, q) \in E(G) \}$,
      $In(P, G) = \bigcup_{p \in P} In(p, G)$
    \item Set of out-neighbors:
      $Out(p, G) = \{q \in V(G) \mid (q, p) \in E(G) \}$,
      $Out(P, G) = \bigcup_{p \in P} Out(p, G)$
    \item Composition of graphs:
      $H \circ G = (\Pi, E)$
      where
      $E = \{(p, q) \mid Out(p, G) \cap In(q, H) \neq \emptyset\}$
    \item Domination number of a graph:
    \[
      \gamma(G) = \min \{i \in \N \mid \text{there exists $P \subseteq \Pi$
        such that $|P| = i$ and $Out(P, G) = \Pi$}\}
    \]

  \end{itemize}

  Notice that $1 \leq \gamma(G) \leq |\Pi|$, since
  graphs are assumed to have all self-loops.
  We write $G_{\leq r}$ for
  $G_r \circ (G_{r-1} \circ (\dots (G_2 \circ G_1) \dots ))$.
\end{definition}

\begin{definition}
  Let $\mathcal{G} = (G_i)_{i \in \N}$ be an instance of
  \textsc{know-all} model and
  $r \geq 1$ be a natural number.
  We define an action model $\mathcal{G}_{\leq r} = (V, S, \chi, \pre)$,
  for \emph{a protocol associated with $\mathcal{G}$ after $r$ rounds}
  as follows:
  \begin{itemize}
    \item $V = \Pi \times V^{in} \times \Views^1$,
    \item $X \in \facet{\mathcal{G}_{\leq r}}$ if there exists
      a map $input \colon \Pi \to V^{in}$ such that
      $X = \{(a, input(a), v_a^r) \mid a \in \Pi\}$
      where $v_a^r \in \Views^1$ is defined as
      $v_a^r = \{(p, input(p)) \mid p \in In(a, G_{\leq r})\}$,
    \item $\chi(a, i, v) = a$ and
    \item $\pre(\{(a, i_a, v_a) \mid a \in \Pi\})
      = \bigwedge_{a \in \Pi} \ip_a^{i_a}$.
  \end{itemize}

  We write $\{(a, i_a, G_{\leq r}) \mid a \in \Pi\}$ for a facet
  $\{((a, i_a),(a, i_a, v_a)) \mid a \in \Pi\}$ in $\IGr$,
  where $v_a = \{(p, i_p) \mid p \in In(a, G_{\leq r})\}$.

\end{definition}

The following impossibility result has been shown by
a topological argument.

\begin{theorem}[\cite{DBLP:journals/tcs/CastanedaFPRRT21}]
  Let $\mathcal{G} = (G_i)_{i \in \N}$ be an instance of
  \textsc{know-all} model and $r \geq 0$.
  If $\gamma(G_{\leq r}) > k$,
  $\mathcal{SA}_k$ is not solvable by $\mathcal{G}_{\leq r}$.

\end{theorem}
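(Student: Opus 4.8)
The plan is to produce a logical obstruction in $\pfd$, or equivalently (by the procedure developed in this section) to show that the maximal D-simulation of $\IGr$ by $\ISA{k}$ is \emph{not} total, and then to read off a concrete formula $\PhiD(n,X)$ from Theorem~\ref{theorem_universal_logical_obstruction}. Non-existence of a total D-simulation already gives unsolvability, since a solution $f\colon \IGr \to \ISA{k}$ would make $graph(f)$ a total D-simulation. I would localize everything at the facet $X^{\ast}\in\facet{\IGr}$ determined by the bijective input $input^{\ast}=\mathrm{id}$ on $V^{in}=\Pi$: there every agent holds a distinct value, and the view of agent $a$ records exactly the values of $In(a,G_{\leq r})$ (recall $a\in In(a,G_{\leq r})$ by the self-loop). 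Assume toward a contradiction that a total D-simulation $S$ exists; by totality pick $X'$ with $X^{\ast}\mathrel{S}X'$.

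The core step is the following claim: $decide_{X'}(a)\in In(a,G_{\leq r})$ for every $a\in\Pi$, where values are identified with agents via $input^{\ast}$. I would prove this by a perturbation argument that uses only singleton groups, so {\dforth} specializes to ordinary $a$-indistinguishability. Suppose $w:=decide_{X'}(a)\notin In(a,G_{\leq r})$. Since $w\notin In(a,G_{\leq r})$, the holder of $w$ lies outside $In(a,G_{\leq r})$, so I can choose an input map $input'$ agreeing with $input^{\ast}$ on $In(a,G_{\leq r})$ but collapsing that holder onto another value, so that $w\notin input'(\Pi)$ (legitimate because $V^{in}=\Pi$). Writing $Y=X_{input'}$, agreement on $In(a,G_{\leq r})$ gives $a\in\chi(X^{\ast}\cap Y)$, so {\dforth} applied to $X^{\ast}\mathrel{S}X'$ and this $Y$ yields $Y''$ with $Y\mathrel{S}Y''$ and $a\in\chi(X^{\ast}\cap Y)\subseteq\chi'(X'\cap Y'')$. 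The latter forces $decide_{Y''}(a)=decide_{X'}(a)=w$, while {\atom} together with the precondition of $\ISA{k}$ at $input'$ forces $decide_{Y''}(a)\in input'(\Pi)$, contradicting $w\notin input'(\Pi)$.

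With the claim established, set $D=decide_{X'}(\Pi)$, so $|D|\le k$ because $X'$ is an $\ISA{k}$ facet. For each $a$ the claim gives $p:=decide_{X'}(a)\in In(a,G_{\leq r})$, i.e.\ $(a,p)\in E(G_{\leq r})$, i.e.\ $a\in Out(p,G_{\leq r})$; hence $\Pi=Out(D,G_{\leq r})$ and $D$ dominates $G_{\leq r}$ with $|D|\le k$, contradicting $\gamma(G_{\leq r})>k$. Therefore no task facet is $S$-related to $X^{\ast}$, so no total D-simulation exists. Since both models are finite, Theorem~\ref{th_maximam_simulation} supplies an $n$ with $\S{n}$ the maximal D-simulation, $\S{n}$ is not total at $X^{\ast}$, and $\PhiD(n,X^{\ast})$ is a concrete logical obstruction, which establishes the claimed unsolvability.

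The main obstacle I anticipate is the claim itself: transporting the familiar principle that ``a process may decide only a value it has seen'' from a deterministic morphism to the relational, group-indexed {\dforth} condition, and getting the input perturbations exactly right, namely respecting the self-loops so that $a\in In(a,G_{\leq r})$ and ensuring $input'$ is a genuine configuration that still removes $w$ while agreeing on $In(a,G_{\leq r})$. Once the claim is phrased via singleton groups, the remainder is a purely combinatorial domination count, so the distributed-knowledge modality is needed only to package the witnesses into the explicit $\PhiD$ formula rather than for the impossibility argument.
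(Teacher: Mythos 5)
Your argument is correct, and it follows the same overall strategy as the paper's: both localize at the identity-input facet of $\IGr$, perturb the inputs outside some agent's in-neighborhood to obtain an indistinguishable facet whose input image avoids certain values, and then play the forth condition against {\atom} and the precondition $decide(\Pi)\subseteq input(\Pi)$ of $\ISA{k}$. The decomposition differs, though. The paper has no analogue of your key claim: it first uses $\gamma(G_{\leq r})>k$ to pick a \emph{single} undominated agent $a_0\notin Out(decide_{X'}(\Pi),G_{\leq r})$, builds one perturbed facet $Y$ with $input_Y(\Pi)=In(a_0,G_{\leq r})$ disjoint from the entire decision set, and derives the contradiction in one application of the forth condition for the $1$-simulation $\S{1}$. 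You instead prove the per-agent lemma ``$decide_{X'}(a)\in In(a,G_{\leq r})$ for every $a$'' via $|\Pi|$ separate perturbations and only then invoke the domination count. Your route is more modular and makes explicit the classical principle that a process can only decide a value it has seen, which the paper leaves implicit; the paper's single targeted perturbation is more economical and is phrased directly as non-totality of $\S{1}$ at $X$, so both arguments ultimately yield a modal-depth-one obstruction. Two minor remarks: your use of {\dforth} with singleton groups is exactly the paper's {\kforth}, so nothing specific to distributed knowledge is needed (as you note); and since your claim already shows that no $1$-$\Box$-simulation relates $X^{\ast}$ to anything, you could read off $\Phi^{\Box}_M(1,X^{\ast})$ directly rather than waiting for the chain $\S{n}$ to stabilize.
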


In the following, we give an alternative proof to this impossibility result
by using the logical method presented in the previous section.
In contrast to~\cite{DBLP:journals/tcs/CastanedaFPRRT21},
this logical method allows a more elementary proof,
not resorting to sophisticated topological notions or tools.

\begin{proposition}
  Let $\S{1} \subseteq \facet{\IGr} \times \facet{\ISA{k}}$
  be the binary relation defined in Definition~\ref{def_n-simulation},
  and $X = \{(a, a, G_{\leq r}) \mid a \in \Pi\} \in \IGr$
  be a facet.
  If $\gamma(G_{\leq r}) > k$,
  then $\S{1}$ is not total at $X$,
  that is, there is no $X' \in \facet{\ISA{k}}$
  satisfying $X \mathrel{\SB{1}} X'$.
\end{proposition}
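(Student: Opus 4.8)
The plan is to argue by contradiction: assume some facet $X' \in \facet{\ISA{k}}$ satisfies $X \mathrel{\S{1}} X'$, extract from the definition of $\S{1}$ a \emph{validity} constraint on the decision values recorded in $X'$, and then show that this constraint, together with the bound $|decide_{X'}(\Pi)| \leq k$, produces a dominating set of size at most $k$ in $G_{\leq r}$, contradicting $\gamma(G_{\leq r}) > k$.

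First I would set up coordinates. A facet of $\IGr$ is determined by its input map $input \colon \Pi \to V^{in}$, since the view of each agent $a$ is the fixed set $\{(p, input(p)) \mid p \in In(a, G_{\leq r})\}$; the distinguished facet $X$ corresponds to the identity map (as $V^{in} = \Pi$ and $input(a) = a$). A short computation then shows, for any facet $Y$ with input map $input_Y$, that $a \in \chi(X \cap Y)$ holds if and only if $input_Y(p) = p$ for all $p \in In(a, G_{\leq r})$, because the $a$-coloured vertices of $X$ and $Y$ agree exactly when their inputs and views coincide on $In(a, G_{\leq r})$ (which contains $a$ itself by the self-loop assumption). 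Writing $X' = \{(a, i'_a, d_a) \mid a \in \Pi\}$, the fact that $\S{1} \subseteq \SB{0}$ (every level-$1$ $\Box$-simulation satisfies \atom; see Remark~\ref{rem_n-D-simulation_satisfy_atom} for $\Box = \mathrm{D}$, and by definition for $\Box = \mathrm{K}$) forces $i'_a = a$, so that $X' = \{(a, a, d_a) \mid a \in \Pi\}$ with $|\{d_a \mid a \in \Pi\}| \leq k$.

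The heart of the argument is the validity claim that $d_a \in In(a, G_{\leq r})$ for every $a$. To prove it, I would fix $a$ and define a witness facet $Y_a$ with input map $input_{Y_a}(p) = p$ for $p \in In(a, G_{\leq r})$ and $input_{Y_a}(p) = a$ otherwise; by construction $input_{Y_a}(\Pi) \subseteq In(a, G_{\leq r})$ and $a \in \chi(X \cap Y_a)$. Applying the forth clause of $\S{1}$ to $Y_a$ (the \dforth\ clause uses $Y_a$ directly; the \kforth\ clause uses it with the agent $a$, for which $X \sim_a Y_a$), one obtains a facet $Y' \in \facet{\ISA{k}}$ with $l'(Y') = l(Y_a)$ and $a \in \chi'(X' \cap Y')$, and the latter forces $decide_{Y'}(a) = d_a$. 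Since $Y'$ is a genuine facet of $\ISA{k}$, its input facet satisfies the precondition clause $\bigvee_{b \in \Pi} \ip_b^{d_a}$, which means exactly that $d_a \in input_{Y_a}(\Pi) \subseteq In(a, G_{\leq r})$, establishing the claim.

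Finally I would close the loop through the domination number. From $d_a \in In(a, G_{\leq r})$ we get $(a, d_a) \in E(G_{\leq r})$, hence $a \in Out(d_a, G_{\leq r})$. Setting $D = \{d_a \mid a \in \Pi\}$, this gives $\Pi = \bigcup_{a \in \Pi} \{a\} \subseteq Out(D, G_{\leq r})$, so $Out(D, G_{\leq r}) = \Pi$ with $|D| \leq k$; therefore $\gamma(G_{\leq r}) \leq k$, contradicting $\gamma(G_{\leq r}) > k$. The step I expect to require the most care is the validity claim: choosing the witness $Y_a$ so that its set of input values is confined to $In(a, G_{\leq r})$ while still keeping $a \in \chi(X \cap Y_a)$, and then reading off the $\ISA{k}$ precondition correctly to conclude $d_a \in In(a, G_{\leq r})$. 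Once validity is in hand, the domination contradiction is immediate, and (by Theorem~\ref{theorem_universal_logical_obstruction} and Corollary~\ref{cor_self_not_satisfy_ulo}) $\PhiB(1, X)$ is the resulting concrete logical obstruction.
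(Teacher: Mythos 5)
Your proof is correct. It takes the same contradiction setup and the same witness-facet construction as the paper (inputs equal to the identity on $In(a, G_{\leq r})$ and constant elsewhere), but organizes the argument differently. The paper uses the hypothesis $\gamma(G_{\leq r}) > k$ \emph{first}: since $|decide_{X'}(\Pi)| \leq k$, the set $Out(decide_{X'}(\Pi), G_{\leq r})$ cannot be all of $\Pi$, so one can pick a single agent $a_0$ with $In(a_0, G_{\leq r}) \cap decide_{X'}(\Pi) = \emptyset$; one application of the forth condition at $a_0$ then yields $Y'$ with $decide_{Y'}(\Pi) \subseteq input_Y(\Pi) \subseteq In(a_0, G_{\leq r})$ disjoint from $decide_{X'}(\Pi)$, contradicting $decide_{Y'}(a_0) = decide_{X'}(a_0)$. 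You instead apply the forth condition once per agent to extract the validity constraint $d_a \in In(a, G_{\leq r})$ for every $a$, and only then invoke the domination hypothesis, reading $\{d_a \mid a \in \Pi\}$ as a dominating set of size at most $k$. Your route is slightly longer ($|\Pi|$ invocations of forth rather than one) but isolates a reusable and semantically transparent lemma --- a $1$-simulated decision must lie in the deciding agent's view --- and makes the appearance of the domination number self-explanatory rather than an a priori choice of $a_0$. All the supporting details check out: $a \in \chi(X \cap Y_a)$ holds because $G_{\leq r}$ inherits all self-loops from the $G_i$ under composition, the {\atom} property of $\SB{1}$ (via the remark for $\mathrm{D}$, by construction for $\mathrm{K}$) pins down the inputs of $X'$ and $Y'$, and the precondition of $\mathcal{SA}_k$ gives $d_a \in input_{Y_a}(\Pi)$. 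Your closing observation that $\PhiB(1, X)$ already serves as the obstruction is also consistent with Theorem~\ref{theorem_universal_logical_obstruction}, since $\SB{n} \subseteq \SB{1}$ for $n \geq 1$.
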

\begin{proof}
  Suppose $\S{1}$ is total at $X$, by contradiction.
  By applying the totality to $X$, there exists a facet
  $X' \in \ISA{k}$ such that $X \mathrel{\S{1}} X'$.
  For all $a \in \Pi$, let us write $input_X(a)$ for an input value and
  $view_X(a)$ for a view in $\Views^1$
  such that $(a, input_X(a), view_X(a))$ is an $a$-colored vertex in $X$.
  Similarly, let us write $(a, input_{X'}(a), decide_{X'}(a))$
  for an $a$-colored vertex in $X'$.

  We have $Out(decide_{X'}(\Pi), G_{\leq r}) \subsetneq \Pi$
  because $|decide_{X'}(\Pi)| \leq k < \gamma(G_{\leq r})$.
  Fix an agent $a_0 \in \Pi \setminus Out(decide_{X'}(\Pi), G_{\leq r})$
  and define a facet $Y \in \facet{\IGr}$ as
  $Y = \{(a, input_Y(a), G_{\leq r}) \mid a \in \Pi\}$, where
  $input_Y(a) = a$ if $a \in In(a_0, G_{\leq r})$
  and $input_Y(a) = a_0$ otherwise.
  Then, we have $input_Y(\Pi) \cap decide_{X'}(\Pi) = \emptyset$
  because $a_0 \notin Out(decide_{X'}(\Pi), G_{\leq r})$
  implies $In(a_0, G_{\leq r}) \cap decide_{X'}(\Pi) = \emptyset$.
  We also have $X \sim_{a_0} Y$,
  because $input_Y(a) = a = input_X(a)$ for all $a \in In(a_0, G_{\leq r})$.

  By the definition of $\S{1}$, there exists
  $Y' \in \ISA{k}$ such that $Y \mathrel{\S{0}} Y'$ and $X' \sim_{a_0} Y'$.
  Let us also write $(a, input_{Y'}(a), decide_{Y'}(a))$ for
  an $a$-colored vertex in $Y'$.
  Since $Y \mathrel{\S{0}} Y'$ implies
  $input_{Y'}(a) = input_Y(a)$ for all $a \in \Pi$
  and $Y' \in \ISA{k}$ implies $decide_{Y'}(\Pi) \subseteq input_{Y'}(\Pi)$,
  we have $decide_{Y'}(\Pi) \subseteq input_Y(\Pi)$.
  Hence $decide_{Y'}(\Pi) \cap decide_{X'}(\Pi) = \emptyset$,
  but $X' \sim_{a_0} Y'$ implies $decide_{X'}(a_0) = decide_{Y'}(a_0)$.
  This is a contradiction.
\end{proof}

\begin{theorem}
  Let $\mathcal{G} = (G_i)_{i \in \N}$ be an infinite sequence of
  directed graphs, $r \geq 0$ and $k > \gamma(G_{\leq r})$.
  There exists a logical obstruction to the solvability of
  $\mathcal{SA}_k$ by $\mathcal{G}_{\leq r}$.
  Especially, $\mathcal{SA}_k$ is not solvable by $\mathcal{G}_{\leq r}$.

\end{theorem}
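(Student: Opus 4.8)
The plan is to obtain the statement as a direct application of the construction machinery of Section~\ref{sec_lo_construction}, feeding it the single fact supplied by the preceding Proposition. Throughout I fix $M = \IGr$ and $M' = \ISA{k}$, work with the distributed-knowledge modality (so $\Box = \mathrm{D}$ and the obstruction lands in $\pfd$, as the statement asks), and set $X = \{(a, a, G_{\leq r}) \mid a \in \Pi\} \in \facet{\IGr}$. Under the hypothesis of the statement, the preceding Proposition already delivers the only substantive ingredient, namely that $\SD{1}$ is not total at $X$: no facet $X' \in \facet{\ISA{k}}$ satisfies $X \mathrel{\SD{1}} X'$. The remaining work is purely to convert this failure of totality into a concrete formula and then invoke knowledge gain.

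First I would instantiate Theorem~\ref{theorem_universal_logical_obstruction} at $n = 1$ with $M = \IGr$ and $M' = \ISA{k}$. For an arbitrary facet $X' \in \facet{\ISA{k}}$, the Proposition guarantees that $X \mathrel{\SD{1}} X'$ fails, so the equivalence of that theorem yields $\ISA{k}, X' \models \PhiD(1, X)$. Letting $X'$ range over all facets of the task model, this gives $\ISA{k} \models \PhiD(1, X)$; that is, the universal formula $\PhiD(1, X)$ — a single layer of distributed-knowledge operators over literals, built over $M = \IGr$ — is valid in the task model.

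Next I would verify that the protocol model refutes the same formula. Applying Corollary~\ref{cor_self_not_satisfy_ulo} to the facet $X$ of $\IGr$ gives $\IGr, X \not\models \PhiD(1, X)$, whence $\IGr \not\models \PhiD(1, X)$. Combining the two observations, $\varphi := \PhiD(1, X)$ is a positive formula of $\pfd$ with $\ISA{k} \models \varphi$ but $\IGr \not\models \varphi$, which is exactly a logical obstruction to the solvability of $\mathcal{SA}_k$ by $\mathcal{G}_{\leq r}$. The final assertion (``especially, $\mathcal{SA}_k$ is not solvable'') then follows at once from the distributed-knowledge form of the knowledge-gain corollary, which says that such a $\varphi$ precludes any morphism $\IGr \to \ISA{k}$.

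The genuine difficulty of the theorem lives entirely in the preceding Proposition, whose domination-number argument produces the witnessing indistinguishable facet; granted that, the present argument is a mechanical unwinding, and the step I would be most careful about is merely the bookkeeping of the two models and the modality. One must keep $M = \IGr$ (the protocol) and $M' = \ISA{k}$ (the task) in exactly this order, since $\PhiD(1, X)$ is assembled over $M$ yet evaluated in $M'$, and one must check that the hypothesis of the statement is indeed the one under which the Proposition forces non-totality of $\SD{1}$ at $X$. Because the obstruction has modal depth $1$, no iteration of the descending chain $\SD{0} \supseteq \SD{1} \supseteq \cdots$ and no stabilization (Theorem~\ref{th_maximam_simulation}) is required: the failure already occurs at level $1$, so $\PhiD(1, X)$ is the concrete obstruction the statement demands.
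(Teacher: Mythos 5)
Your proof is correct, and it rests on the same three pillars as the paper's --- the preceding Proposition supplying non-totality at $X$, Theorem~\ref{theorem_universal_logical_obstruction} converting that into validity of the characteristic formula in $\ISA{k}$, and Corollary~\ref{cor_self_not_satisfy_ulo} giving its failure in $\IGr$ --- but you instantiate the machinery more sharply. The paper first invokes Theorem~\ref{th_maximam_simulation} to obtain a stabilization index $n$ at which $\S{n}$ is a genuine $\Box$-simulation, infers non-totality at $X$ from the descending chain $\S{n} \subseteq \S{1}$, and outputs $\PhiB(n, X)$ for that unspecified $n$. You correctly observe that stabilization is superfluous here: since Theorem~\ref{theorem_universal_logical_obstruction} holds for every fixed $n$, the non-totality of $\SD{1}$ at $X$ already yields the modal-depth-one obstruction $\PhiD(1, X)$. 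This buys an explicit obstruction of minimal depth and eliminates the one step whose proof genuinely needs the finite descending-chain argument (finiteness of $\IGr$ and $\ISA{k}$, guaranteed by $V^{in} = \Pi$, is still required so that $\PhiD(1, X)$ is a well-formed finite disjunction). Two small points of bookkeeping you flagged are indeed harmless: the Proposition is stated for the generic $\S{1}$ and its proof uses the K-style clause $X' \sim_{a_0} Y'$, but it covers $\Box = \mathrm{D}$ as you use it, because $\chi(X \cap Y) \subseteq \chi'(X' \cap Y')$ together with $a_0 \in \chi(X \cap Y)$ gives $X' \sim_{a_0} Y'$; and the hypothesis ``$k > \gamma(G_{\leq r})$'' in the theorem statement is evidently a typo for $\gamma(G_{\leq r}) > k$ --- both your argument and the paper's proof use the latter, which is what the Proposition requires.
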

\begin{proof}
  By Theorem~\ref{th_maximam_simulation},
  there exists $n \in \N$ such that $\S{n}$ is a $\Box$-simulation.
  By the previous proposition, this simulation is not total
  at $X = \{(a, a, G_{\leq r}) \mid a \in \Pi\} \in \IGr$.
  Thus, by Theorem~\ref{theorem_universal_logical_obstruction}
  we have $M' \models \PhiB(n, X)$.
  On the other hand, $M \not\models \PhiB(n, X)$
  by Corollary~\ref{cor_self_not_satisfy_ulo}.
\end{proof}


\section{Conclusion}  \label{sec_conclusion}

We proposed a general method,
based on a simulation technique,
for showing the non-existence of logical obstruction
to distributed task.
Using this method, we showed the non-existence of logical obstruction
for two certain situations,
whose unsolvability has been topologically established:
$k$-set agreement task by (iterated) immediate snapshot protocol
in $\mathcal{L}_K$ and
$3$ process $2$-set agreement task
by multi-round iterated immediate snapshot protocol in $\mathcal{L}_D$.
Furthermore, for a finite protocol and a finite task,
we provided a procedure that determines whether
a logical obstruction exists or not,
and constructs a logical obstruction, if any.
We demonstrated that this procedure provides a logical obstruction
to the solvability of $k$-set agreement task 
for the \textsc{know-all} model.

Although this paper studied epistemic languages with
$\ip_a^v$ for atomic propositions and
$\mathrm{K}_a$ or $\mathrm{D}_A$ for modal operators,
there remain attractive languages such as
the one equipped with a modal operator $\mathrm{C}_A$
for modality of common knowledge.
The simulation technique we developed in this paper
would merit further investigation into
such other epistemic languages as well,
and it is expected that investigating epistemic languages
will lead to better understanding of the logical method.


  \subsection*{Acknowledgment}
  I am very grateful to my supervisor
  Susumu Nishimura
  for his best support.
  I would also like to thank
  Rihito Takase and Kiichi Hiruma
  for fruitful discussions.

  \bibliography{ref}

\begin{thebibliography}{10}

\bibitem{DBLP:books/el/07/BBW2007}
Patrick Blackburn, J.~F. A.~K. van Benthem, and Frank Wolter, editors.
\newblock {\em Handbook of Modal Logic}, volume~3 of {\em Studies in logic and
  practical reasoning}.
\newblock North-Holland, 2007.

\bibitem{DBLP:conf/podc/BorowskyG92}
Elizabeth Borowsky and Eli Gafni.
\newblock Immediate atomic snapshots and fast renaming (extended abstract).
\newblock In Jim Anderson and Sam Toueg, editors, {\em Proceedings of the
  Twelth Annual {ACM} Symposium on Principles of Distributed Computing, Ithaca,
  New York, USA, August 15-18, 1993}, pages 41--51. {ACM}, 1993.

\bibitem{DBLP:journals/tcs/CastanedaFPRRT21}
Armando Casta{\~{n}}eda, Pierre Fraigniaud, Ami Paz, Sergio Rajsbaum, Matthieu
  Roy, and Corentin Travers.
\newblock A topological perspective on distributed network algorithms.
\newblock {\em Theor. Comput. Sci.}, 849:121--137, 2021.

\bibitem{DBLP:conf/tap/GoubaultLLR19}
{\'{E}}ric Goubault, Marijana Lazic, J{\'{e}}r{\'{e}}my Ledent, and Sergio
  Rajsbaum.
\newblock A dynamic epistemic logic analysis of the equality negation task.
\newblock In Lu{\'{\i}}s~Soares Barbosa and Alexandru Baltag, editors, {\em
  Dynamic Logic. New Trends and Applications - Second International Workshop,
  DaL{\'{\i}} 2019, Porto, Portugal, October 7-11, 2019, Proceedings}, volume
  12005 of {\em Lecture Notes in Computer Science}, pages 53--70. Springer,
  2019.

\bibitem{DBLP:journals/iandc/GoubaultLR21}
{\'{E}}ric Goubault, J{\'{e}}r{\'{e}}my Ledent, and Sergio Rajsbaum.
\newblock A simplicial complex model for dynamic epistemic logic to study
  distributed task computability.
\newblock {\em Inf. Comput.}, 278:104597, 2021.

\bibitem{DBLP:books/mk/Herlihy2013}
Maurice Herlihy, Dmitry~N. Kozlov, and Sergio Rajsbaum.
\newblock {\em Distributed Computing Through Combinatorial Topology}.
\newblock Morgan Kaufmann, 2013.

\bibitem{DBLP:conf/podc/HerlihyR94}
Maurice Herlihy and Sergio Rajsbaum.
\newblock Set consensus using arbitrary objects (preliminary version).
\newblock In James~H. Anderson, David Peleg, and Elizabeth Borowsky, editors,
  {\em Proceedings of the Thirteenth Annual {ACM} Symposium on Principles of
  Distributed Computing, Los Angeles, California, USA, August 14-17, 1994},
  pages 324--333. {ACM}, 1994.

\bibitem{DBLP:conf/podc/HerlihyR10}
Maurice Herlihy and Sergio Rajsbaum.
\newblock The topology of shared-memory adversaries.
\newblock In Andr{\'{e}}a~W. Richa and Rachid Guerraoui, editors, {\em
  Proceedings of the 29th Annual {ACM} Symposium on Principles of Distributed
  Computing, {PODC} 2010, Zurich, Switzerland, July 25-28, 2010}, pages
  105--113. {ACM}, 2010.

\bibitem{DBLP:journals/jacm/HerlihyS99}
Maurice Herlihy and Nir Shavit.
\newblock The topological structure of asynchronous computability.
\newblock {\em J. {ACM}}, 46(6):858--923, 1999.

\bibitem{Hoshino:Msc22}
Sou Hoshino.
\newblock Determining existence of logical obstructions to the distributed task
  solvability.
\newblock Master's thesis, Department of Mathematics, {Kyoto} University, 2022.

\bibitem{Nishida:Msc20}
Yutaro Nishida.
\newblock Impossibility of $k$-set agreement via dynamic epistemic logic (in
  {Japanese}).
\newblock Master's thesis, Department of Mathematics, {Kyoto} University, 2020.

\bibitem{DBLP:journals/jlap/DitmarschGLLR21}
Hans van Ditmarsch, {\'{E}}ric Goubault, Marijana Lazic, J{\'{e}}r{\'{e}}my
  Ledent, and Sergio Rajsbaum.
\newblock A dynamic epistemic logic analysis of equality negation and other
  epistemic covering tasks.
\newblock {\em J. Log. Algebraic Methods Program.}, 121:100662, 2021.

\bibitem{DBLP:journals/corr/abs-2011-13630}
Koki Yagi and Susumu Nishimura.
\newblock Logical obstruction to set agreement tasks for superset-closed
  adversaries.
\newblock {\em CoRR}, abs/2011.13630, 2020.

\end{thebibliography}
  \bibliographystyle{plain}

\end{document}